\newcommand*{\fancyrefapplabelprefix}{app}
\DeclareMathOperator{\csch}{csch}
\newcommand{\partialx}[2]{\frac{\partial #1}{\partial #2}}
\newcommand{\evalAt}[2]{\left. #1 \right\rvert_{#2}}
\newcommand{\rmag}{r} 
\newcommand{\vvec}[1]{\bm{#1}}
\newcommand{\rvec}{\vvec{\rmag}}
\newcommand{\rrmsvec}{\vvec{\rmag}_{rms}}
\newcommand{\dir}[1]{\hat{#1}}
\newcommand{\rdir}{\dir{\rvec}}
\newcommand{\tens}[1]{\bm{#1}}
\newcommand{\Wchains}{W_{\text{ch}}}
\newcommand{\PEQ}{U_{\text{net}}}
\newcommand{\RVEdomain}{\Omega_0}
\newcommand{\rvedomain}{\Omega}
\newcommand{\cLen}{\ell}
\newcommand{\mLen}{b} 
\newcommand{\n}{n} 
\newcommand{\nSet}{\mathcal{N}}
\newcommand{\nPolydisperseSet}{\n_1, \dots, \n_{\numChains}}
\newcommand{\nA}{n_{\alpha}} 
\newcommand{\nB}{n_{\beta}} 
\newcommand{\nmean}{\overline{\n}} 
\newcommand{\ngeomean}{\nmean_{\text{geo}}} 
\newcommand{\eff}{\eta} 
\newcommand{\chainStretch}{\gamma} 
\newcommand{\critChainLength}{\rmag_{crit}} 
\newcommand{\rmsChainLength}{\rmag_{rms}} 
\newcommand{\crossDensity}{M} 
\newcommand{\partitionFunction}{\mathcal{Z}} 
\newcommand{\partitionFunctionFR}{\partitionFunction^{FR}} 
\newcommand{\partitionFunctionFA}{\partitionFunction^{FA}} 
\newcommand{\trans}[1]{#1^{T}} 
\newcommand{\collection}[1]{\left\{#1\right\}} 
\newcommand{\ncollection}[1]{\left(#1\right)} 
\newcommand{\Conv}{\text{Conv}}
\newcommand{\kB}{k_B} 
\newcommand{\T}{T} 
\newcommand{\chainConformationProbDens}{p}
\newcommand{\chainDistanceProbDens}{P}
\newcommand{\chainFreeEnergy}{w}
\newcommand{\freeEnergyDensity}{\mathcal{W}}
\newcommand{\clinkFreeEnergy}{W_{\inst}}
\newcommand{\clinkFreeRotationFreeEnergy}{\clinkFreeEnergy^{FR}}
\newcommand{\clinkFrameAveragingFreeEnergy}{\clinkFreeEnergy^{FA}}
\newcommand{\rveFreeEnergy}{W}
\newcommand{\rveFullNetworkFreeEnergy}{\rveFreeEnergy^{FN}}
\newcommand{\clinkFreeEnergyPolydisperse}[1]{W_{\inst,\ncollection{#1}}}
\newcommand{\clinkFreeEnergyApprox}[1]{\clinkFreeEnergyPolydisperse{#1}}
\newcommand{\clinkFreeRotationFreeEnergyApprox}[1]{\clinkFreeEnergyApprox{{#1}}^{FR}}
\newcommand{\clinkFrameAveragingFreeEnergyApprox}[1]{\clinkFreeEnergyApprox{{#1}}^{FA}}
\newcommand{\clinkInnerFreeEnergy}{\widehat{W}_{\inst}}
\newcommand{\clinkInnerFreeRotationFreeEnergy}{\clinkInnerFreeEnergy^{FR}}
\newcommand{\clinkInnerFrameAveragingFreeEnergy}{\clinkInnerFreeEnergy^{FA}}
\newcommand{\allClinksFreeEnergyDensity}{\freeEnergyDensity_{\chainSpace}}
\newcommand{\GaussChainConformationProbDens}{\chainConformationProbDens_{\text{G}}}
\newcommand{\KGChainConformationProbDens}{\chainConformationProbDens_{\text{KG}}}
\newcommand{\KGFreeEnergy}{\chainFreeEnergy_{\text{KG}}}
\newcommand{\GaussFreeEnergy}{\chainFreeEnergy_{\text{G}}}
\newcommand{\forcemag}{f}
\newcommand{\forcevec}{\vvec{\forcemag}}
\newcommand{\pFrame}{\vvec{P}}
\newcommand{\pDir}{\hat{\vvec{v}}}
\newcommand{\uVec}{\dir{\vvec{\upsilon}}}
\newcommand{\nuVec}{\dir{\vvec{\nu}}}
\newcommand{\probDens}{\rho}
\newcommand{\probDensF}[1]{\probDens_{#1}}
\newcommand{\probDensInst}{\probDensF{\inst}}
\newcommand{\probDensChain}{\probDens_{\n}}
\newcommand{\probDensClinker}{\probDens_{\numChains}}
\newcommand{\unitSphere}{\mathbb{S}^2}
\newcommand{\df}[1]{\text{d}#1} 
\newcommand{\intOverSphere}[1]{\int_{\unitSphere} \df{A} \: #1 } 
\newcommand{\volUnitSphere}{4 \pi}
\newcommand{\Lang}{\mathcal{L}}
\newcommand{\invLang}{\Lang^{-1}}
\newcommand{\invLangPrime}{\left(\Lang^{-1}\right)^{\prime}}
\newcommand{\numChains}{\nodeDegree}
\newcommand{\numChainsSet}{\mathcal{K}}
\newcommand{\defMap}{\bm{\Phi}}
\newcommand{\pullback}[1]{\tilde{#1}}
\newcommand{\Rmag}{\pullback{\rmag}_{rms}}
\newcommand{\Rdir}{\uVec}
\newcommand{\generic}{\Box}
\newcommand{\inst}{c}
\newcommand{\F}{\tens{F}}
\newcommand{\cGreenSym}{C}
\newcommand{\cGreen}{\tens{\cGreenSym}}
\newcommand{\strTens}{\tens{V}}
\newcommand{\polRot}{\tens{R}}
\newcommand{\genRot}{\tens{Q}}
\newcommand{\genRotRef}{\genRot_0}
\newcommand{\genRotStar}{\genRot^*}
\newcommand{\genRotMono}{\genRot_m}
\newcommand{\delRotMag}{\delta \rodMag}
\newcommand{\delRot}{\delta \rodVec}
\newcommand{\changeCoord}[1]{\bar{#1}}
\newcommand{\rodMag}{\omega}
\newcommand{\rodVec}{\vvec{\rodMag}}
\newcommand{\eulerx}{\alpha}
\newcommand{\eulery}{\beta}
\newcommand{\eulerz}{\xi}
\newcommand{\chainStretchStar}{\chainStretch^*}
\newcommand{\strDiag}{\bm{\Lambda}}
\newcommand{\SOThree}{SO\left(3\right)}
\newcommand{\volSOThree}{8 \pi^2}
\newcommand{\numSOThreeQuad}{N_{\SOThree}}
\newcommand{\weightFactorSOThreeQuad}{\upsilon}
\newcommand{\quadPointSOThreeQuad}{\genRotRef}
\newcommand{\numSphQuad}{N_{sph}}
\newcommand{\weightFactorSphQuad}{\nu}
\newcommand{\polarAngle}{\theta}
\newcommand{\azimuthalAngle}{\phi}
\newcommand{\spinAngle}{\psi}
\newcommand{\numSpinQuad}{N_{\spinAngle}}
\newcommand{\Gradd}{\bm{\nabla}}
\newcommand{\takeGrad}[1]{\Gradd #1}
\newcommand{\Xj}{X}
\newcommand{\xj}{x}
\newcommand{\Xvec}{\vvec{\Xj}} 
\newcommand{\xvec}{\vvec{\xj}} 
\newcommand{\ycmag}{y} 
\newcommand{\yc}{\vvec{\ycmag}} 
\newcommand{\ycQO}{\yc_{\genRotRef}}
\newcommand{\ycStar}{\yc^*} 
\newcommand{\ycStarQO}{\ycStar_{\genRotRef}} 
\newcommand{\ycStarFrameAveraging}{\yc^{*FA}} 
\newcommand{\ycFrameAveraging}{\yc^{FA}} 
\newcommand{\ycMono}{\yc_m} 
\newcommand{\delYcMag}{\delta \ycmag} 
\newcommand{\delYc}{\delta \yc} 
\newcommand{\delYcQO}{\delYc_{\genRotRef}} 
\DeclareMathOperator{\diag}{diag} 
\DeclareMathOperator{\trace}{Tr} 
\newcommand{\pStretchSymbol}{\lambda} 
\newcommand{\pStretch}[1]{\pStretchSymbol_{#1}} 
\newcommand{\identity}{\tens{I}}
\newcommand{\euclid}[1]{\hat{\vvec{e}}_{#1}}
\newcommand{\orderOf}[1]{\mathcal{O}\left(#1\right)}
\DeclareMathOperator{\Tr}{Tr}
\newcommand{\Reals}{\mathbb{R}}
\newcommand{\chainSpace}{\mathcal{C}}
\newcommand{\Ball}[2]{B_{#1}\left(#2\right)}
\newcommand{\dirac}[2]{\delta\left(#1-#2\right)}
\newcommand{\nullvec}{\vvec{0}}
\newcommand{\smallparam}{\epsilon}
\newcommand{\auxVar}[1]{a_{#1}}
\newcommand{\auxSign}[1]{s_{#1}}
\newcommand{\E}{E}
\newcommand{\PKSym}{\Sigma}
\newcommand{\PK}{\bm{\PKSym}}
\newcommand{\genFunc}{F}
\newcommand{\genGroup}{\mathcal{G}}
\newcommand{\nodeDegree}{k}
\newcommand{\kNetTors}{\kappa}
\newcommand{\Eqref}[1]{Eq. \eqref{#1}}
\newcommand{\Figref}[1]{Fig. \ref{#1}}
\newcommand*{\gnuplotinput}[2][1.0]{%
  \begingroup
  \let\@gnplt@input@includegraphics=\includegraphics
  \def\includegraphics##1{\@gnplt@input@includegraphics[scale=#1]{#2}}%
  \let\@gnplt@input@picture=\picture
  \def\picture{\unitlength=#1\unitlength\relax\@gnplt@input@picture}%
  \input{#2}%
  \endgroup
}
\newtheorem{proposition}{Proposition}
\newtheorem*{lemma}{Lemma}
\theoremstyle{remark}
\newtheorem{remark}{Remark}
\newcommand*{\fancyrefproplabelprefix}{prop}
\newcommand*{\fancyrefdeflabelprefix}{def}
\newcommand*{\fancyreflemlabelprefix}{lem}
\newcommand*{\fancyrefasslabelprefix}{ass}
\newcommand*{\fancyrefconvlabelprefix}{conv}
\newcommand{\capTitle}[1]{\emph{#1}}
\renewcommand{\hl}[1]{#1}
\newenvironment{hlbreakable}%
{}%
{}
\begin{document}

\title{Polydisperse polymer networks with irregular topologies: Mechanics of cross-link distributions}

\author{Jason Mulderrig}
\email{mulderrig.jason@gmail.com}
\affiliation{Materials \& Manufacturing Directorate, Air Force Research Laboratory, Wright-Patterson AFB, Ohio 45433, USA}
\affiliation{National Research Council (NRC) Research Associateship Programs, The National Academies of Sciences, Engineering, and Medicine}

\author{Michael Buche}
\affiliation{Materials and Failure Modeling, Sandia National Laboratories, Albuquerque, New Mexico 87185, USA}

\author{Matthew Grasinger}
\email{matthew.grasinger.1@us.af.mil}
\affiliation{Materials \& Manufacturing Directorate, Air Force Research Laboratory, Wright-Patterson AFB, Ohio 45433, USA}

\preprint{To appear in Journal of the Mechanics and Physics of Solids: \url{https://doi.org/10.1016/j.jmps.2026.106706}}



\begin{abstract}
  The structure of polymer networks, defined by chain lengths and connectivity patterns, fundamentally influences their bulk properties.
  While existing polymer network models connect chain properties to emergent network behavior, they are often limited to monodisperse networks with regular connectivities.
  In this work, we introduce a novel modeling framework that shifts the focus from individual polymer chains to cross-links and their connected chains as the fundamental unit of analysis.
  The key features of this framework are the relaxation of the cross-link junction position to satisfy local force balance, and physically intuitive means for satisfying material frame indifference.
  \begin{hlbreakable}By relaxing to equilibrium, the cross-link structure is allowed to deform non-affinely and its chains adopt a more energetically efficient distribution of stretches and forces than that permitted from equal stretch or equal force homogenization theories.\end{hlbreakable}
  We explore two distinct limiting behaviors for the orientation of the frame of a cross-link: \begin{inparaenum}[(1)] \item the free rotation limit, which assumes the cross-link rotates to minimize free energy, and \item the frame averaging limit, which incorporates structural heterogeneity by averaging over all possible cross-link orientations. \end{inparaenum}
  It is found that an increase in variance in monomer numbers generally leads to network softening, while in bimodal networks, the onset of strain stiffening is controlled by shorter chains and the stiffening response is modulated by the ratio of short to long chains.
  By deriving closed-form approximations for both limits valid in the regimes of small deformation or small polydispersity, we offer an efficient computational approach to modeling the mechanics of complex, polydisperse networks.
  An aim of this framework is to take a step toward the rational modeling and design of heterogeneous polymer networks with structures tailored for specific properties.
\end{abstract}

\maketitle

\pagebreak
\tableofcontents

\section{Introduction} \label{sec:intro}

Constitutive models for solid polymer networks fall into two main categories: \begin{inparaenum}[(1)] \item phenomenological models based on strain invariants and principal stretches, and \item micromechanical models that connect macromolecular and network structural properties to the continuum scale. \end{inparaenum}
In the first category, there is the Neo-Hookean, Mooney-Rivlin~\cite{mooney1940theory,rivlin1948large}, and Ogden models~\cite{ogden1972large}.
In this work, we focus on micromechanical models.
Statistical mechanics help to elucidate the entropic underpinnings of the elasticity of many polymers~\cite{treloar1975physics}.
The continuum response of a broader polymer network can be constructed by using the single-chain elasticity in a polymer network model.
Network models allow one to relate macroscopic variables, such as deformation, to individual chains within the network.

Many polymer network models exist in the literature.
Some have a discrete number of chains arranged in a representative volume element (RVE) while others model the network via a probability density of polymer chains.
There are also competing assumptions for how macroscopic deformation is related to the deformations of chains within the network.
Traditional discrete network models include the $3$-chain~\cite{james1943theory}, the $4$-chain~\cite{treloar1943elasticity,treloar1943Belasticity,treloar1946elasticity,treloar1954photoelastic,flory1943statistical} (i.e., Flory-Rehner model), and the $8$-chain~\cite{arruda1993threee} models.
Continuous network models began with the full network model of Wu and van der Giessen~\cite{wu1992improved,wu1993improved} (inspired by~\cite{treloar1979non}) which assumes that all chains in the network are of the same length (in the reference configuration) and that the directions of chains are uniformly distributed over the unit sphere.
In response to the (at times) underwhelming fit of the full network model to benchmark data on rubber elasticity, it was improved upon by the microsphere model of Miehe et al.~\cite{miehe2004micro}.
A key feature of the microsphere model was the development of a new assumption for how macroscopic deformation is related to chains within the network. 
The assumption involved a ``stretch fluctuation field'' in the unit sphere of chain directions.
The stretch fluctuation field is determined by minimizing the average free energy of the network subject to homogenization-based constraints.
For certain fitted parameters, the microsphere model recovers the $8$-chain model exactly.
	
Since these earlier developments, a number of variants have been proposed that aim to retain much of the simplicity and physical-basis of earlier models while also providing a better fit to the `S'-shaped stress-strain curves of the well-known Treloar data for rubber elasticity~\cite{erman1980moments,bechir2010three,xing2026multi,miroshnychenko2009heuristic,xiang2018general,kroon20118,zhan2023new,davidson2013nonaffine,khiem2016analytical,itskov2024review,mirzapour2023micro,yang2025hyperelastic,li2026new,amores2019average,amores2021network,amores2023model,tan2024new,xing2025pseudo,elias2002constitutive,elias2006non,ouardi20253d}.
Many of these variants decompose the free energy density of the network additively into a contribution which represents the ``cross-linked network'' -- often modeled using the $8$-chain or a full network model -- and a topological constraint contribution due to chain entanglements within the network.
The contribution of topological constraints has taken different forms: some based on chain statistics and micromechanics~\cite{erman1980moments,xiang2018general,kroon20118,davidson2013nonaffine,khiem2016analytical,itskov2024review,mirzapour2023micro,yang2025hyperelastic,li2026new} while others appear more phenomenological~\cite{bechir2010three,xing2026multi,miroshnychenko2009heuristic}. 
For full network type models, a new macro-to-micro kinematic assumption has recently been developed, based on an affine stretch projection and a microscale analog of the Biot stress, which performs well for capturing complex and multiaxial stress-strain relationships~\cite{zhan2023new}.
Many constitutive models for rubber-like elasticity exist in the literature that attempt to balance between competing objectives: \begin{inparaenum}[(1)] \item contain as few fitting parameters as possible, \item have the ability to reproduce complex deformation behavior, and \item in the interest of informing material design, have model parameters which are physically interpretable and can be reasonably connected to the underlying microstructure of the material. \end{inparaenum}

Polymer network models have been applied to a vast number of different polymer networks related to adhesives~\cite{zhao2022network}, biomechanics~\cite{grekas2021cells,song2022hyperelastic,alastrue2009anisotropic}, dynamically bonded polymer networks~\cite{lamont2021rate}, magneto-active polymers~\cite{moreno2022effects} and electro-active polymers~\cite{grasingerIPtorque,grasingerIPflexoelectricity,grasinger2019multiscale,cohen2016electroelasticity,friedberg2023electroelasticity}.
Network models have also been used for capturing phenomena such as viscoelasticity~\cite{bergstrom1998constitutive,zhao2022network}, excluded volume induced strain-hardening and incompressibility~\cite{khandagale2023statistical}, and chain scission and fracture~\cite{mulderrig2021affine,mao2017rupture,buche2021chain}.
As a result, continuous network models have been generalized to anisotropic chain distributions~\cite{alastrue2009anisotropic,grasinger2020architected} and polydisperse networks (i.e., networks consisting of chains of different lengths)~\cite{mulderrig2021affine}.
However, many open questions persist regarding how to relate macroscopic deformation to chain-scale (or micro-) deformation in polydisperse polymer networks~\cite{mulderrig2021affine}.
Some standard assumptions such as the affine deformation assumption~\cite{treloar1975physics}, which assumes that chain end-to-end vectors get mapped under the deformation gradient, or equal stretch theories such as the $8$-chain model~\cite{arruda1993threee}, appear to overestimate the stiffness of polydisperse networks for small deformations, but under predict the onset of strain stiffening.
Alternatively, equal force theories assume that chains of different contour lengths that are aligned in the same direction sustain equal forces~\cite{von2002mesoscale,verron2017equal,li2020variational,mulderrig2021affine}.
Other interesting perspectives have emerged which consider more network-wide ensembles of monomers (as opposed to strictly chain-level ensembles) and take an Eulerian-inspired approach (as opposed to Lagrangian) at the continuum level~\cite{zhan2025statistical,wang2025statistical}.
Certain aspects of these various approaches seem more suitable for polydisperse polymer networks, but all lack explicit consideration of how chains of different lengths are ultimately connected together at cross-links and how this gives rise to different distributions of stresses within the network.
Connections cannot be made between network topology and continuum behavior.

There are alternative approaches that are also micromechanical in nature.
Explicit discrete network models\footnote{These are sometimes referred to as ``discrete network models''. We add the word ``explicit'' to make clear that this class of models are distinct from the polymer network models that are the main focus of this work.} offer a compelling alternative for connecting network topology and polydispersity to emergent bulk properties~\cite{araujo2025force,araujo2024micromechanical,araujo2026role,wagner2021network,wagner2022mesoscale,wagner2025foundational,lei2021mesoscopic,lei2022network,kothari2018mechanical,ghareeb2020adaptive,cardona2025topogen,huang2025topological,hartquist2025fracture,deng2023nonlocal,hartquist2025scaling}. These models represent the material as a lattice-like structure, where individual structural members exhibit hyperelastic behavior.
To probe the stress or failure response, the boundaries of a simulation box are deformed, and the system's energy is minimized. 
Recently, this type of approach has been extended to include the concept of ``tension blobs'', and utilizes statistical graph theory to characterize the trade-off between the coordination of the network in distributing loads, on the one hand, and entropy on the other~\cite{you2025network}.
Explicit discrete network models are particularly powerful for elucidating structure-property relationships, as they can capture complex, non-local interactions and detail how loads are transferred between disparate parts of the network, such as between regions with differing cross-link densities and functionalities. 
However, this non-local formulation results in a large system of coupled, nonlinear equations, making analytical solutions generally intractable and requiring (potentially computationally expensive) numerical methods. 
Another alternative evokes the concept of maximal advance paths, which makes a connection between a network's macroscopic affine path and its microscopic averaged deformation~\cite{tkachuk2012maximal,rastak2018non,tkachuk2022elastic}.
Again, this method can characterize non-local interactions across the network, but at the cost of requiring numerics.
In this work, we pivot to a more local description of polydisperse network mechanics. This allows us to make analytical progress and, where numerical solutions are still necessary, has the potential to result in a reduction in computational expense and complexity.

The primary focus of this work is on discrete polymer network models.
In contrast to standard polymer network models, which treat polymer chains as fundamental units of the network, here we shift perspective to consider cross-links as more fundamental.
In this context, a discrete network model is a direct analog of a cross-link.
The standard macro-to-micro kinematic assumption for discrete polymer network models -- motivated by coordinate system invariance -- is that the RVE rotates such that its edges lie along the principal directions of stretch.
However, recent theoretical work and experimental observations underscore the importance of treating the RVE orientation relative to a given deformation more carefully~\cite{kuhl2005remodeling,grasingerIPtorque,cohen2018generalized,grasinger2023polymer}.
In this work, we instead conceptualize the orientation of the RVE as thermally fluctuating. In other words, we consider the RVE as embedded in an elastic background which gives it a preferred orientation and a torsional stiffness.
We explore and compare the new model in two limits that are both interesting and analytically tractable: \begin{inparaenum}[(1)]
    \item in the limit of negligible torsional stiffness, the kinematic behavior of the new model reduces to the recently developed free rotation assumption~\cite{grasinger2023polymer} -- which assumes that the cross-link rotates into the frame that minimizes its free energy after deformation, and
    \item in the limit of large torsional stiffness, the model is equivalent to averaging over an isotropic distribution of frames (i.e., cross-link orientations).
\end{inparaenum}
\begin{hlbreakable}In each limit, the cross-link locally demonstrates non-affine deformation and its chains exhibit a more efficient load-sharing distribution than that from restrictive equal stretch or equal force theories.\end{hlbreakable}
Through this modeling framework, polydispersity in polymer chain length and inhomogeneous cross-link connectivity can be related to the behavior of the broader polymer network.

\paragraph*{Structure.} The structure of the work is as follows:
\begin{itemize}
\item In \Fref{sec:prelude}, as a prelude, some broadly used polymer chain models and concepts in continuum mechanics are reviewed.
\item \Fref{sec:new-model-1} develops a new framework for discrete polydisperse polymer network models by beginning with the statistical mechanics of a single cross-link embedded in an elastic background, and then following its implications up to the continuum scale through applications of statistical methods and approximations across scales.
\item \Fref{sec:FR-methods} and \Fref{sec:FA-methods} concern properties of, and approximation strategies for, the ``free rotation'' and ``frame averaging'' limits of the broader framework, respectively.
\item \Fref{sec:gaussian} explores the implications of polydispersity and topological differences on the elasticity of a bimodal network of Gaussian chains.
\item \Fref{sec:s-curves} considers how network structural properties give rise to different features in the characteristic `S'-shape stress-strain curves of polymer network elasticity out to large deformation regimes.
\item \Fref{sec:conclusion} concludes the work.
\item \begin{hlbreakable} As an aid to the reader, we provide a nomenclature table in \Fref{app:nomenclature}.
\end{hlbreakable}
\end{itemize}

\section{Prelude} \label{sec:prelude}
\paragraph*{Polymer chain statistical mechanics.} 
A statistical mechanics-based approach is often used as the foundation for the micromechanical modeling of polymer chain elasticity.
Consider a freely-jointed chain (FJC) of $\n$ rigid monomers\footnote{We here use the term ``monomers'', not in the chemical sense, but in the mechanical sense where the term is often used interchangeably with ``Kuhn segments'' or ``links''.} bonded end-to-end, each with a length $\mLen$.
This chain has a contour length, $\cLen = \n \mLen$.
This chain also has an associated end-to-end vector, $\rvec$ (i.e., the vector from the beginning of the chain to its end), and its magnitude, $\rmag = \left|\rvec\right|$, is called the end-to-end chain length.
The ratio of $\rmag$ to $\cLen$ defines the (absolute) chain stretch, $\chainStretch = \rmag / \cLen$ \hl{(see \Figref{fig:polydiserse-polymer-network-microstructure} for an illustration)}.
A statistical mechanics description begins by defining a probability density of polymer chain conformations, $\chainConformationProbDens(\rmag)$.\footnote{\hl{The freely-jointed and Gaussian chain models assume ideal chain statistics which is justified by the screening of excluded-volume interactions in the dense, solvent-free state for dry networks, and by theta-solvent conditions for solvated networks such as gels. Modeling good or poor solvent conditions would require modifying $\chainConformationProbDens\left(\rmag\right)$.}}
For freely-jointed Gaussian chains, 
\begin{equation} \label{eq:gauss-prob-density-conformations}
    \GaussChainConformationProbDens\left(\rmag\right) \propto \exp{\left(-\frac{3 \rmag^2}{2 \n \mLen^2}\right)}.
\end{equation}
For freely-jointed chains statistically described by the Kuhn and Gr\"{u}n approach~\cite{kuhn1942beziehungen},
\begin{equation} \label{eq:FJC}
	\KGChainConformationProbDens\left(\rmag\right) \propto \exp{\left(-\n \left(\frac{\rmag}{\cLen} \invLang\left(\frac{\rmag}{\cLen}\right) + \ln \left(\invLang\left(\frac{\rmag}{\cLen}\right) \csch \invLang\left(\frac{\rmag}{\cLen}\right) \right) \right)\right)},
\end{equation}
where $\invLang$ is the inverse of the Langevin function, $\Lang\left(x\right) = \coth x - 1/x$.\footnote{The details behind the numerical implementation of the inverse Langevin function are provided in \Fref{app:numerical-implementation-inv-langevin-func}.}
Notably, the Kuhn and Gr\"{u}n approach captures the finite extensibility of the chain; that is, $\KGChainConformationProbDens\left(\rmag\right)\rightarrow 0$ as $\rmag \rightarrow \cLen$.
Additional formulations for $\chainConformationProbDens(\rmag)$ have also been derived for the extensible freely-jointed chain~\cite{mulderrig2023statistical} and the primitive chain from the tube model~\cite{doi1988theory} (to name a few).

To determine the most probable end-to-end chain length from $\chainConformationProbDens(\rmag)$, we first need to consider the probability of finding the end of a polymer chain within a spherical shell (centered about the chain beginning at the origin) of radius $\rmag$ and thickness $d\rmag$~\cite{treloar1975physics}.
We denote this probability as $\chainDistanceProbDens(\rmag) d\rmag$,
\begin{equation} \label{eq:chain-dist-prob-dens}
    \chainDistanceProbDens(\rmag) d\rmag = 4 \pi \rmag^2 \chainConformationProbDens(\rmag) d\rmag.
\end{equation}
The most probable end-to-end chain length has been conventionally associated with the root-mean-square chain conformation; the corresponding $\rmsChainLength$ is given by
\begin{equation} \label{eq:rms-chain-length}
    \rmsChainLength = \sqrt{\frac{\int_{0}^{\critChainLength}\rmag^2 \chainDistanceProbDens(\rmag) d\rmag}{\int_{0}^{\critChainLength} \chainDistanceProbDens(\rmag) d\rmag}}.
\end{equation}
Note that $\chainDistanceProbDens(\rmag)$ is defined over the domain $\rmag\in[0,\critChainLength)$:
for Gaussian chains, $\critChainLength=\infty$;
for Kuhn and Gr\"{u}n chains, $\critChainLength=\cLen$;
for extensible freely-jointed chains, $\critChainLength\gtrapprox \cLen$ (e.g.,~\cite{mulderrig2023statistical}).
For Gaussian chains, $\rmsChainLength = \sqrt{n} \mLen$ (found as the expectation of the distance from the origin for a random walk of $\n$ steps, each of length $\mLen$)~\cite{treloar1975physics}.
For other chain models, $\rmsChainLength$ can be obtained by evaluating \Eqref{eq:rms-chain-length} numerically.

\paragraph*{Chain energetics.}

With a statistical description of polymer chain conformations at hand, we employ the principal thermodynamic connection formula $\chainFreeEnergy\left(\rmag\right) = -\kB \T \ln{\chainConformationProbDens\left(\rmag\right)}$ to yield the chain free energy $\chainFreeEnergy$.
For Gaussian chains and Kuhn and Gr\"{u}n chains, the free energy is respectively given by,
\begin{align} 
    \GaussFreeEnergy\left(\rmag\right) & = \frac{3}{2}\kB \T\frac{\rmag^2}{\n \mLen^2}, \label{eq:gauss-free-energy} \\
    \KGFreeEnergy\left(\rmag\right) & = \n \kB \T \left(\frac{\rmag}{\cLen} \invLang\left(\frac{\rmag}{\cLen}\right) + \ln \left(\invLang\left(\frac{\rmag}{\cLen}\right) \csch \invLang\left(\frac{\rmag}{\cLen}\right) \right) \right), \label{eq:kuhn-grun-free-energy}
\end{align}
where $\kB$ is Boltzmann's constant and $\T$ is the absolute temperature.
In agreement with the finite extensibility of the Kuhn and Gr\"{u}n chain, $\KGFreeEnergy \rightarrow \infty$ as $\rmag \rightarrow \cLen$.

In addition to $\GaussFreeEnergy$ and $\KGFreeEnergy$, there are a myriad of other chain free energies in the literature that are applicable to different types of polymers under various loading conditions:
extensible freely-jointed chains~\cite{mulderrig2021affine,buche2021chain,buche2022freely,mulderrig2023statistical} have been an important development for understanding the large deformation and fracture of polymer networks;
tube and slip-link chain models~(see relevant overviews in~\cite{darabi2021generalized,kumar2023tube}) have played an important role in capturing the effect of topological constraints (e.g., entanglements) in rubber elasticity;
the worm-like chain~\cite{marko1995stretching,kuhl2005remodeling,marantan2018mechanics} has found many applications in biopolymers;
and electrically responsive chains~\cite{grasinger2020statistical,grasinger2022statistical,grasingerIPflexoelectricity,cohen2016electroelasticity,friedberg2023electroelasticity,khandagale2024statistical} have been used to model to soft robotics, wearable electronics, etc. (to name a few).\footnote{For chains interacting with external fields, the symmetry of the free energy response may be broken such that its functional dependence is on $\rvec$ as opposed to $\rmag$. However, in this work, we restrict our attention to chains with isotropic elasticity.}

\paragraph*{Chain mechanics.}
Given the free energy response of a single chain, its mechanical response (i.e., the force along the chain) can be obtained directly by differentiation:
\begin{equation} \label{eq:force}
	\forcevec = \frac{\partial \chainFreeEnergy}{\partial \rvec}.
\end{equation}
The choice of a chain mechanics model (given the choice of chain free energy) along with the specification of a polymer network model are the basis for many multiscale models of polymer networks.
Such models seek to make connections between molecular-/meso-scale features and corresponding continuum-scale behaviors.

\paragraph*{Continuum mechanics.}
To establish notation, we introduce some fundamental concepts in continuum solid mechanics.
Let $\Xvec$ be a material point of a solid body in the reference configuration and $\xvec = \defMap\left(\Xvec\right)$ its corresponding point in the deformed configuration.
Here $\defMap$ is the mapping that describes the deformation of the solid body.
The deformation gradient, $\F = \takeGrad{\defMap}$, maps line elements -- which are infinitesimal changes in position, $\delta \xvec$ -- from the reference to the current configuration (i.e., $\delta \xvec = \F \delta \Xvec$).
The right Cauchy-Green tensor, $\cGreen = \trans{\F} \F$, describes the stretches of line elements under $\F$.
Physically, we require that $\det \F > 0$.
Then by the polar decomposition theorem, $\F$ can be uniquely decomposed into $\F = \polRot \strTens$ where $\polRot \in \SOThree$, $\strTens = \sqrt{\cGreen}$ is positive definite, and $\SOThree$ is the group of three-dimensional rotations (i.e., proper orthogonal transformations).
Moving forward, we refer to $\strTens$ as the right stretch tensor and its eigenvalues and eigenvectors as the principal stretches, $\pStretch{i}$, and principal directions, $\pDir_i$, respectively.
Because $\strTens$ is symmetric, the principal directions can be chosen such that $\pDir_i \cdot \pDir_j = \delta_{ij}$ and this set of directions constitutes a principal frame.
The tensor
\begin{equation}
	\pFrame = \begin{pmatrix}
		| & | & | \\
		\pDir_1 & \pDir_2 & \pDir_3 \\
		| & | & |
	\end{pmatrix}
\end{equation}
is a rotation that diagonalizes the stretch tensor, $\strDiag = \pFrame \strTens \trans{\pFrame} = \diag \left(\pStretch{1}, \pStretch{2}, \pStretch{3}\right)$.

Analogous to the single chain case, the elasticity of the solid polymer network can be modeled via a free energy density $\freeEnergyDensity = \freeEnergyDensity\left(\F\right)$ where the nominal stress (i.e., first Piola–Kirchhoff stress) is given by $\PK = \partial \freeEnergyDensity / \partial \F$.

\section{Polydisperse polymer network models} \label{sec:new-model-1}

Consider a polydisperse polymer network, such as the \hl{examples} depicted in \Figref{fig:polydiserse-polymer-network-microstructure}.
\begin{hlbreakable}Polydispersity and topological irregularity arise in polymer networks from the polymerization and cross-linking procedures taken to synthesize these materials.
From this, several sources of irregularity emerge, as follows: \begin{inparaenum}[(1)] \item scattered chain orientations, \item polydispersity in monomer number, \item distribution of initial chain conformations (e.g., chain pre-stretches~\cite{araujo2024micromechanical,araujo2026role}), and \item variation in the number of chains connected together at cross-linker sites (i.e., variation in cross-linker site topology). \end{inparaenum}\end{hlbreakable}
To simplify the scope of this work, we will explicitly account for phenomena (1), (2), and (4) in our modeling formulation. 
Incorporating phenomenon (3) into our formulation remains an open problem for future research.

We begin our modeling approach by first characterizing the space of polydisperse polymer network cross-links. We do so by the set $\chainSpace = \collection{\inst_i}_{i=1}^{\left|\chainSpace\right|}$,\footnote{The set notation used here is shorthand for $\collection{x_i}_{i=1}^n \equiv \collection{x_i: 1 \leq i \leq n} = \collection{x_1, \dots, x_n}$. We use this shorthand set notation throughout this work, and also analogously employ this shorthand notation to tuples where convenient.}  
where each cross-link, $\inst$, is a representative volume element (RVE). 
We further characterize each individual cross-link with its own set, which consists of the chains that it connects.
For FJCs, this takes the form $\inst = \collection{\left(\n_i, \cLen_i, \Xvec_i, \chainFreeEnergy_i\right)}_{i=1}^{\numChains}$,
where $\n_i$ is the number of monomers of the $i$th chain, $\cLen_i$ is its contour length, $\Xvec_i$ is its end position (in the undeformed body), $\chainFreeEnergy_i$ is its free energy function (which will depend, in general, on $\n_i, \cLen_i$, etc.), and where $\numChains$ is the number of chains that are connected at the junction point, $\yc$.\begin{hlbreakable}\footnote{\begin{hlbreakable}Here, we account for the possibility that the monomer length $\mLen$ (which $\cLen$ is sensitive to) and the chain free energy function description $\chainFreeEnergy$ (e.g., $\GaussFreeEnergy$ versus $\KGFreeEnergy$) may vary between the chains in the cross-link.
Variation in $\mLen$ and $\chainFreeEnergy$ could arise from a diversity in the chemical composition of the chains used to synthesize the network.
Because of this, the approach can readily be used for polymer networks composed of diverse blends of chain chemistries.\end{hlbreakable}}\end{hlbreakable}
It is required that $\left|\Xvec_i - \Xvec_j\right| \leq \cLen_i + \cLen_j \text{ for all } i \neq j$.
Let $\RVEdomain$ be the domain of the RVE associated with the cross-link in its undeformed reference state; and, as such, it is defined as the convex hull of the chain ends, $\collection{\Xvec_i}_{i=1}^{\numChains}$: $\RVEdomain = \Conv\left({\collection{\Xvec_i}_{i=1}^{\numChains}}\right) \subset \Reals^3$. 
Note an otherwise implicit assumption: which end of the chain is its ``beginning'' and which is its ``end'' is arbitrary.
By convention, we say the chain ends are at the boundary of the RVE while the chain beginnings are taken to be at the junction point itself.
Then each chain end-to-end vector in the RVE is given by $\rvec_i = \Xvec_i - \yc$.
The stochastic structure of the network is then encoded by a probability density function, $\probDensInst : \inst \mapsto \left[0, \infty\right)$, that describes the likelihood of the network having a cross-link structure, $\inst \in \chainSpace$.
The function is normalized such that $\int \df{\inst} \: \probDensInst = 1$.
\begin{figure}
	\centering
	\includegraphics[width=\linewidth]{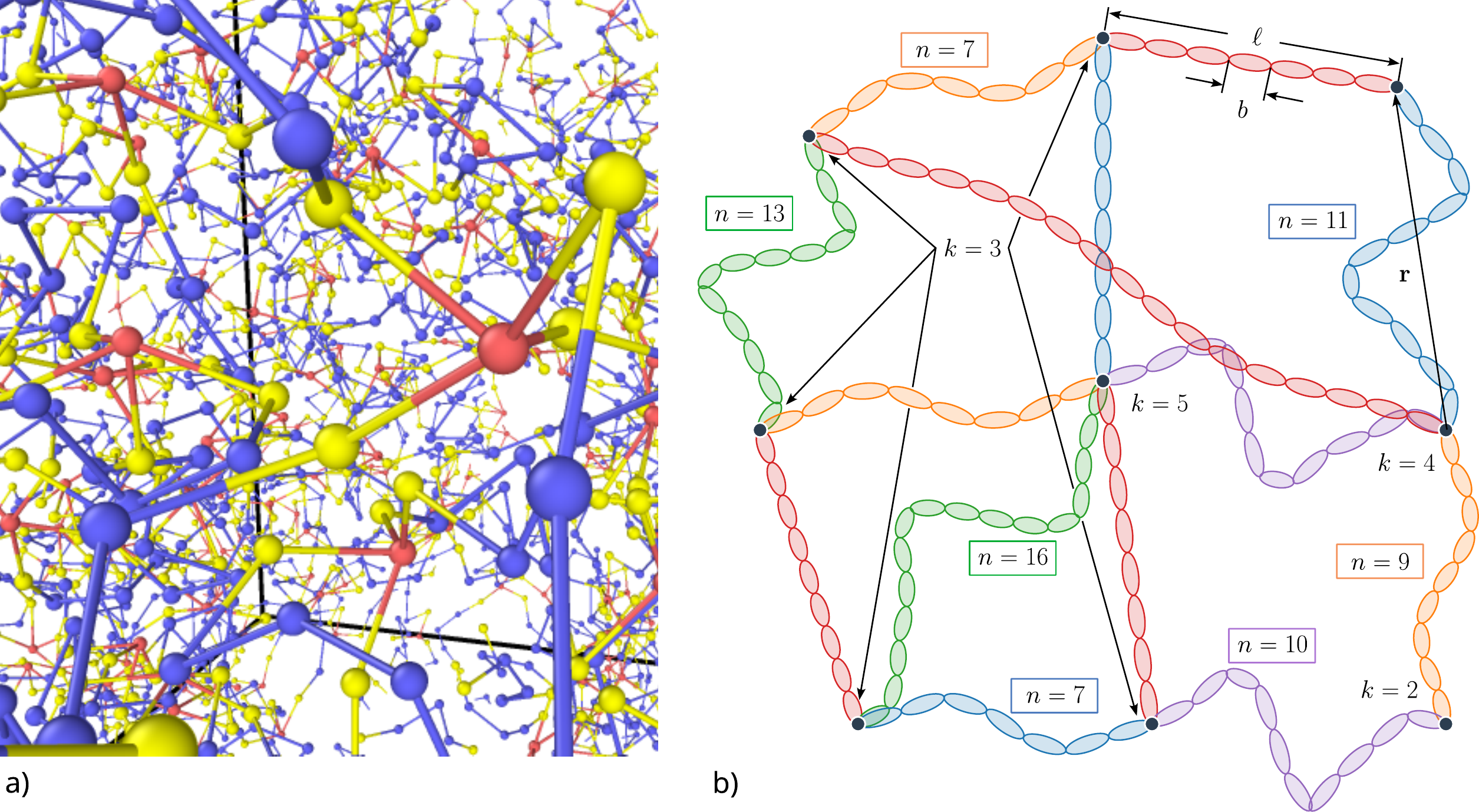}
	\caption{
            \capTitle{\hl{Polydisperse polymer networks.}}
                a) \hl{A realistic polydisperse polymer network microstructure generated via \emph{in silico} network synthesis.} The network is constructed by end-linking coarse-grained precursor chains in the \texttt{LAMMPS} molecular dynamics simulation package~\cite{plimpton1995fast,thompson2022lammps}, following procedures modified from those developed by Riggleman and colleagues~\cite{ye2020molecular,barney2022fracture,zhang2024predicting}. Red beads represent cross-linker sites, yellow beads represent precursor chain end monomers with the capability of bonding to cross-linker sites, and blue beads represent inner-chain monomers. 
                \hl{b) Schematic illustration of the structural features relevant to this work. Junctions have varying coordination number, $\numChains$ (i.e., differing numbers of chains meet at each junction) and chains have varying numbers of monomers, $\n$. As a result, chains have differing contour lengths, $\cLen$, which is the taut length of the chain and, for freely jointed chains, is given by $\n \mLen$ where $\mLen$ is the monomer length. The end-to-end vector, $\rvec$, for a single chain is also shown.}
	}
	\label{fig:polydiserse-polymer-network-microstructure}
\end{figure}

\subsection{Cross-link mechanics} \label{sec:cross-link-mechanics}
The crux of our discrete polymer network modeling approach is that we treat each cross-link as an RVE, which responds to external loads, and then model the response of the network as an average over the statistical distribution of cross-links.
Three major conceptual challenges remain: \begin{inparaenum}[(1)] \item How should the probability density function, $\probDensInst$, be parameterized in practice? \item How should a cross-link, $\inst$, be physically instantiated? And \item given some $\inst$, how does it response to external loads? \end{inparaenum}
To make progress, we make the following assumptions:
\begin{enumerate}[(1)]
\item \label{ass:chain-ends}
    The (average) chain ends connected to a cross-link, $\inst$, are well-defined and at $\collection{\Xvec_i}_{i=1}^{\numChains}$ in the undeformed body.
    Further, the restraints imposed on a particular cross-link $\inst$ by the remainder of the polymer network can be accurately captured by fixing the nearest neighbor cross-links to their average positions.\footnote{In general, the nearest neighbor cross-links may not be nearest in space, but rather, they are nearest with respect to the continuous network topological structure. Conversely, it is conceivable that cross-links that are distantly connected along the network structure may interpenetrate the same physical space. For a rich discussion of chain interpenetration in polymer networks, see~\cite{flory1985molecular,flory1985network}.}
\item \label{ass:RVE-formation}
    When a cross-link forms, \begin{inparaenum}[(a)] \item chain conformations have their expected length (given by their $\rmsChainLength$) from the initial junction point, and \item due to entropic and excluded volume effects, chains orientations are ``maximally spaced out'' (provided that the orientations may also achieve a force balance, i.e., are not linearly independent); this is made more precise in \Fref{sec:cross-link-structures} and \Fref{app:polydisperse-cross-link-structures}, where details regarding instantiating the chain end positions are discussed. \end{inparaenum}
\item \label{ass:isotropy}
    \begin{hlbreakable}Within a polymer network, the chain properties and RVE geometry of a cross-link are independent of its orientation; that is, \begin{equation} \label{eq:crosslink-isotropy}
        \probDensInst\left(\collection{\left(\n_i, \cLen_i, \genRotRef \Xvec_i, \chainFreeEnergy_i\right)}_{i=1}^{\numChains}\right) = 
        \probDensF{\genRotRef}\left(\genRotRef\right) \probDens\left(\collection{\left(\n_i, \cLen_i, \Xvec_i, \chainFreeEnergy_i\right)}_{i=1}^{\numChains}\right) = \probDensF{\genRotRef}\left(\genRotRef\right)\probDens\left(\inst\right)
    \end{equation}
    where $\probDens\left(\inst\right)$ is the frame-invariant cross-link structure probability density,
    and $\probDensF{\genRotRef}\left(\genRotRef\right)$ is the probability density of the cross-link having some orientation, $\genRotRef \in \SOThree$, 
    relative to a reference set of chain end positions, $\collection{\Xvec_i}_{i=1}^{\numChains}$.\end{hlbreakable}
    Here, when referring to the ``orientation'' of a cross-link, we mean its preferred orientation within an elastic background; that is, it is the ground state of its orientation.
    It is important that $\probDensF{\genRotRef}\left(\genRotRef\right)$ satisfy frame indifference and the underlying symmetry of the material.
    We are interested in isotropic materials in this work and, thus, we will restrict our attention to \begin{equation} \label{eq:prob-dens-preferred-clnk-orientation}
        \probDensF{\genRotRef}\left(\genRotRef\right) = \text{const.} = \frac{1}{\left|\SOThree\right|} = \frac{1}{\volSOThree}
    \end{equation} 
    such that all cross-link orientations are assumed equally likely.

    \begin{hlbreakable}At this point, the parameterization of $\probDensInst$ and $\probDens\left(\inst\right)$ remain unspecified. We discuss a methodology for parameterizing each in \Fref{sec:cross-link-statistics}.\end{hlbreakable}
\item \label{ass:affine}
    Macroscopic network deformation affinely displaces the relative positions of the chain ends (and, as a consequence, neighboring cross-links), $\collection{\Xvec_i}_{i=1}^{\numChains}\rightarrow\collection{\F \Xvec_i}_{i=1}^{\numChains}$. Meanwhile, the cross-link junction is allowed to occupy all possible physically-permissible positions in response to random thermal fluctuations of the connected chains.
\item \label{ass:fluctations}
    Thermal fluctuations of the network can be characterized through \begin{enumerate}[(a)]
        \item fluctuations of the internal degrees of freedom of each chain (e.g., for a FJC, the link orientations),
        \item rigid rotations of the cross-link such that $\collection{\Xvec_i}_{i=1}^{\numChains}\rightarrow\collection{\genRot \Xvec_i}_{i=1}^{\numChains}$ for $\genRot \in \SOThree$ in the undeformed body, and $\collection{\F \Xvec_i}_{i=1}^{\numChains}\rightarrow\collection{\F \genRot \Xvec_i}_{i=1}^{\numChains}$ in the deformed body,
        and \item displacements of the junction point, $\yc$.
    \end{enumerate}
    These fluctuation modes are depicted in \Figref{fig:fluctuations}.
    It is assumed that all other fluctuation modes have a negligible contribution to the network behavior (e.g., because they are effectively inaccessible due to higher energy or other constraints).
\item \label{ass:embedded}
    Cross-links are embedded in an elastic background that gives them a preferred orientation and a torsional stiffness with respect to the preferred orientation.
\item \label{ass:thermalize}
    The internal degrees of freedom of each of the chains thermalize much more quickly than network fluctuations; that is, fluctuations due to rotations of the cross-link or displacements of the junction point.
    This assumption is justified by observing that rotating the cross-link or displacing its junction point is coupled to all of the degrees of freedom of the chains.
    The cross-link is a larger, more complex structure than its individual chains.
    Relaxation times of polymers often scale with their size and the complexity of their structure~\cite{rouse1953theory,doi1996introduction}.
    Thus, one would expect that deformation modes of the RVE related to the global structure of the cross-link should have longer relaxation times.
    The result of this assumption is two-fold: \begin{inparaenum}[(1)] \item the free energy contributions from the chains can be taken as a sum of their individual free energies, $\sum_{i=1}^\numChains \chainFreeEnergy_i$, and \item in the statistical mechanics formulation of the network, the internal chain degrees of freedom can be treated separately before averaging over fluctuating cross-link rotations and junction positions. \end{inparaenum}
\end{enumerate}
Assumptions \ref{ass:chain-ends}-\ref{ass:isotropy} will facilitate the construction of cross-links and associated cross-link probability density function, $\probDensInst$.
Assumptions \ref{ass:affine}-\ref{ass:thermalize} will facilitate modeling the mechanical response of a cross-link.
\begin{figure}
	\centering
	\includegraphics[width=\linewidth]{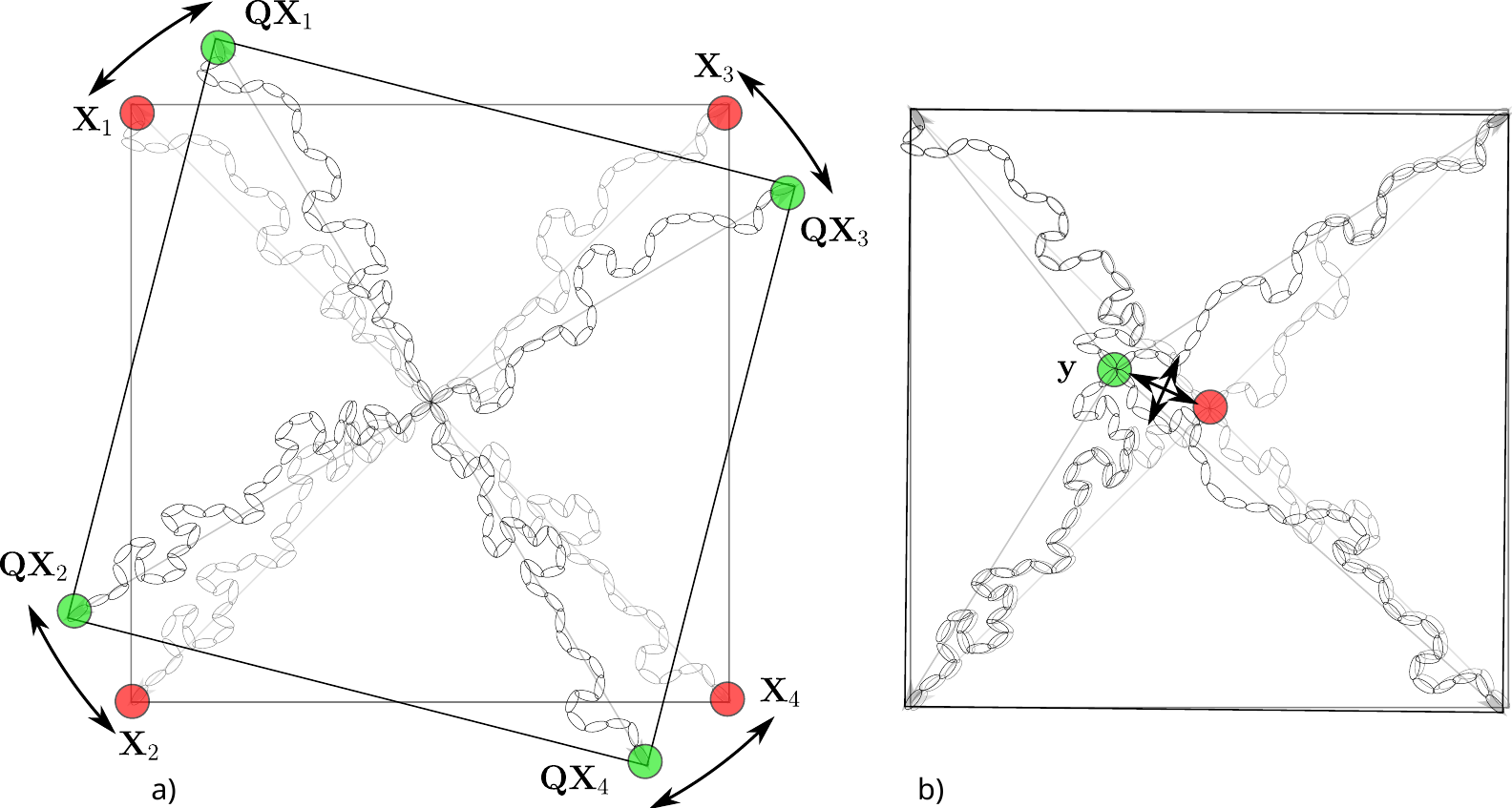}
	\caption{
            \capTitle{Thermal fluctuations of cross-links.}
            Thermal fluctuations cause a) rigid rotations of the end positions of the cross-link, $\Xvec \rightarrow \genRot \Xvec$ and b) perturbations of the junction position, $\yc$.
	}
	\label{fig:fluctuations}
\end{figure}

\paragraph*{Statistical mechanics of a cross-link.}
The free energy of a thermally fluctuating system at constant temperature is obtained (via the principal thermodynamic connection formula) as $-\kB \T \ln \partitionFunction$ where $\partitionFunction$ is the partition function.
Let $\rvedomain = \Conv\left({\collection{\F \genRot \Xvec_i}_{i=1}^{\numChains}}\right) \subset \Reals^3$ be the rotated and deformed RVE domain.
Given Assumptions \ref{ass:affine}-\ref{ass:thermalize}, the partition function of a cross-link is given by
\begin{equation} \label{eq:partition-function-1}
   \partitionFunction\left(\F, \genRotRef\right) = \int_{\yc \in \rvedomain} \df{\yc} \int_{\genRot \in \SOThree} \df{\genRot} \: \exp\left(-\frac{1}{\kB \T}\sum_{i=1}^\numChains \chainFreeEnergy_i\left(\right|\F \genRot \Xvec_i - \yc\left|\right) - \frac{\PEQ\left(\genRot\right)}{\kB \T}\right)
\end{equation}
where $\PEQ\left(\genRot\right)$ is the torsional elastic energy from the surrounding network when rotating the cross-link by some amount $\genRot$ prior to deformation.
Here we assume a simple form for the torsional elasticity as
\begin{equation}
    \PEQ\left(\genRot\right) = \frac{\kNetTors}{2} \lVert \genRot - \genRotRef \rVert^2
\end{equation}
where $\kNetTors$ is some torsional stiffness modulus, and recall: $\genRotRef$ is the preferred orientation of the cross-link in the undeformed body relative to the reference positions, $\collection{\Xvec_i}_{i=1}^\numChains$.
Even with this simplified form of $\PEQ$, exact evaluation of \Eqref{eq:partition-function-1} is prohibitively difficult.
Instead, we use saddle point approximation to explore two interesting limits that are both more analytically tractable.
Towards approximating \Eqref{eq:partition-function-1}, let
\begin{align}
    \label{eq:FR-min}
    \left\{\genRotStar, \ycStar\right\} &= \arg \inf_{\genRot \in \SOThree, \yc \in \rvedomain} \sum_{i=1}^{\numChains} \chainFreeEnergy_i\left(\left|\F \genRot \Xvec_i - \yc\right|\right), \\
    \label{eq:FA-min}
    \ycStarQO &= \arg \inf_{\yc \in \rvedomain} \sum_{i=1}^{\numChains} \chainFreeEnergy_i\left(\left|\F \genRotRef \Xvec_i - \yc\right|\right).
\end{align}
The two limits of interest are: \begin{inparaenum}[(1)] \item the \emph{free rotation limit} where, for each $\genRotRef \in \SOThree$, there exists a $\genRotStar$ such that \Eqref{eq:FR-min} and
\begin{equation} \label{eq:free-rotation-limit}
    \frac{\kNetTors}{2 \kB \T} \lVert \genRotStar - \genRotRef \rVert^2 \ll 1,
\end{equation}
are both satisfied\footnote{
This condition does not necessarily imply $\kNetTors / \kB \T \ll 1$. It could alternatively be met through high enough degeneracy of $\genRotStar$ (e.g., due to symmetry) such that for each $\genRotRef$ there is a $\genRotStar$ that is sufficiently close.
}; and \item
the \emph{frame averaging limit} where
\begin{equation} \label{eq:frame-averaging-limit}
    \frac{\kNetTors}{2 \kB \T} \gg 1.
\end{equation}
\end{inparaenum}
In the free rotation limit, the chain energy term (i.e., first term) dominates the argument of the exponential in \Eqref{eq:partition-function-1}, such that its minimum determines $\genRot$ at the saddle point; in the frame averaging limit, the torsional elastic energy term dominates, and it is instead this term whose minimum determines $\genRot$ at the saddle point.

\paragraph*{Free rotation limit.}
First assume \Eqref{eq:FR-min} and \Eqref{eq:free-rotation-limit} hold.
In this limit, one can take the saddle point approximation at $\genRot = \genRotStar,~\yc = \ycStar$.
Taking the leading order approximation with respect to $\genRot$ results in,
\begin{equation} \label{eq:partition-function-2}
   \partitionFunctionFR\left(\F, \genRotRef\right) \approx e^{-\PEQ\left(\genRotStar\right) / \kB \T} \int_{\yc \in \rvedomain} \df{\yc} \: \exp\left(-\frac{1}{\kB \T}\sum_{i=1}^\numChains \chainFreeEnergy_i\left(\right|\F \genRotStar \Xvec_i - \yc\left|\right)\right).
\end{equation}
For brevity, let $\Wchains\left(\F, \genRot, \yc\right) = \sum_{i=1}^\numChains \chainFreeEnergy_i\left(\right|\F \genRot \Xvec_i - \yc\left|\right)$.\footnote{The analytical form of the first and second derivatives of $\Wchains$ with respect to $\yc$ is provided in \Fref{app:cross-link-chain-free-energy-derivatives-junction-position} for both Gaussian chains and Kuhn and Gr\"{u}n chains.}
Expanding about $\yc = \ycStar$,
\begin{equation} \label{eq:partition-function-3}
   \partitionFunctionFR\left(\F, \genRotRef\right) \approx e^{-\left(\Wchains\left(\F, \genRotStar, \ycStar\right) + \PEQ\left(\genRotStar\right)\right) / \kB \T} \int_{\yc \in \rvedomain} \df{\yc} \: \exp\left(-\frac{1}{2 \kB \T} \left(\yc - \ycStar\right) \cdot \evalAt{\frac{\partial^2 \Wchains}{\partial \yc \partial \yc}}{\genRot = \genRotStar, \yc = \ycStar} \left(\yc - \ycStar\right)\right),
\end{equation}
where higher order terms, $\orderOf{\left(\yc - \ycStar\right)^3}$, have been neglected.
Extending the domain of integration to $\yc \in \Reals^3$ results in Gaussian integrals such that
\begin{equation} \label{eq:partition-function-free-rotation-4}
   \partitionFunctionFR\left(\F, \genRotRef\right) \approx \left(\frac{\left(2 \pi \kB \T\right)^{3/2}}{\sqrt{\det \left(\evalAt{\frac{\partial^2 \Wchains}{\partial \yc \partial \yc}}{\genRot = \genRotStar, \yc = \ycStar}\right)}}\right) \exp\left(-\frac{\Wchains\left(\F, \genRotStar, \ycStar\right)}{\kB \T} - \frac{\PEQ\left(\genRotStar\right)}{\kB \T}\right) 
\end{equation}
where $\det \left(\partial^2 \Wchains / \partial \yc \partial \yc\right)$ is the determinant of the Hessian of the free energy of the chains with respect to $\yc$.
Then the free energy of a cross-link in the free rotation limit is given by
\begin{equation} \label{eq:crosslink-free-energy-free-rotation-0}
    \boxed{
   \clinkFreeRotationFreeEnergy\left(\F, \genRotRef\right) = \Wchains\left(\F, \genRotStar, \ycStar\right) + \PEQ\left(\genRotStar\right) + \frac{\kB \T}{2} \ln \left(\det \left(\evalAt{\frac{\partial^2 \Wchains}{\partial \yc \partial \yc}}{\genRot = \genRotStar, \yc = \ycStar}\right)\right) - \frac{3 \kB \T}{2} \ln \left(2 \pi \kB \T\right).
   }
\end{equation}
For simplicity, terms that have negligible effect on the mechanics are dropped.
\begin{enumerate}[(1)]
\item The last term has no influence on the mechanics.
\item The second term is also negligible, by assumption (\Eqref{eq:free-rotation-limit}); and, further, it will be shown that for many networks of interest $\genRotStar$ does not change significantly with deformation; thus, even when $\kNetTors / 2 \kB \T$ is not small, the second term likely has a negligible coupling to mechanics.\footnote{The last term may influence thermal properties and the second term may be more relevant for certain multiphysics problems \hl{(e.g., when an external electric field induces net torques on dipolar chains, coupling the rotational degree of freedom to both the applied field and deformation)}.}
\item The second-to-last term is related to fluctuations of the cross-link junction position, and is more subtle.
It has been argued in previous work that it has no effect on the mechanics of networks of Gaussian chains~\cite{treloar1975physics}, and that its significance for networks with other types of chains (e.g., Kuhn and Gr\"un chains) is also small~\cite{treloar1946elasticity,treloar1954photoelastic}.
Unless otherwise stated, this junction fluctuation term is also dropped.
It will, however, be revisited and analyzed when appropriate.
\end{enumerate}
After simplifying, \Eqref{eq:crosslink-free-energy-free-rotation-0} takes the form
\begin{equation} \label{eq:crosslink-free-energy-free-rotation}
\boxed{
        \clinkFreeRotationFreeEnergy\left(\F\right) = \Wchains\left(\F, \genRotStar, \ycStar\right) = \inf_{\genRot \in \SOThree, \yc \in \rvedomain} \sum_{i=1}^{\numChains} \chainFreeEnergy_i\left(\left|\F \genRot \Xvec_i - \yc\right|\right).
        }
\end{equation}

\paragraph*{Frame averaging limit.}
Assume \Eqref{eq:FA-min} and \Eqref{eq:frame-averaging-limit} hold.
In this limit, one can take the saddle point approximation at $\genRot = \genRotRef,~\yc = \ycStarQO$.
Taking the leading order approximation with respect to $\genRot$ results in,
\begin{equation} \label{eq:partition-function-FA-2}
   \partitionFunctionFA\left(\F, \genRotRef\right) \approx \int_{\yc \in \rvedomain} \df{\yc} \: \exp\left(-\frac{\Wchains\left(\F, \genRotRef, \yc\right)}{\kB \T}\right).
\end{equation}
Expanding about $\yc = \ycStarQO$,
\begin{equation} \label{eq:partition-function-4}
   \partitionFunctionFA\left(\F, \genRotRef\right) \approx e^{-\Wchains\left(\F, \genRotRef, \ycStarQO\right) / \kB \T} \int_{\yc \in \rvedomain} \df{\yc} \: \exp\left(-\frac{1}{2 \kB \T} \left(\yc - \ycStarQO\right) \cdot \evalAt{\frac{\partial^2 \Wchains}{\partial \yc \partial \yc}}{\genRot = \genRotRef, \yc = \ycStarQO} \left(\yc - \ycStarQO\right)\right), 
\end{equation}
where higher order terms, $\orderOf{\left(\yc - \ycStarQO\right)^3}$, have been neglected.
Extending the domain of integration to $\yc \in \Reals^3$ results in Gaussian integrals such that
\begin{equation} \label{eq:partition-function-frame-average-4}
   \partitionFunctionFA\left(\F, \genRotRef\right) \approx \left(\frac{\left(2 \pi \kB \T\right)^{3/2}}{\sqrt{\det \left(\evalAt{\frac{\partial^2 \Wchains}{\partial \yc \partial \yc}}{\genRot = \genRotRef, \yc = \ycStarQO}\right)}}\right) \exp\left(-\frac{\Wchains\left(\F, \genRotRef, \ycStarQO\right)}{\kB \T}\right)
\end{equation}
and
\begin{equation} \label{eq:crosslink-free-energy-frame-average-0}
\boxed{
   \clinkFrameAveragingFreeEnergy\left(\F, \genRotRef\right) = \Wchains\left(\F, \genRotRef, \ycStarQO\right) + \frac{\kB \T}{2} \ln \left(\det \left(\evalAt{\frac{\partial^2 \Wchains}{\partial \yc \partial \yc}}{\genRot = \genRotRef, \yc = \ycStarQO}\right)\right) - \frac{3 \kB \T}{2} \ln \left(2 \pi \kB \T\right).
   }
\end{equation}
Similar to \Eqref{eq:crosslink-free-energy-free-rotation}, the last term is dropped for simplicity, and because it has no bearing on the mechanical behavior.
The second term (i.e., the cross-link junction fluctuation term) is also dropped -- both for simplicity and because, again, it has been found that it often has a negligible effect on mechanical behavior~\cite{treloar1975physics,treloar1946elasticity,treloar1954photoelastic}.
After simplifying, \Eqref{eq:crosslink-free-energy-frame-average-0} takes the form
\begin{equation} \label{eq:crosslink-free-energy-frame-average}
\boxed{
        \clinkFrameAveragingFreeEnergy\left(\F, \genRotRef\right) = \Wchains\left(\F, \genRotRef, \ycStarQO\right) = \inf_{\yc \in \rvedomain} \sum_{i=1}^{\numChains} \chainFreeEnergy_i\left(\left|\F \genRotRef \Xvec_i - \yc\right|\right).
        }
\end{equation}

\begin{figure}
	\centering
	\includegraphics[width=0.85\linewidth]{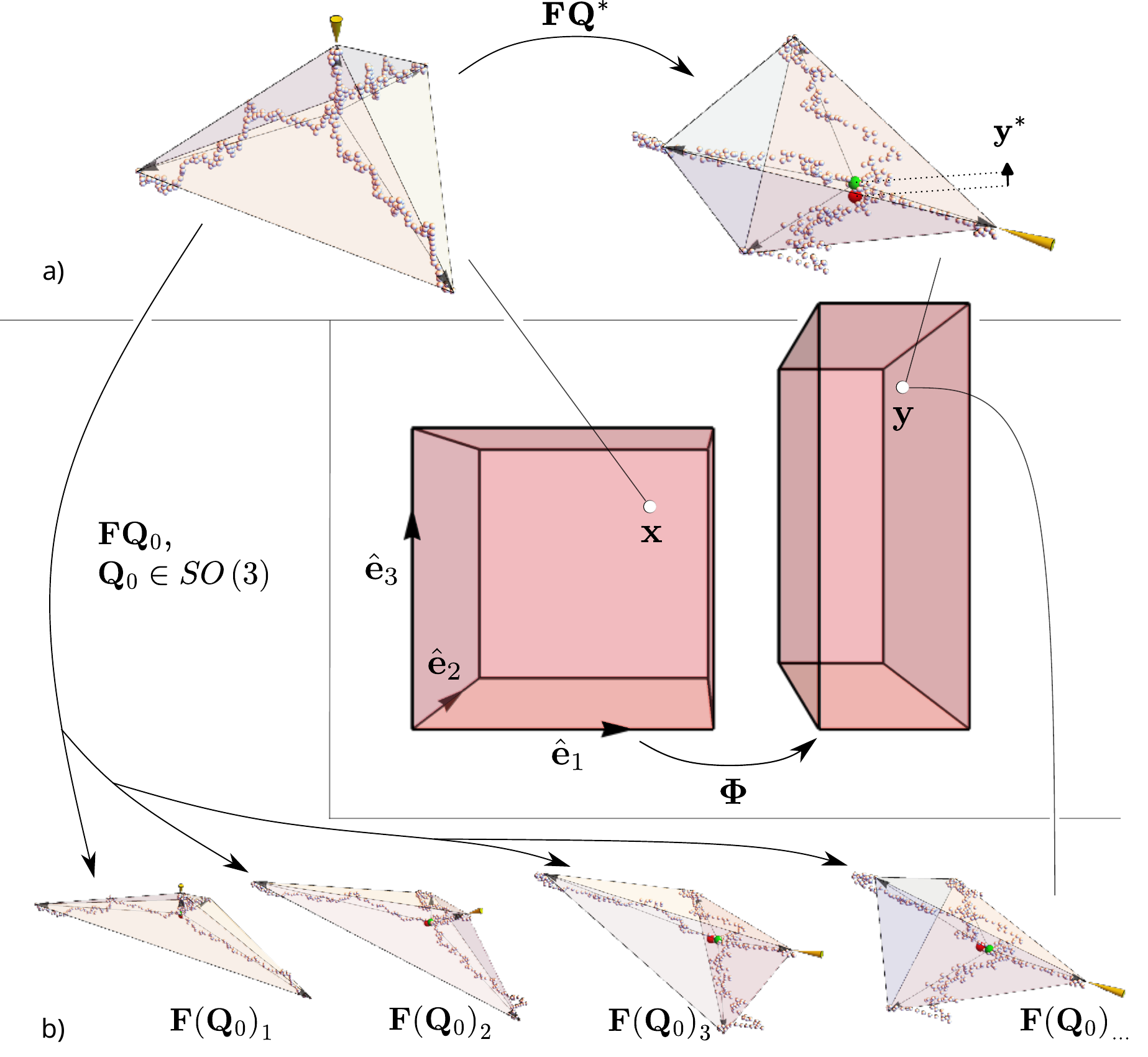}
	\caption{
    \begin{hlbreakable}\capTitle{Cross-link RVE response to deformation.}\end{hlbreakable}
    a) The polydisperse $4$-chain RVE gets mapped under $\F \genRotStar$ for the free rotation limit.
    In the deformed RVE, the origin and cross-link position are denoted by a red ball and green ball, respectively.
    b) In the frame averaging limit, the RVE response to \hl{(fixed)} $\F$ is averaged over all possible frames, represented by $\genRotRef \in \SOThree$.
    The chain deformations, stretches, and cross-link position are, in general, different for each $\left(\genRotRef\right)_\Box$.}
	\label{fig:constitutive-models}
\end{figure}

\begin{hlbreakable}For both the free rotation limit and the frame averaging limit, the cross-link RVE exhibits locally non-affine deformation and its chains share load in a generalized manner independent from restrictive homogenization assumptions.
These effects emerge from affinely deforming each chain end while allowing the cross-link structure to relax to equilibrium.
\Figref{fig:constitutive-models} displays a schematic physically depicting the cross-link deformation for the free rotation and frame averaging limits.\end{hlbreakable}
In addition, the nondimensional representation of the cross-link partition function (\Eqref{eq:partition-function-free-rotation-4} and \Eqref{eq:partition-function-frame-average-4}) and cross-link free energy (\Eqref{eq:crosslink-free-energy-free-rotation-0} and \Eqref{eq:crosslink-free-energy-frame-average-0}) for both limits is provided in \Fref{app:nondimensional-free-energy-representation}.

\begin{remark}
\emph{Frame averaging limit for isotropic materials.}
For isotropic materials, it is often convenient to represent the cross-link free energy of the frame averaging limit as the integral of \Eqref{eq:crosslink-free-energy-frame-average} over all $\genRotRef$.
Using \Eqref{eq:prob-dens-preferred-clnk-orientation} from Assumption \ref{ass:isotropy}, this leads to the following form
\begin{equation} \label{eq:crosslink-free-energy-frame-average-integral-orientations}
\boxed{
    \clinkFrameAveragingFreeEnergy\left(\F\right) = \frac{1}{\volSOThree} \int_{\genRotRef \in \SOThree} \df{\genRotRef} \left(\inf_{\yc \in \rvedomain} \sum_{i=1}^{\numChains} \chainFreeEnergy_i\left(\left|\F \genRotRef \Xvec_i - \yc\right|\right)\right).
    }
\end{equation}
In a similar fashion, the frame averaging over all $\ycStarQO$ (from \Eqref{eq:FA-min}) is given by
\begin{equation} \label{eq:crosslink-displacement-frame-average-integral-orientations}
\ycStarFrameAveraging\left(\F\right) = \frac{1}{\volSOThree} \int_{\genRotRef \in \SOThree} \df{\genRotRef} \: \ycStarQO.
\end{equation}
\end{remark}

\begin{remark}
\emph{Applicability of the free rotation and frame averaging limits.}
    The free rotation and frame averaging limits may be more or less applicable to different types of polymer networks.
    For example, some biopolymer networks consisting of semi-flexible polymers with bending rigidity may resist shear and rotation at cross-links~\cite{storm2005nonlinear,broedersz2014modeling,hatami2018effect} and, as such, the frame averaging limit may provide a good model. 
    Meanwhile, many authoritative works on soft polymer networks (e.g., elastomers) make reference to the ``locally fluid-like freedom'' for molecular motion within the polymer network (e.g.,~\cite{warner2007liquid,treloar1975physics}).
    For soft polymer networks consisting of freely-jointed chains, there is experimental evidence that suggests the free rotation limit is a sound approximation~\cite{arruda1993threee}.
    An added benefit of focusing on the limits, beyond computational tractability, is that it also removes the need to fit an additional model parameter, $\kNetTors$.
\end{remark}

\begin{remark}
\emph{Relationship to existing polymer network models.}
The frame averaging limit shares structural similarities with the full network model~\cite{wu1992improved,wu1993improved}: both integrate over a continuous uniform distribution (chains over $\unitSphere$ vs. cross-link orientations over $\SOThree$) and can employ similar quadrature techniques.
However, the frame averaging limit represents a more physical realization -- rather than imposing infinite chain connectivity at a single junction, it averages over all orientations of finite cross-link structures where loads are shared among a discrete number of chains.

From a homogenization perspective, the junction relaxation in our framework is analogous to nonlinear RVE methods where boundaries are deformed while interiors equilibrate~\cite{milton2022theory,caulfield2024twinning}, naturally satisfying force balance.
This contrasts with restrictive affine (Voigt-type) or equal force (Reuss-type) theories~\cite{von2002mesoscale,verron2017equal,li2020variational}.

Historically, junction relaxation appears in the classical $4$-chain~\cite{flory1943statistical,treloar1975physics} and $3$-chain models~\cite{adolf1987computer,elias2006non}.
While earlier discrete models oriented the RVE relative to the principal frame to satisfy frame indifference, Arruda-Boyce~\cite{arruda1993threee} was the first to use the orientation that optimally distributes elastic energy across chains, which was later generalized in~\cite{grasinger2023polymer}.
Polydispersity in fluctuating-junction models was explored by Kloczkowski et al.~\cite{kloczkowski2002effect} for bimodal chains.
Our framework unifies these ideas -- junction relaxation, rotational fluctuations, and polydispersity -- while also satisfying frame indifference.

For detailed discussion of historical developments and comparisons with specific models, see \Fref{app:past-work}.
\end{remark}

\subsection{Junction position relaxation}
\paragraph*{Uniqueness.} For many networks of interest, the solution to the junction positional relaxation for general $\F$ and $\genRot$ is unique.
\begin{proposition} \label{prop:uniqueness}
    For RVEs consisting of chains with free energies ($\chainFreeEnergy_i,~i = 1, \dots, \numChains$) that are all convex, non-decreasing functions of stretch, with at least $1$ that is strictly convex, strictly increasing, the solution to
    \begin{equation}
        \inf_{\yc \in \rvedomain} \left\{ \Wchains\left(\F, \genRot, \yc\right) \right\}
    \end{equation}
    where
    \begin{equation}
        \Wchains\left(\F, \genRot, \yc\right) = \sum_{i=1}^{\numChains} \chainFreeEnergy_i\left(\left|\F \genRot \Xvec_i - \yc\right|\right)
    \end{equation}
    exists and is unique.
\end{proposition}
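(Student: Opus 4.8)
The plan is to prove existence and uniqueness separately, with essentially all of the work in a strict-convexity argument for the objective as a function of $\yc$. First I would fix $\F$ and $\genRot$ and abbreviate $\vvec{p}_i := \F\genRot\Xvec_i$, so that the $i$th summand of $\Wchains$ is the map $\yc \mapsto \chainFreeEnergy_i\!\left(|\yc - \vvec{p}_i|/\cLen_i\right)$, which is the composition of the affine map $\yc \mapsto (\yc - \vvec{p}_i)/\cLen_i$, the Euclidean norm, and $\chainFreeEnergy_i$ regarded as a function of stretch. Since the norm is convex and $\chainFreeEnergy_i$ is convex and non-decreasing, each summand is convex in $\yc$ on all of $\Reals^3$, hence so is $\Wchains(\F,\genRot,\cdot)$. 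Existence of a minimizer then follows from the direct method: $\Wchains(\F,\genRot,\cdot)$ is lower semicontinuous and $\rvedomain = \Conv(\{\vvec{p}_i\})$ is compact, so the infimum is attained; it is finite because the cross-link instantiation procedure produces an admissible configuration, i.e.\ $\Wchains(\F,\genRot,\cdot)$ is not identically $+\infty$ on $\rvedomain$ (automatic for chain models with everywhere-finite free energy such as Gaussian chains).

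The substantive step is upgrading convexity to strict convexity using the distinguished index $i_0$ whose $\chainFreeEnergy_{i_0}$ is strictly convex and strictly increasing. I would fix distinct $\yc_1, \yc_2 \in \rvedomain$ and $\theta \in (0,1)$; since every summand is already convex, it suffices to show the $i_0$ summand lies strictly below its chord on the segment $[\yc_1,\yc_2]$, and I would split into two cases according to whether the line through $\yc_1,\yc_2$ passes through $\vvec{p}_{i_0}$. If it does not, then $\yc \mapsto |\yc - \vvec{p}_{i_0}|$ is \emph{strictly} convex on that segment (the affine hull avoids $\vvec{p}_{i_0}$, so the triangle inequality is strict there), and composing a strictly convex function with the strictly increasing convex function $\chainFreeEnergy_{i_0}$ preserves strict convexity. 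If it does, the norm is merely piecewise linear along that line, so I would argue in one variable: parametrizing the segment by $s$, the summand becomes $s \mapsto \chainFreeEnergy_{i_0}\!\left(|s - s_0|/\cLen_{i_0}\right)$, which is strictly convex because $\chainFreeEnergy_{i_0}$ is strictly convex on each half-line emanating from $s_0$ and strictly increasing through the corner at $s_0$ (for $s_1 < s_0 < s_2$ one has $|\theta s_1 + (1-\theta)s_2 - s_0| < \theta|s_1 - s_0| + (1-\theta)|s_2 - s_0|$, after which monotonicity followed by convexity of $\chainFreeEnergy_{i_0}$ gives the strict chord inequality across $s_0$). In both cases the $i_0$ summand is strictly convex on $[\yc_1,\yc_2]$, hence $\Wchains(\F,\genRot,\cdot)$ is strictly convex on $\rvedomain$.

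Uniqueness is then immediate: a strictly convex function on a convex set has at most one minimizer, and existence was already established, so the minimizer is unique. I expect the collinear subcase of the strict-convexity step to be the only genuinely non-routine point: there strict convexity of the objective cannot be inherited from the norm (which loses strict convexity along lines through $\vvec{p}_{i_0}$) and must instead be manufactured from the corner of $|\cdot|$ together with strict monotonicity of $\chainFreeEnergy_{i_0}$ --- this is precisely where the hypothesis ``strictly increasing'' (and not merely ``strictly convex'') is used. A secondary point deserving care is stating the admissibility hypothesis on the instantiated chain ends cleanly enough to guarantee the infimum is finite, so that the minimizer lies in the region where $\Wchains$ is finite and the uniqueness conclusion is unambiguous.
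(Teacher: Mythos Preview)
Your proposal is correct and follows the same overall strategy as the paper: existence from compactness of $\rvedomain$ (the paper invokes the extreme value theorem directly; you phrase it via the direct method), and uniqueness from strict convexity of $\Wchains(\F,\genRot,\cdot)$ in $\yc$. The paper's argument is terser on the key point: it simply asserts that a convex (respectively strictly convex, strictly increasing) outer function composed with a convex inner function yields a convex (respectively strictly convex) composition. The convex case is standard; the strict case, stated at that level of generality, is false (an affine or constant inner function on a segment would kill strictness), and the paper glosses over why it nonetheless holds here. Your case analysis --- splitting on whether the segment $[\yc_1,\yc_2]$ passes through $\vvec{p}_{i_0}$, and in the collinear case using the kink of $|\cdot|$ together with strict monotonicity of $\chainFreeEnergy_{i_0}$ to manufacture the strict chord inequality --- is precisely the verification the paper omits, and it correctly identifies where the ``strictly increasing'' hypothesis is genuinely needed. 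You also flag the finiteness of the infimum as a hypothesis to check, which the paper does not address; this is a real point for finitely extensible chain models such as the Kuhn--Gr\"un chain.
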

\begin{proof}
    By construction, $\RVEdomain$ is a compact, convex set.
    The deformed RVE, $\rvedomain = \F \genRot \RVEdomain$, is also compact, convex since these properties are conserved under linear transformation.
    The minimum exists by the extreme value theorem.

    Fix $\F$ and $\genRot$.
    The compositions $\chainFreeEnergy_i\left(\left|\F \genRot \Xvec_i - \yc\right|\right)$ are convex (strictly convex), non-decreasing (strictly increasing) functions composed with convex functions; thus, they are each convex (strictly convex) functions of $\yc$.
    As a sum of convex functions, with at least $1$ strictly convex, $\Wchains$ is a strictly convex function of $\yc$.
    A strictly convex function on a compact, convex set has a unique global minimum.
\end{proof}
This means that, for polydisperse RVEs consisting of Gaussian chains, Kuhn and Gr\"un chains, wormlike chains, and many others, the solution for the junction position is unique (for fixed $\F$ and $\genRot$).

\paragraph*{Exact solution for Gaussian RVEs.}
For RVEs consisting of Gaussian chains, the equilibrium equation for the junction position, $\partial \Wchains / \partial \yc = \nullvec$ is linear (see~\Fref{app:cross-link-chain-free-energy-derivatives-junction-position}), allowing the exact solution:
\begin{equation} \label{eq:gaussian-chain-unique-junction-position-relaxation}
    \ycStar = \F \genRot\frac{\left(\sum_{i=1}^{\numChains} \frac{\Xvec_i}{\n_i}\right)}{\left(\sum_{i=1}^{\numChains} \frac{1}{\n_i}\right)},
\end{equation}
where uniform $\mLen$ is here assumed.

\subsection{Network mechanics and frame indifference}
\label{sec:network-mechanics-frame-indifference}
As mentioned previously, we assume that -- although the exact structure of each and every individual cross-link is likely unknown -- a statistical distribution of cross-link structures may be known.
Recall that $\probDensInst\left(\inst\right)$ is the probability density of finding a cross-link with structure given by $\inst = \collection{\left(\n_i, \cLen_i, \Xvec_i, \chainFreeEnergy_i\right)}_{i=1}^{\numChains}$ in the network of interest.
Then the free energy density for the network can be modeled as
\begin{equation} \label{eq:network-free-energy-density}
    \allClinksFreeEnergyDensity\left(\F\right) = \frac{\crossDensity}{2} \int_{\inst \in \chainSpace} \df{\inst} \: \probDensInst \clinkFreeEnergy = -\frac{\crossDensity}{2} \kB \T \int_{\inst \in \chainSpace} \df{\inst} \: \probDensInst \ln \left(\partitionFunction\left(\F, \genRotRef\right)\right),
\end{equation}
where $\crossDensity$ is the (volumetric) cross-link number density such that the product $\crossDensity\probDens\left(\inst\right)$ equals the number density of cross-link structure $\inst$.
The factor of $1/2$ is here because each chain in the RVE is assumed to be connected to $2$ distinct cross-linkers in the network.

\paragraph*{Frame indifference.}
Frame indifference can be shown when the cross-link response utilizes the general partition function, \Eqref{eq:partition-function-1}, or an approximation thereof which is faithful to its symmetries, and when $\probDensInst$ satisfies Assumption \ref{ass:isotropy}; that is, where the orientation of cross-links within an elastic embedding is uniformly distributed.
\begin{proof}
Given \Eqref{eq:partition-function-1} and Assumption \ref{ass:isotropy}, it suffices to show that $\partitionFunction\left(\F, \genRotRef\right) = \partitionFunction\left(\strDiag, \genRotRef'\right)$ where, recall, $\strDiag = \diag\left(\pStretch{1}, \pStretch{2}, \pStretch{3}\right) = \pFrame \strTens \pFrame^T$ is a diagonalization of the right stretch tensor, $\strTens$, and we make the definition $\genRotRef' = \pFrame \genRotRef$. 
Again, $\F = \polRot \strTens$.
Then
\begin{equation}
   \partitionFunction\left(\F, \genRotRef\right) = \int_{\yc \in \rvedomain} \df{\yc} \int_{\genRot \in \SOThree} \df{\genRot} \: \exp\left(-\frac{1}{\kB \T}\sum_{i=1}^\numChains \chainFreeEnergy_i\left(\right|\polRot \pFrame^T \strDiag \pFrame \genRot \Xvec_i - \yc\left|\right) - \frac{\PEQ\left(\genRot\right)}{\kB \T}\right).
\end{equation}
Let $\yc' = \left(\polRot \pFrame^T\right)^{-1} \yc = \pFrame \polRot^T \yc$ and $\genRot' = \pFrame \genRot$.
The argument to the chain free energy takes the form
\begin{equation}
    \left|\polRot \pFrame^T \strDiag \pFrame \genRot \Xvec_i - \yc\right| = \left|\polRot \pFrame^T \left(\strDiag \pFrame \genRot \Xvec_i - \yc'\right)\right| = \left|\strDiag \genRot' \Xvec_i - \yc'\right|
\end{equation}
and, consequently,
\begin{equation}
\begin{split}
   \partitionFunction\left(\F, \genRotRef\right) &= \int_{\yc \in \rvedomain} \df{\yc} \int_{\genRot \in \SOThree} \df{\genRot} \: \exp\left(-\frac{1}{\kB \T}\sum_{i=1}^\numChains \chainFreeEnergy_i\left(\left|\strDiag \genRot' \Xvec_i - \yc'\right|\right) - \frac{\PEQ\left(\genRot\right)}{\kB \T}\right) \\
   &= \int_{\yc \in \rvedomain} \df{\yc} \int_{\genRot \in \SOThree} \df{\genRot} \: \exp\left(-\frac{1}{\kB \T}\sum_{i=1}^\numChains \chainFreeEnergy_i\left(\left|\strDiag \genRot' \Xvec_i - \yc'\right|\right) - \frac{\kNetTors}{2\kB \T}\lVert \pFrame^T \left(\genRot' - \genRotRef'\right) \rVert^2\right).
\end{split}
\end{equation}
The tensor norm, $\lVert . \rVert$, has not yet been specified.
For norm choices that appropriately model torsional elasticity, we want $\lVert \pFrame^T \left(\genRot' - \genRotRef'\right) \rVert = \lVert \genRot' - \genRotRef'\rVert$ to hold.
The Frobenius norm is one such example and is also the most natural choice for modeling elasticity.
In all such cases,
\begin{equation}
   \partitionFunction\left(\F, \genRotRef\right) = \int_{\yc' \in \rvedomain} \df{\yc'} \int_{\genRot' \in \SOThree} \df{\genRot'} \: \exp\left(-\frac{1}{\kB \T}\sum_{i=1}^\numChains \chainFreeEnergy_i\left(\left|\strDiag \genRot' \Xvec_i - \yc'\right|\right) - \frac{\kNetTors}{2\kB \T}\lVert \left(\genRot' - \genRotRef'\right) \rVert^2\right),
\end{equation}
and, as a result, $\partitionFunction\left(\F, \genRotRef\right) = \partitionFunction\left(\strDiag, \genRotRef'\right)$.
Thus, the resulting constitutive model is isotropic and frame indifferent.
\end{proof}

\subsection{Cross-link structures} \label{sec:cross-link-structures}

Having addressed the response of a cross-link to external loads, we now turn our attention to instantiation of cross-link structures. 
Per Assumptions \ref{ass:chain-ends}-\ref{ass:isotropy}, chains have their expected lengths, $\rmsChainLength$, upon cross-linking, and the orientations of chains are maximally spaced out.
To construct $\inst$, let the junction position at the moment of cross-linking be at the origin.
For each $\Xvec_i \in \RVEdomain$, define the unit vector $\dir{\Xvec}_i=\Xvec_i/\left|\Xvec_i\right|$. 
We place these unit vectors according to the Thomson problem: the equilibrium configuration of electrostatically repulsive particles on the unit sphere~\cite{thomson1904xxiv}.
Ideally, we also place these unit vectors to maximize reflectional symmetry about the origin (especially if the Thomson problem is not able to be satisfied).
These two considerations maximize angular separation between chains and produces an isotropic distribution of $\collection{\dir{\Xvec}_i}_{i=1}^\numChains$, consistent with excluded volume and entropic considerations prior to cross-linking.
The classical $4$-chain~\cite{flory1943statistical, treloar1943elasticity} and $6$-chain~\cite{grasinger2023polymer} models correspond to Thomson solutions for $\numChains = 4$ and $6$, respectively.
The classical $8$-chain~\cite{arruda1993threee} model does not correspond to the Thomson solution for $\numChains = 8$, but it does maximize the reflectional symmetry of $\collection{\dir{\Xvec}_i}_{i=1}^8$ about the origin.

In this work, we construct the polydisperse $4$-chain model as $\inst = \collection{\left(\n_i, \cLen_i, \Xvec_i, \chainFreeEnergy_i\right)}_{i=1}^4$ where
\begin{equation} \label{eq:four-chain-xs}
    \begin{split}
    \Xvec_1 &= \left(\rmsChainLength\right)_1 \left(0, 0, 1\right), \quad \quad
    \Xvec_2 = \left(\rmsChainLength\right)_2 \left(0, \frac{2\sqrt{2}}{3}, -\frac{1}{3}\right), \\
    \Xvec_3 &= \left(\rmsChainLength\right)_3 \left(\sqrt{\frac{2}{3}}, -\frac{\sqrt{2}}{3}, -\frac{1}{3}\right), \quad \quad
    \Xvec_4 = \left(\rmsChainLength\right)_4 \left(-\sqrt{\frac{2}{3}}, -\frac{\sqrt{2}}{3}, -\frac{1}{3}\right).
    \end{split}
\end{equation}
This will be the primary RVE of interest throughout.
For reference, \Fref{app:polydisperse-cross-link-structures} provides polydisperse cross-link RVEs for $\numChains \in [3, 8]$.
As in \Fref{sec:prelude}, $\rmsChainLength$ is calculated as a function of $\n$, $\cLen$, and $\chainFreeEnergy$, so $\inst$ implicitly determines all $\rmsChainLength$.

\subsection{Cross-link statistics} \label{sec:cross-link-statistics}

\begin{hlbreakable}We next address parameterization of the probability density functions $\probDensInst$ and $\probDens\left(\inst\right)$.\end{hlbreakable}
\begin{hlbreakable}Recall the three sources of network irregularity we are accounting for: scattered chain orientations, polydispersity in chain monomer number, and variation in the number of chains per cross-link junction.\end{hlbreakable}
Chain orientation scatter has been addressed in \Fref{sec:cross-link-mechanics} and \Fref{sec:network-mechanics-frame-indifference} (Assumption \ref{ass:isotropy}).
\begin{hlbreakable}To capture polydispersity and variation in cross-link coordination,\end{hlbreakable} we define probability density functions $\probDensChain = \probDensChain\left(\n\right)$ and $\probDensClinker = \probDensClinker\left(\numChains\right)$ for chain monomer number and cross-linker functionality, respectively.\footnote{Although $\n$ and $\numChains$ are discrete random variables, we use a continuous formulation since they can be represented as weighted sum of Dirac deltas, e.g., $\probDensChain\left(\n\right) = \sum_{i=1}^\infty p_{\n, i} \: \dirac{\n}{i}$ and $\probDensClinker\left(\numChains\right) = \sum_{j=1}^\infty p_{\numChains, j} \: \dirac{\numChains}{j}$ where $\sum_i p_{\n, i} = \sum_j p_{\numChains, j} = 1$. Approximating $\n$ and $\numChains$ as continuous (e.g., Gaussian) is sometimes convenient.}
We restrict attention to networks where all monomers have the same length, $\mLen$, so $\n_i$ uniquely determines $\cLen_i$ and $\left|\Xvec_i\right|$.\footnote{Although chain ends are placed at $\left|\Xvec_i\right| = \left(\rmsChainLength\right)_i$ from the origin to establish $\RVEdomain$, the initial chain length $\rmag_i = \left|\Xvec_i - \yc\right|$ may differ from $\left(\rmsChainLength\right)_i$ due to initial relaxation of the cross-link position $\yc \in \RVEdomain$.}\footnote{Dispersity in initial chain conformation could be considered within the modeling framework by allowing $\n_i$ and $\left|\Xvec_i\right|$ (or $\Xvec_i$, more generally) to vary independently through a joint probability distribution.}

To construct $\probDens(\inst)$, we invoke Flory's ideal network assumptions \cite{flory1953principles}: during cross-linking,
\begin{inparaenum}[(1)] \item all functional groups of the same type are equally reactive,
\item all groups react independently, and
\item no intramolecular reactions occur in finite species. \end{inparaenum}
Consequently, \emph{the number of monomers in each chain are independently and identically distributed} (i.i.d.).
\begin{hlbreakable}$\probDens(\inst)$ is thus parameterized as
\begin{equation} \label{eq:frame-invariant-chainspace-integration}
    \probDens(\inst) = \probDens(\collection{(\n_i, \cLen_i, \Xvec_i, \chainFreeEnergy_i)}_{i=1}^{\numChains}) = \probDensClinker(\numChains) \prod_{i=1}^{\numChains} \probDensChain(\n_i).
\end{equation}
Given \Eqref{eq:crosslink-isotropy}, integrating $\probDensInst$ over $\chainSpace$ follows as
\begin{align} \label{eq:chainspace-integration}
    \int_{\inst \in \chainSpace} \df{\inst} \: \probDensInst & = \int_{\genRotRef \in \SOThree} \df{\genRotRef} \: \probDensF{\genRotRef}\left(\genRotRef\right) \int_{0}^\infty \df{\numChains} \int_{0}^\infty ... \int_{0}^\infty \prod_{i=1}^{\numChains} \df{\n_i} \: \probDens\left(\collection{\left(\n_i, \cLen_i, \Xvec_i, \chainFreeEnergy_i\right)}_{i=1}^{\numChains}\right) \nonumber \\
    & = \frac{1}{\volSOThree} \int_{\genRotRef \in \SOThree} \df{\genRotRef} \int_{0}^\infty \df{\numChains} \: \probDensClinker(\numChains) \int_{0}^\infty ... \int_{0}^\infty \prod_{i=1}^{\numChains} \df{\n_i} \: \probDensChain(\n_i) = 1.
\end{align}\end{hlbreakable} 

Characterizing $\probDensChain$ experimentally prior to cross-linking is feasible, but whether monomer numbers remain i.i.d. after cross-linking is an open question.
Molecular dynamics simulations~\cite{ye2020molecular,barney2022fracture,zhang2024predicting,jang2015comparison} or \emph{in silico} network synthesis (e.g., Monte Carlo methods~\cite{bernhard2025phantom,bernhard2025pylimer}) could provide insight into realistic cross-link distributions.
Alternatively, $\probDensClinker$, $\probDensChain$, and $\probDens$ might be constructed from polymerization statistics — Miller-Macosko theory~\cite{miller1976new} or Pearson-Graessely theory~\cite{pearson1978structure} for $\probDensClinker$, and Flory-Stockmayer theory \cite{flory1953principles} for $\probDensChain$.
For simplicity, we retain the i.i.d. assumption on monomer numbers and assume simple forms for $\probDensClinker$ and $\probDensChain$.
Connecting this framework to experimental or \emph{in silico} network synthesis statistics is left for future work.

\section{Free rotation methods} \label{sec:FR-methods}

\subsection{$\SOThree$ representations and numerical optimization} \label{sec:FR-optimization}

There are numerical pathologies associated with the Euler angle representation when optimizing over rotations~\cite{kuehnel2003minimization}.
We instead use the exponential, or ``axis-angle'', representation.
The Rodrigues vector, $\rodVec \in \Reals^3$, describes the angle of rotation, $\varphi = \left|\rodVec\right|$, and an axis about which to rotate, $\dir{\vvec{u}} = \rodVec / \varphi$.
The rotation, $\genRot$, is obtained by taking the exponential of the generating skew-symmetric tensor
\begin{equation} \label{eq:rod}
	\tens{A} = \begin{pmatrix}
			0 & -\dir{u}_3 & \dir{u}_2 \\
			\dir{u}_3 & 0 & -\dir{u}_1 \\
			-\dir{u}_2 & \dir{u}_1 & 0
		\end{pmatrix}; \quad \genRot\left(\rodVec\right) = \exp\left(\varphi \tens{A}\right) = \identity + \sin \varphi \tens{A} + \left(1 - \cos \varphi\right) \tens{A}^2.
\end{equation}

While the form of $\clinkFreeRotationFreeEnergy$ given by \Eqref{eq:crosslink-free-energy-free-rotation} is convenient to work with analytically, we have found that the following equivalent form has better convergence numerically:
\begin{equation} \label{eq:num-free-energy-free-rotation}
    \clinkFreeRotationFreeEnergy\left(\F\right) = \inf_{\rodVec \in \Ball{2\pi}{\nullvec}, \yc \in \rvedomain} \sum_{i=1}^{\numChains} \chainFreeEnergy_i\left(\left|\F \genRot \Xvec_i - \genRot \yc\right|\right), 
\end{equation}
where $\Ball{2\pi}{\nullvec}$ is the ball of radius $2\pi$ centered at the origin, $\nullvec = \left(0, 0, 0\right)$.
The key differences in the formulation used for numerics are: \begin{inparaenum}[(1)] \item the explicit use of the Rodrigues representation and \item rotating the cross-link along with the chain ends as opposed to only rotating the chain ends. \end{inparaenum}
\Eqref{eq:num-free-energy-free-rotation} was approximated in Mathematica using \texttt{FindMinimum} for local optimization and \texttt{NMinimize} for global optimization.
\texttt{FindMinimum} uses a series of interior point methods for constrained (local) optimization; and \texttt{NMinimize} uses Nelder-Mead methods, supplemented by differential evolution.
The local optimization toolset was used for all of the $4$-chain numerical results because, as will be elaborated on shortly, the exact solution is known for the monodisperse case, which provides a good initial guess for all of the $4$-chain numerical examples presented herein.
Global optimization methods were used for the $6$-chain models because, in the $6$-chain case, the monodisperse solution is not explicitly known.
The choices for optimization methods were made for simplicity of implementation; however, we remark that the exponential representation for rotations is also amenable to gradient-based methods~\cite{kuehnel2003minimization}.

\subsection{Special case of monodispersity} \label{sec:FR-monodispersity}

\begin{hlbreakable}We now investigate RVEs representative of monodisperse polymer networks.
We consider a polymer network to be monodisperse if all its chains are composed of the same number of monomers $\n$ with the same monomer length $\mLen$, are described by the same free energy function $\chainFreeEnergy$, and thus have the same initial average root-mean-square chain length $\rmsChainLength$.\end{hlbreakable}
For certain monodisperse networks where the RVEs are given by the classical $4$-, $6$-, and $8$-chain models, the solution to the inner minimization of $\clinkFreeRotationFreeEnergy$ is known in closed form.
This solution is useful as a starting point for both analytical approximation of, and as an initial guess for numerical approximation of, the free rotation limit for polydisperse networks.
\begin{proposition}[Known solution for inner minimization and unification of discrete, monodisperse polymer network models] \label{prop:unification-FR}
\begin{hlbreakable}Consider a monodisperse RVE such that $\n_1 = \dots = \n_{\numChains} = \n$, $\cLen_1 = \dots = \cLen_{\numChains} = \cLen$ (implying $\mLen_1 = \dots = \mLen_{\numChains} = \mLen$), $\chainFreeEnergy_1 = \dots = \chainFreeEnergy_{\numChains} = \chainFreeEnergy$, and $\left|\Xvec_1\right| = \dots = \left|\Xvec_{\numChains}\right| = \Rmag$ where, without loss of generality, the origin is assumed to coincide with the center of mass (i.e., $\sum_{i=1}^{\numChains} \Xvec_i = \nullvec$).\end{hlbreakable}
Let \begin{equation} \label{eq:partition-stretch}
    \chainStretchStar = \frac{\Rmag}{\cLen} \sqrt{\frac{\pStretch{1}^2+\pStretch{2}^2+\pStretch{3}^2}{3}},
\end{equation}
and
\begin{equation}
    \left\{\genRotStar, \ycStar\right\} = \arg \inf_{\genRot \in \SOThree, \yc \in \rvedomain} \sum_{i=1}^{\numChains} \chainFreeEnergy\left(\left|\F \genRot \Xvec_i - \yc\right|\right)
\end{equation}
be the minimizing rotation and cross-link position, respectively, for a given RVE, $\collection{\left(\n, \cLen, \Xvec_i, \chainFreeEnergy\right)}_{i=1}^{\numChains}$, and deformation pair.
Provided the additional following conditions are satisfied:
\begin{enumerate}[(1)]
    \item the chain free energy is the same convex, non-decreasing function of stretch squared, $\chainStretch^2$, for each chain, and the chain free energy does not depend on the direction of stretch,
    \item the chain directions in the RVE, $\Xvec_i / \Rmag$, are consistent with one of the classical $4$-chain, $6$-chain, or $8$-chain RVEs (e.g., \Eqref{eq:four-chain-xs}, \Eqref{eq:six-chain-xs}, \Eqref{eq:eight-chain-xs}),\footnote{The classical $3$-chain RVE is not included here because it does not satisfy $\sum_{i=1}^{\numChains} \Xvec_i = \nullvec$.} 
\end{enumerate} 
we have the following results:
\begin{enumerate}[i)]
    \item a solution to the inner optimization problem for $\clinkFreeRotationFreeEnergy$ (i.e., $\left\{\genRotStar, \ycStar\right\}$) is known, and is such that $\chainStretch_i = \left|\F \genRotStar\Xvec_i - \ycStar\right| / \cLen = \chainStretchStar$ for all $i = 1, ..., \numChains$.
    It can be formulated explicitly as $\ycStar = \nullvec$ for all cases, and \begin{subequations} \label{eq:chain-Q}
    \begin{equation} \label{eq:$4$-chain-Q}
        \genRotStar = \genRot\left(\frac{\pi}{4} \: \euclid{1}\right)\genRot\left(\arccos\sqrt{\frac{2}{3}} \: \euclid{2}\right)\genRot\left(-\frac{\pi}{2} \: \euclid{3}\right)\pFrame \text{ for the $4$-chain RVE},
    \end{equation}
    and
    \begin{equation} \label{eq:8-chain-Q}
        \genRotStar = \pFrame \text{ for the $8$-chain RVE}.
    \end{equation}
    \end{subequations}
    \item all aforementioned RVEs produce the same constitutive model when the network has a consistent density of chains per unit volume (i.e., after an appropriate renormalization of $\crossDensity \rightarrow \crossDensity / \numChains$), which is equivalent to the classical $8$-chain model (e.g.,~\cite{arruda1993threee}).
\end{enumerate}

\end{proposition}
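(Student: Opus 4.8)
The plan is to bound $\Wchains$ below by Jensen's inequality and then attain that bound with the claimed $\{\genRotStar,\ycStar\}$; this proves (i), and (ii) is a one‑line substitution into \Eqref{eq:network-free-energy-density}. Write the common chain free energy as $\chainFreeEnergy(\rmag)=\hat\chainFreeEnergy(\rmag^2/\cLen^2)$ with $\hat\chainFreeEnergy$ convex and non‑decreasing (hypothesis (1)), and set $\cGreen=\trans\F\F$, so that $\trace\cGreen=\pStretch{1}^2+\pStretch{2}^2+\pStretch{3}^2$ and $(\chainStretchStar)^2=\Rmag^2\,\trace\cGreen/(3\cLen^2)$ by \Eqref{eq:partition-stretch}. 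The only facts about the RVE I will use are $\sum_i\Xvec_i=\nullvec$, $|\Xvec_i|=\Rmag$ for all $i$, and the \emph{isotropic second‑moment identity} $\sum_{i=1}^{\numChains}\Xvec_i\otimes\Xvec_i=\tfrac{\numChains\Rmag^2}{3}\identity$, which one checks by direct computation for \Eqref{eq:four-chain-xs} and for the $6$‑ and $8$‑chain point sets of hypothesis (2) (the $3$‑chain RVE is the exception only because $\sum_i\Xvec_i\neq\nullvec$).

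For any $\genRot\in\SOThree$ and $\yc\in\rvedomain$, convexity of $\hat\chainFreeEnergy$ gives
\begin{equation*}
\Wchains(\F,\genRot,\yc)=\sum_{i=1}^{\numChains}\hat\chainFreeEnergy\!\left(\frac{|\F\genRot\Xvec_i-\yc|^2}{\cLen^2}\right)\;\geq\;\numChains\,\hat\chainFreeEnergy\!\left(\frac{1}{\numChains\cLen^2}\sum_{i=1}^{\numChains}|\F\genRot\Xvec_i-\yc|^2\right).
\end{equation*}
Expanding the square kills the cross term because $\sum_i\Xvec_i=\nullvec$, and then the second‑moment identity yields $\tfrac{1}{\numChains}\sum_i|\F\genRot\Xvec_i-\yc|^2=\tfrac{1}{\numChains}\trace(\cGreen\,\genRot(\sum_i\Xvec_i\otimes\Xvec_i)\trans\genRot)+|\yc|^2=\tfrac{\Rmag^2}{3}\trace\cGreen+|\yc|^2$, which is independent of $\genRot$ and minimized over $\yc\in\rvedomain$ at $\yc=\nullvec$ (the barycenter of $\{\F\genRot\Xvec_i\}$, which lies in $\rvedomain$). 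Since $\hat\chainFreeEnergy$ is non‑decreasing, $\Wchains(\F,\genRot,\yc)\geq\numChains\,\hat\chainFreeEnergy((\chainStretchStar)^2)$ for all admissible $\genRot,\yc$.

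To attain the bound, set $\ycStar=\nullvec$; equality then forces all stretches $|\F\genRot\Xvec_i|/\cLen$ to be equal, hence equal to $\chainStretchStar$. By the change of variables $\yc\mapsto\pFrame\trans\polRot\yc$, $\genRot\mapsto\pFrame\genRot$ used in \Fref{sec:network-mechanics-frame-indifference}, it suffices to exhibit such a $\genRotStar$ when $\F=\strDiag$. For the $8$‑chain RVE, rotating into the principal frame ($\genRotStar=\pFrame$, which is $\identity$ when $\F=\strDiag$) gives $|\strDiag\Xvec_i|=\Rmag\sqrt{(\pStretch{1}^2+\pStretch{2}^2+\pStretch{3}^2)/3}$ for each cube vertex $\Xvec_i/\Rmag=\tfrac1{\sqrt3}(\pm1,\pm1,\pm1)$, i.e.\ stretch $\chainStretchStar$ identically. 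For the $4$‑chain RVE, one checks by direct computation that the stated composition $\genRot(\tfrac{\pi}{4}\euclid{1})\genRot(\arccos\!\sqrt{2/3}\,\euclid{2})\genRot(-\tfrac{\pi}{2}\euclid{3})$ carries the tetrahedral directions of \Eqref{eq:four-chain-xs} onto four vertices of the same cube, so the same identity applies; composing with $\pFrame$ handles general $\F$. For the $6$‑chain (octahedral) RVE an equal‑stretch $\genRotStar$ exists because one can always choose an orthonormal frame in which $\trans\genRot\cGreen\genRot$ has constant diagonal $\tfrac13\trace\cGreen$ (the equal‑diagonal property, e.g.\ via Schur--Horn or an intermediate‑value argument on a one‑parameter family of rotations). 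In each case $\Wchains(\F,\genRotStar,\nullvec)$ meets the lower bound, so $(\genRotStar,\ycStar)$ is a global minimizer at which every chain carries stretch $\chainStretchStar$; if in addition $\hat\chainFreeEnergy$ is strictly convex, strict Jensen forces $\ycStar=\nullvec$ and equal stretch at \emph{every} minimizer.

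For (ii), monodispersity makes every cross‑link free energy equal and orientation‑independent, $\clinkFreeRotationFreeEnergy(\F)=\Wchains(\F,\genRotStar,\ycStar)=\numChains\,\hat\chainFreeEnergy((\chainStretchStar)^2)$, so \Eqref{eq:network-free-energy-density} gives $\allClinksFreeRotationFreeEnergyDensity(\F)=\tfrac{\crossDensity}{2}\numChains\,\hat\chainFreeEnergy((\chainStretchStar)^2)$; the renormalization $\crossDensity\to\crossDensity/\numChains$ (fixing the chain count per unit volume across $\numChains\in\{4,6,8\}$) collapses this to $\tfrac{\crossDensity}{2}\hat\chainFreeEnergy((\chainStretchStar)^2)$ with $\chainStretchStar=\tfrac{\Rmag}{\cLen}\sqrt{(\pStretch{1}^2+\pStretch{2}^2+\pStretch{3}^2)/3}$ --- exactly the classical $8$‑chain (Arruda--Boyce) free energy density, the same for all three RVEs. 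The main obstacle is the bookkeeping in the attainment step: confirming that the explicit $4$‑chain $\genRotStar$ really maps \Eqref{eq:four-chain-xs} onto cube vertices, and placing $\pFrame$ correctly within $\genRotStar$ under the paper's polar‑decomposition conventions. By contrast, the Jensen lower bound and its attainment are insensitive to any lack of convexity of $\Wchains$ in the non‑convex variable $\genRot$, which is what upgrades the argument from local to global optimality.
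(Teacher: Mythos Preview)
Your proof is correct and follows essentially the same architecture as the paper's: a Jensen/convexity lower bound (the paper states this as an ``equipartition lemma''), the observation that $\sum_i|\F\genRot\Xvec_i-\yc|^2$ is $\genRot$-invariant and minimized at $\yc=\nullvec$ (the paper proves $\genRot$-invariance in an appendix via $\sum_i\chainStretch_i^2=\text{const}\times\Tr\changeCoord\cGreen$, which is exactly your second-moment identity in disguise), and then an explicit attainment step for each RVE.

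The one presentational difference worth noting is your $4$-chain attainment argument. The paper solves for the Euler angles sequentially---choosing $\eulerx,\eulery$ so that $\chainStretch_1=\chainStretchStar$, then $\eulerz$ so that $\chainStretch_2=\chainStretchStar$, and finally checking $\chainStretch_3=\chainStretch_4=\chainStretchStar$ by substitution. Your route---observing that the composite rotation sends the tetrahedral directions of \Eqref{eq:four-chain-xs} onto four vertices of the cube $(\pm1,\pm1,\pm1)/\sqrt3$, whence equal stretch under $\strDiag$ is immediate---is a cleaner geometric shortcut that explains \emph{why} those particular angles appear, rather than deriving them. Both are valid; yours buys insight, the paper's buys explicitness. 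Your treatment of the $6$-chain case (equal-diagonal similarity via an intermediate-value/Schur--Horn argument) and of part (ii) match the paper's.
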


The principal argument is divided into the following parts: \begin{inparaenum}[(1)] \item the equipartition property of convex functions, \item there is a lower bound on the sum of square chain stretches, $\sum_i \chainStretch_i^2$, with respect to both rotations of the RVE and translations of the cross-link junction, and \item a rotation and cross-link position exist that satisfy an equipartition of stretch. \end{inparaenum}
\begin{lemma}[Equipartition property] \label{lem:equipartition}
Given a convex, non-decreasing function $f$ and $n$ variables, $x_1, \dots, x_n$, whose sum is bounded from below by $y$, a solution to \begin{equation*}
    \min_{x_1, \dots, x_n} \sum_{i=1}^n f\left(x_i\right) \; 
    \text{ subject to } \: \sum_{i=1}^n x_i \geq y
\end{equation*}
is $x_1 = \dots = x_n = \frac{y}{n}$.
\end{lemma}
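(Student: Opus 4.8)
The plan is to verify directly that the stated point $x_1=\dots=x_n=y/n$ is a global minimizer, by showing that the objective at an arbitrary feasible point is at least its value there. The two ingredients are Jensen's inequality applied to the uniform average of the $x_i$, together with the fact that $f$ is non-decreasing. This is much cleaner than a Lagrange-multiplier / first-order-condition argument, which would need differentiability of $f$.

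Concretely, set $\bar{x} = \frac{1}{n}\sum_{i=1}^n x_i$. For any feasible $(x_1,\dots,x_n)$, convexity of $f$ gives Jensen's inequality for the uniform average,
\begin{equation*}
    \frac{1}{n}\sum_{i=1}^n f(x_i) \geq f(\bar{x}).
\end{equation*}
The constraint $\sum_i x_i \geq y$ means $\bar{x} \geq y/n$, so since $f$ is non-decreasing, $f(\bar{x}) \geq f(y/n)$. Chaining the two inequalities yields $\sum_{i=1}^n f(x_i) \geq n\, f(y/n)$ for every feasible point. The equipartition point $x_1=\dots=x_n=y/n$ is itself feasible (its sum equals $y$) and attains $\sum_i f(x_i) = n\, f(y/n)$; hence it is a minimizer.

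I expect no real obstacle here: the discrete Jensen inequality follows from the definition of convexity by induction and requires no continuity or differentiability of $f$, while the monotonicity of $f$ is exactly what lets us dispose of the inequality (rather than equality) constraint — raising the common value above $y/n$ can only increase the objective. The one point worth flagging is that uniqueness is not claimed: if $f$ is affine on some interval, other minimizers may exist, so it suffices to confirm that equipartition is \emph{a} solution, which the above does. Domain caveats (for instance, $f$ defined only on $[0,\infty)$, as for free energies of stretch squared) are harmless, since $y \geq 0$ in the intended application places $y/n$ in the domain.
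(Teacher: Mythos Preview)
Your proof is correct and takes essentially the same approach as the paper: Jensen's inequality (from convexity) applied to the uniform average of the $x_i$, followed by the non-decreasing property of $f$ to pass from $f(\bar{x})$ to $f(y/n)$. The paper's proof is just a terser one-line version of exactly this argument.
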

\begin{proof}
This follows from the definition of convexity,
\begin{equation*}
    \sum_{i=1}^n f\left(x_i\right) = n\left(\frac{1}{n} \sum_{i=1}^n f\left(x_i\right)\right) \geq n f\left(\frac{1}{n} \sum_{i=1}^n x_i\right) \geq n f\left(\frac{y}{n}\right).
\end{equation*}
and the last step is due to the non-decreasing property of $f$. 
\end{proof}
The significance of equipartition is that it establishes the optimality of the $\genRotStar$ and $\ycStar$ where the sum of the chain stretches squared is both minimal and equally partitioned between each of the $\numChains$ chains.
We now proceed with the remainder of the proof of \Fref{prop:unification-FR}.
\begin{proof}
    The first step is to consider the lower bound on the sum of chain stretches squared.
    One can show that this quantity is invariant with respect to rotations.
    Importantly, it establishes that the lower bound is achieved if and only if $\ycStar = \nullvec$.
    \paragraph*{Lower bound on $\sum_i \chainStretch_i^2$ and a solution for $\ycStar$.}
    Consider $\sum_i \chainStretch_i^2$ for general $\genRot$ and $\yc$:
    \begin{equation}
        \sum_{i=1}^{\numChains} \chainStretch_i^2 = \frac{1}{\cLen^2} \sum_{i=1}^{\numChains} \left|\F \genRot \Xvec_i - \yc\right|^2 = \frac{1}{\cLen^2} \left(\underbrace{\sum_{i=1}^{\numChains} \left(\F \genRot \Xvec_i\right) \cdot \left(\F \genRot \Xvec_i\right)}_{\text{conserved}} - \underbrace{2\sum_{i=1}^{\numChains} \left(\F \genRot \Xvec_i\right) \cdot \yc}_{=0} + \underbrace{\sum_{i=1}^{\numChains} \yc \cdot \yc}_{\geq 0}\right).
    \end{equation}
    The first term in the parentheses is conserved with respect to $\genRot \in \SOThree$~\cite{grasinger2023polymer}.
    (For completeness, this result is reproduced in \Fref{app:unification}.)
    The second term vanishes.
    Indeed,
    \begin{equation}
        2\sum_{i=1}^{\numChains} \left(\F \genRot \Xvec_i\right) \cdot \yc = 2\sum_{i=1}^{\numChains} \Xvec_i \cdot \left(\trans{\genRot} \trans{\F} \yc\right) = 2 \left(\trans{\genRot} \trans{\F} \yc\right) \cdot \underbrace{\left(\sum_{i=1}^{\numChains} \Xvec_i\right)}_{=\nullvec}.
    \end{equation}
    Finally, the last term is clearly nonnegative.
    What is instructive about this analysis is that for monodisperse RVEs that satisfy the conditions of the proposition, $\ycStar = \nullvec$ is a part of a solution for all possible $\F$.

    \paragraph*{$\genRotStar$ that satisfies equipartition of stretch.}
    The final step is to establish that, provided $\ycStar = \nullvec$, a $\genRotStar$ exists such that equipartition of stretch is satisfied.
    This is indeed the case and, in fact, an explicit solution can be given for the $4$- and $8$-chain cases~\cite{grasinger2023polymer}.
    
    \item ($4$-chain case)
    To show the $4$-chain case, we decompose the rotation of interest $\genRotStar$ as $\genRotStar = \genRot' \pFrame$ where, recall, $\pFrame$ rotates the Euclidean basis to align with the principal frame.
    Then it suffices to show that $\genRot'$ exists such that the chain stretches are all equal.
    Let $\genRot' = \genRot\left(\eulerx \euclid{1}\right) \genRot\left(\eulery \euclid{2}\right) \genRot\left(\eulerz \euclid{3}\right)$. 
    Then the chain with end-to-end vector $\Xvec_1 \left(= \Rmag \left(0, 0, 1\right)\right)$ (see \Eqref{eq:four-chain-xs}) is deformed such that $\Xvec_1 \rightarrow \F \genRot' \pFrame \Xvec_1$ and
    \begin{equation} \label{eq:stretch-41}
        \chainStretch_1 = \frac{\Rmag}{\cLen} \sqrt{\pStretch{1}^2 \sin^2 \eulery + \pStretch{2}^2 \sin^2 \eulerx \cos^2 \eulery + \pStretch{3}^2 \cos^2 \eulerx \cos^2 \eulery}.
    \end{equation}
    Here $\eulerx$ and $\eulery$ can be chosen such that $\chainStretch_1 = \chainStretchStar$.
    One such solution is $\eulerx = \pi / 4$ and $\eulery = \arccos \sqrt{2 / 3}$.
    Substituting this solution for $\eulerx$ and $\eulery$ into $\genRot'$, we see that $\Xvec_2 \left(= \Rmag \left(0, 2\sqrt{2}/3, -1/3\right)\right)$ is deformed such that
    \begin{equation} \label{eq:stretch-42}
        \chainStretch_2 = \frac{\Rmag}{9 \cLen} \sqrt{
            3\left(\pStretch{1} + 4 \pStretch{1} \sin \eulerz\right)^2 + 
                \pStretch{2}^2\left(6 \cos \eulerz + \sqrt{3}\left(1 - 2\sin \eulerz\right)\right)^2 +
                \pStretch{3}^2\left(6 \cos \eulerz - \sqrt{3}\left(1 - 2\sin \eulerz\right)\right)^2
            }.
    \end{equation}
    Now $\eulerz$ can be chosen such that $\chainStretch_2 = \chainStretchStar$.
    One such solution is $\eulerz = -\pi / 2$.
    Substituting into $\genRot'$ we see that $\chainStretch_1 = \dots = \chainStretch_4 = \chainStretchStar$.
    
    \item ($6$-chain case)
    The stretch for each chain in the RVE can be formulated as
    \begin{equation}
        \chainStretch_i = \frac{1}{\cLen} \sqrt{\Xvec_i \cdot \changeCoord{\cGreen} \Xvec_i}, \quad \text{ where } \changeCoord{\cGreen} = \trans{\genRot} \cGreen \genRot
    \end{equation}
    is a real proper orthogonal similarity transformation of $\cGreen$.
    There exists a real proper orthogonal similarity transformation where all of the elements on the diagonal are equal~\cite{horn1985matrix}. 
    The existence of this similarity transformation can be understood as follows.
    Consider rotating the coordinate system about $\euclid{3}$ by $\varphi$.
    One can permute $\changeCoord{\cGreenSym}_{11}$ and $\changeCoord{\cGreenSym}_{22}$ by taking $\varphi = \pi / 2$.
    Further, this transformation of $\changeCoord{\cGreenSym}_{11}$ and $\changeCoord{\cGreenSym}_{22}$ is continuous with respect to $\varphi$ so that there exists a $\varphi$ such that $\changeCoord{\cGreenSym}_{11} = \changeCoord{\cGreenSym}_{22}$.
    Similar arguments can be made about $\changeCoord{\cGreenSym}_{11}$ and $\changeCoord{\cGreenSym}_{33}$ (by rotating about $\euclid{2}$), and about $\changeCoord{\cGreenSym}_{22}$ and $\changeCoord{\cGreenSym}_{33}$ (by rotating about $\euclid{1}$).
    Thus, there exists a $\genRotStar$ such that $\changeCoord{\cGreenSym}_{11} = \changeCoord{\cGreenSym}_{22} = \changeCoord{\cGreenSym}_{33} = 1 / 3 \Tr \cGreen$.
    Clearly, in this case, the chain stretches satisfy \Eqref{eq:partition-stretch}.
    Unfortunately, the exact form of $\genRotStar$ as a function of $\F$ is not known (at least not to the authors).
    
    \item ($8$-chain case)
    The $\genRotStar$ which satisfies \Eqref{eq:partition-stretch} for the $8$-chain RVE is well-known and is given by $\genRotStar = \pFrame$~\cite{arruda1993threee}.
\end{proof}

\begin{remark} \label{rem:mono-topology}
\emph{Implications for some isotropic, monodisperse polymer networks.}
    Common chain free energies are convex, non-decreasing functions of $\chainStretch^2$~\cite{grasinger2023polymer} (e.g., the Gaussian chain, the Kuhn and Gr\"{u}n approximation to the freely-jointed chain, and the worm-like chain).
    This, along with \Fref{prop:unification-FR}, suggests that isotropic, monodisperse polymer networks which \begin{inparaenum}[(1)] \item are free to locally rotate and \item consist of chains with convex free energies \end{inparaenum} have an elastic response which is insensitive to -- and potentially invariant with respect to -- network topology.
\end{remark}

\begin{remark}
\emph{Caveats for polydisperse polymer networks.}
    There are two important differences for polydisperse polymer networks: \begin{inparaenum}[(1)]
       \item it is unlikely that $\left|\Xvec_1\right| = ... = \left|\Xvec_{\numChains}\right|$ holds; as a result, $\sum_i \chainStretch_i^2$ is no longer conserved with respect to rotations and $\ycStar = \nullvec$ may no longer be an optimal cross-link position, and
       \item the chain free energies differ by chain, so equipartition of stretch is no longer an optimal solution. 
    \end{inparaenum}
    For instance, if the network consists of Gaussian chains with differing numbers of monomers, then each chain has a different chain free energy, $\chainFreeEnergy_i\left(\chainStretch\right) = \left(3 / 2\right) \n_i \kB \T \chainStretch^2$ (i.e., the entropic springs have different stiffnesses).
    However, as mentioned, the known solution for monodisperse networks may serve as inspiration towards initial guesses for numerical approximations of $\clinkFreeRotationFreeEnergy$, and closed-form, analytical approximations of $\clinkFreeRotationFreeEnergy$.
\end{remark}

\subsection{Closed-form approximation}\label{sec:FR-closed-form-approx}

Computation of the free rotation limit may pose some difficulties as it consists of averaging over the space of all cross-link RVEs, and inside the averaging operation is an optimization problem.
Further, the representation for RVEs may be high-dimensional.
Numerical integration can become prohibitively expensive in such a high-dimensional space, especially when the integrand is expensive to compute.
To address these difficulties, we seek to derive analytical, closed-form approximations to the inner optimization problem for $\clinkFreeRotationFreeEnergy$.

Consider a polydisperse RVE whose monodisperse counterpart has an exact solution to the rotational and cross-link positional relaxation problem ($4$-chain and $8$-chain).
We reformulate the inner optimization problem for $\clinkFreeRotationFreeEnergy$ (towards approximation in the limit of a ``small'' amount of polydispersity) by considering perturbations about the known solution for the monodisperse case.
To begin, we assume incompressibility, which implies that $\F = \diag\left(\pStretch{1}, \pStretch{2}, 1 / \pStretch{1} \pStretch{2} \right)$ without loss of generality.
This takes the form
\begin{equation} \label{eq:FR-FED-perturb-approx}
    \clinkFreeRotationFreeEnergyApprox{\nPolydisperseSet} = \inf_{\genRot \in \SOThree, \yc \in \rvedomain} \sum_{i=1}^{\numChains} \chainFreeEnergy_i\left(\left|\F \genRot \Xvec_i - \yc\right|\right) = \inf_{\delRot, \delYc} \sum_{i=1}^{\numChains} \chainFreeEnergy_i\left(\left|\F \genRot\left(\delRot\right) \genRotMono \Xvec_i - \left(\ycMono + \delYc\right)\right|\right)
\end{equation}
where we introduce $\clinkFreeRotationFreeEnergyApprox{\nPolydisperseSet}$ to denote the free energy of an RVE in the free rotation limit with \begin{hlbreakable}$\ncollection{\nPolydisperseSet}$\end{hlbreakable} monomers in its chains, where $\genRotMono$ and $\ycMono$ are the monodisperse solutions to the rotational and cross-link positional relaxation problem, respectively, and where $\genRot\left(\delRot\right)$ is a small additional rotation and $\delYc$ is a perturbation of the cross-link position.
Both $\delRot$ and $\delYc$ are assumed small; that is, $\smallparam = \max\left(\left|\delRot\right|, \left|\delYc\right| / \left(\rmsChainLength\right)_{\max} \right) \ll 1$ where $\left(\rmsChainLength\right)_{\max} = \max\left(\left(\rmsChainLength\right)_1,\dots,\left(\rmsChainLength\right)_{\numChains}\right)$.
For brevity, let the inner free energy cost be denoted by
\begin{equation} \label{eq:inner-FED}
    \clinkInnerFreeRotationFreeEnergy\left(\delRot, \delYc\right) = \sum_{i=1}^{\numChains} \chainFreeEnergy_i\left(\left|\F \genRot\left(\delRot\right) \genRotMono \Xvec_i - \left(\ycMono + \delYc\right)\right|\right),
\end{equation}
and then we approximate derivatives with respect to $\delRot$ and $\delYc$ to linear order in $\delRot$ and $\delYc$ as\footnote{Note that $\delRot$ derivatives are with respect to components of the Rodrigues vector, not rotational derivatives (see~\cite{kuehnel2003minimization}, for example). The approximation made here can be seen as a leading order perturbation, in contrast to a full iteration of Newton's method, which can lead to complex and unstable expressions when the initial guess is not ``close enough''.}
\begin{align}
    \label{eq:perturb-deriv-1}
    \partialx{\clinkInnerFreeRotationFreeEnergy}{\left(\delRot\right)} &= \partialx{\clinkInnerFreeRotationFreeEnergy}{\left(\delRot\right)}\Bigg|_{\left(\nullvec, \nullvec\right)} + 
    \left(\frac{\partial^2 \clinkInnerFreeRotationFreeEnergy}{\partial \delYc \: \partial \delRot}\Bigg|_{\left(\nullvec, \nullvec\right)}\right) \delYc + \left(\frac{\partial^2 \clinkInnerFreeRotationFreeEnergy}{\partial \delRot \: \partial \delRot}\Bigg|_{\left(\nullvec, \nullvec\right)}\right) \delRot +  \orderOf{\smallparam^2}, \\
    \label{eq:perturb-deriv-2}
    \partialx{\clinkInnerFreeRotationFreeEnergy}{\left(\delYc\right)} &= \partialx{\clinkInnerFreeRotationFreeEnergy}{\left(\delYc\right)}\Bigg|_{\left(\nullvec, \nullvec\right)} + \left(\frac{\partial^2 \clinkInnerFreeRotationFreeEnergy}{\partial \delYc \: \partial \delYc}\Bigg|_{\left(\nullvec, \nullvec\right)}\right) \delYc +
    \left(\frac{\partial^2 \clinkInnerFreeRotationFreeEnergy}{\partial \delRot \: \partial \delYc}\Bigg|_{\left(\nullvec, \nullvec\right)}\right) \delRot + \orderOf{\smallparam^2}.
\end{align}
Dropping higher order terms (i.e., $\orderOf{\smallparam^2}$), setting \Eqref{eq:perturb-deriv-1} and \Eqref{eq:perturb-deriv-2} equal to $\nullvec$, and solving for $\delRot$ and $\delYc$, one can arrive at an approximation that minimizes $\clinkInnerFreeRotationFreeEnergy$. We denote the solution for $\delRot$ and $\delYc$ as $\widetilde{\delRot}$ and $\widetilde{\delYc}$, respectively, such that
\begin{subequations} \label{eq:general-RVE-analytical-model}
\begin{align}
    \widetilde{\delRot} &= \widetilde{\delRot}\left(\nPolydisperseSet\right), \\
    \widetilde{\delYc} &= \widetilde{\delYc}\left(\nPolydisperseSet\right).
\end{align}
As derived, the approximation above, by itself, does not satisfy material frame indifference.
The reason is as follows: the rotation which minimizes the free energy for the monodisperse case is not unique; in other words, there are other equally valid choices of $\genRotMono$ to perturb about via $\delRot$ ($\genRotStar$ as expressed in \Eqref{eq:chain-Q} is only one such choice).
Different choices of rotation can result in different approximations for the polydisperse case because the varying chain lengths break some symmetry.
For example, consider the rotation of a cube such that its edges are aligned with the principal directions, $\pDir_i$.
Naively, one could expand about all rotations that align the chains in the (undeformed) RVE with the diagonals of the cube, and derive an approximate solution for each.
Then, the best approximation can be chosen as the one with minimal free energy.
Luckily, further derivations are unnecessary.
Different choices of $\genRotMono$ about which to expand correspond with certain permutations of the monomer numbers, $\ncollection{\nPolydisperseSet}$.
Therefore, approximations to the rotational and cross-link position relaxations can be expressed as
\begin{align}
     \sigma^* &= \arg \min_{\sigma \in \genGroup} \clinkInnerFreeRotationFreeEnergy\left(\widetilde{\delRot}\left(\sigma \cdot \ncollection{\nPolydisperseSet}\right), \widetilde{\delYc}\left(\sigma \cdot \ncollection{\nPolydisperseSet}\right)\right) \\
     \collection{\delRot, \delYc} & = \collection{\widetilde{\delRot}\left(\sigma^* \cdot \ncollection{\nPolydisperseSet}\right), \widetilde{\delYc}\left(\sigma^* \cdot \ncollection{\nPolydisperseSet}\right)} 
\end{align}
\end{subequations}
where $\genGroup \leq \mathbb{S}_{\numChains}$ is the symmetry group of the RVE structure, which is a subgroup of $\mathbb{S}_{\numChains}$, the symmetric group of $\numChains$ elements, and $\sigma \cdot \generic$ denotes the group action of $\sigma$ on $\generic$. In other words, the approximate solution for the cross-link rotational and positional relaxation is taken to be the one for which the inner free energy is minimal, where the minimization is over all permutations of the monomers numbers that are consistent with the underlying symmetry of the RVE.
For the $4$-chain RVE, $\genGroup \cong \mathbb{S}_{4}$.

The free energy of the polydisperse RVE, $\clinkFreeRotationFreeEnergyApprox{\nPolydisperseSet}$, can then be approximated by substituting \Eqref{eq:general-RVE-analytical-model} into \Eqref{eq:inner-FED}. 

\begin{remark}
\emph{Distinction between linearization about the monodisperse RVE solution and linear elasticity.}
    The closed form approximation given by \Eqref{eq:general-RVE-analytical-model} was obtained by linearization of its governing equations; however, the approximation is distinct from \emph{linear elasticity}.
    The typical assumption underlying linear elasticity is that $\left\lVert\takeGrad{\mathbf{u}}\right\rVert$ is small such that $\orderOf{\left\lVert\takeGrad{\mathbf{u}}\right\rVert^2}$ terms can be neglected, where $\mathbf{u}$ is the deformation at a material point.
    Instead, here, \emph{the linearization is about the monodisperse RVE solution, which is known for finite deformations}.
    The approximation can be expected to be at its most accurate when the amount of polydispersity (i.e., variance in $\n_i$) within the RVE is small enough.
    We will show through example that this is indeed the case.
    The closed-form approximation agrees well with numerical solutions at small deformations in all cases. For nearly monodisperse networks, the error increases gradually with deformation. For highly polydisperse networks, the error grows more rapidly.
\end{remark}

\paragraph*{Gaussian networks with cross-link degree 4.}
\begin{hlbreakable}Consider the simplest cross-link RVE for which a closed-form approximation can be found: the polydisperse $4$-chain RVE (with $\Xvec_i$ given by \Eqref{eq:four-chain-xs}) consisting of chains with Gaussian free energy given by \Eqref{eq:gauss-free-energy}.\end{hlbreakable}
Here, the closed-form approximation for the rotational and cross-link positional relaxation problem is found to be
\begin{subequations} \label{eq:$4$-chain-analytical-model}
\begin{align}
   \widetilde{\delRotMag}_1\left(\n_1, \n_2, \n_3, \n_4\right) &= \delRotMag_1\left(\n_1, \n_2, \n_3, \n_4\right) = 0,\\
   \widetilde{\delRotMag}_2\left(\n_1, \n_2, \n_3, \n_4\right) &= \frac{\auxVar{1}\left(\auxVar{4}\pStretch{1}^4\pStretch{2}^4 - \auxVar{2}^2\pStretch{1}^2 - \auxVar{3}^2\pStretch{2}^2\right)}{\auxVar{3}\left(\auxVar{1}^2\pStretch{1}^4\pStretch{2}^4 + \auxVar{2}^2\pStretch{1}^2 + \auxVar{3}^2\pStretch{2}^2\right)}, \\
   \widetilde{\delRotMag}_3\left(\n_1, \n_2, \n_3, \n_4\right) &= \frac{\auxVar{2}\left(\auxVar{1}^2\pStretch{1}^4\pStretch{2}^4 + \auxVar{5}\pStretch{1}^2 + \auxVar{3}^2\pStretch{2}^2\right)}{\auxVar{3}\left(\auxVar{1}^2\pStretch{1}^4\pStretch{2}^4 + \auxVar{2}^2\pStretch{1}^2 + \auxVar{3}^2\pStretch{2}^2\right)}, \\
   \widetilde{\delYcMag}_1\left(\n_1, \n_2, \n_3, \n_4\right) &= -\frac{\mLen \auxVar{3} \auxVar{6} \ngeomean^2 \: \pStretch{1} \pStretch{2}^2}{\sqrt{3} \: \auxVar{7} \left(\auxVar{1}^2 \pStretch{1}^4 \pStretch{2}^4 + \auxVar{2}^2 \pStretch{1}^2 + \auxVar{3}^2 \pStretch{2}^2\right)}, \\
   \widetilde{\delYcMag}_2\left(\n_1, \n_2, \n_3, \n_4\right) &= \frac{\mLen \auxVar{2} \auxVar{6} \ngeomean^2 \: \pStretch{1}^2 \pStretch{2}}{\sqrt{3} \: \auxVar{7} \left(\auxVar{1}^2 \pStretch{1}^4 \pStretch{2}^4 + \auxVar{2}^2 \pStretch{1}^2 + \auxVar{3}^2 \pStretch{2}^2\right)}, \\
   \widetilde{\delYcMag}_3\left(\n_1, \n_2, \n_3, \n_4\right) &= \frac{\mLen \auxVar{1} \auxVar{6} \ngeomean^2 \: \pStretch{1}^3 \pStretch{2}^3}{\sqrt{3} \: \auxVar{7} \left(\auxVar{1}^2 \pStretch{1}^4 \pStretch{2}^4 + \auxVar{2}^2 \pStretch{1}^2 + \auxVar{3}^2 \pStretch{2}^2\right)},\\
   \sigma^* &= \arg \min_{\sigma \in \mathbb{S}_4} \clinkInnerFreeRotationFreeEnergy\left(\widetilde{\delRot}\left(\sigma \cdot \ncollection{\n_1, \n_2, \n_3, \n_4}\right), \widetilde{\delYc}\left(\sigma \cdot \ncollection{\n_1, \n_2, \n_3, \n_4}\right)\right) \\
   \collection{\delRot, \delYc} &= \collection{\widetilde{\delRot}\left(\sigma^* \cdot \ncollection{\n_1, \n_2, \n_3, \n_4}\right), \widetilde{\delYc}\left(\sigma^* \cdot \collection{\n_1, \n_2, \n_3, \n_4}\right)},
\end{align}
where 
\begin{align}
    \auxVar{1} =& \sqrt{\n_1 \n_2 \n_3} - \sqrt{\n_1 \n_2 \n_4} - \sqrt{\n_1 \n_3 \n_4} + \sqrt{\n_2 \n_3 \n_4}, \\
    \auxVar{2} =& \sqrt{\n_1 \n_2 \n_3} - \sqrt{\n_1 \n_2 \n_4} + \sqrt{\n_1 \n_3 \n_4} - \sqrt{\n_2 \n_3 \n_4}, \\
    \auxVar{3} =& \sqrt{\n_1 \n_2 \n_3} + \sqrt{\n_1 \n_2 \n_4} - \sqrt{\n_1 \n_3 \n_4} - \sqrt{\n_2 \n_3 \n_4}, \\
    \auxVar{4} =& 2 \left(-2 \n_2 \n_3 \sqrt{\n_1 \n_4} +\n_2 \n_3 \n_4+\n_1 \left(n_4 \left(\sqrt{\n_2}-\sqrt{\n_3}\right){}^2+\n_2 \n_3\right)\right), \\
    \auxVar{5} =& -2 \left(-2 \n_2 \n_4  \sqrt{\n_1 \n_3} +\n_2 \n_3 \n_4+\n_1 \left(-2 \n_3 \sqrt{\n_2 \n_4} +\n_3 \n_4+\n_2 \left(n_3+\n_4\right)\right)\right), \\
    \auxVar{6} =& 3 \n_2 \n_3 \n_4 - 2 \sqrt{\n_1} \left(\n_3 \n_4\sqrt{\n_2} +\n_2 \left(\n_3 \sqrt{\n_4}+  \n_4\sqrt{\n_3}\right)\right)\\
    & +\n_1 \left(3 \n_3 \n_4+\n_2 \left(-2 \sqrt{\n_3 \n_4}+3 \n_3+3 \n_4\right)-2 \sqrt{\n_2} \left(\n_3 \sqrt{\n_4} + \n_4 \sqrt{\n_3}\right)\right), \nonumber \\
    \auxVar{7} =& \n_1 \n_2 \n_3 + \n_1 \n_2 \n_4 + \n_1 \n_3 \n_4 + \n_2 \n_3 \n_4,
\end{align}
\end{subequations}
\begin{hlbreakable} and $\ngeomean = \left(\n_1 \n_2 \n_3 \n_4\right)^{1/4}$ is the geometric mean of monomer number.
This is distinct from the arithmetic mean of monomer number, $\nmean = \left(\n_1 + ... + \n_4\right) / 4$.
In the context of this work, an interesting measure of polydispersity is the ratio of the geometric mean to the arithmetic mean, $\eff = \ngeomean / \nmean$.
This is often referred to as the ``efficiency'' of a set of values.
The efficiency is a measure of the homogeneity of values in a given dataset; it is $1$ when all of the values are the same and approaches $0$ as the values become increasing disparate.
We consider the influence of $\eff$ on polydisperse cross-link mechanics in \Fref{sec:gaussian-FR}.\end{hlbreakable}

\section{Frame averaging methods} \label{sec:FA-methods}

\subsection{$\SOThree$ representations and numerical optimization} \label{sec:FA-optimization}

Unlike the free rotation limit, the frame averaging limit does not require optimization over rotations, but instead requires integration over $\SOThree$, as per \Eqref{eq:crosslink-free-energy-frame-average-integral-orientations}.
We approximate integration over $\SOThree$ via numerical quadrature using Euler angle parameterization
\begin{equation} \label{eq:so3-quadrature-general}
\frac{1}{\volSOThree} \int_{\genRotRef \in \SOThree} \df{\genRotRef} \genFunc\left(\genRotRef\right) \approxeq \sum_{i=1}^{\numSOThreeQuad} \weightFactorSOThreeQuad_i \genFunc\left(\left(\quadPointSOThreeQuad\right)_i\right),
\end{equation}
where $\genFunc$ is some general (scalar, vectorial, or tensorial) function of $\genRotRef$, $\collection{\left(\quadPointSOThreeQuad\right)_i}_{i=1}^{\numSOThreeQuad}$ is the set of $\SOThree$ quadrature points, $\collection{\weightFactorSOThreeQuad_i}_{i=1}^{\numSOThreeQuad}$ is the set of corresponding weight factors, and $\numSOThreeQuad$ is the number of quadrature points.
Additional details of $\SOThree$ quadrature are provided in \Fref{app:SO3-quadrature}.
The $\SOThree$ quadrature scheme can be constructed from any spherical quadrature rule. 
Since full network and microsphere models~\cite{treloar1979non,wu1992improved,wu1993improved,miehe2004micro} already employ spherical quadrature for integration over chain orientations, existing implementations require only minor modification to perform $\SOThree$ integration.

While the form of $\clinkFrameAveragingFreeEnergy$ given by \Eqref{eq:crosslink-free-energy-frame-average} and \Eqref{eq:crosslink-free-energy-frame-average-integral-orientations} are convenient to work with analytically, we have found that the following equivalent forms have better convergence numerically:
\begin{align} 
    \label{eq:num-crosslink-free-energy-frame-average} \clinkFrameAveragingFreeEnergy\left(\F, \genRotRef\right) & = \inf_{\yc \in \rvedomain} \sum_{i=1}^{\numChains} \chainFreeEnergy_i\left(\left|\F \genRotRef \Xvec_i - \genRotRef \yc\right|\right), \\
    \label{eq:num-crosslink-free-energy-frame-average-integral-orientations} \clinkFrameAveragingFreeEnergy\left(\F\right) & = \frac{1}{\volSOThree} \int_{\genRotRef \in \SOThree} \df{\genRotRef} \left(\inf_{\yc \in \rvedomain} \sum_{i=1}^{\numChains} \chainFreeEnergy_i\left(\left|\F \genRotRef \Xvec_i - \genRotRef \yc\right|\right)\right).
\end{align}

\Eqref{eq:num-crosslink-free-energy-frame-average} and \Eqref{eq:num-crosslink-free-energy-frame-average-integral-orientations} were approximated in Python via the \texttt{NumPy}~\cite{harris2020array} and \texttt{SciPy}~\cite{virtanen2020scipy} packages. 
For the $\SOThree$ quadrature involved in \Eqref{eq:num-crosslink-free-energy-frame-average-integral-orientations}, we need to define a foundational spherical quadrature scheme and the number of spin discretization points, $\numSpinQuad$ (see \Fref{app:SO3-quadrature}).
We use the 74-point, 13-degree Ba\v{z}ant and Oh~\cite{bavzant1986efficient} formula as the foundational spherical quadrature scheme, along with $\numSpinQuad=16$, for all frame averaging limit results in this work (due to symmetry considerations, this amounts to $\numSOThreeQuad=592$).
For our frame averaging limit results, we use the Constrained Optimization BY Quadratic Approximations (COBYQA) algorithm~\cite{rago_thesis,razh_cobyqa} offered by the \texttt{scipy.optimize.minimize} function for local constrained optimization (unless specified otherwise, e.g., \Figref{fig:FA-WGauss-fit}).
Even though our computational implementation is capable of utilizing other local constrained optimization methods\footnote{These include the Constrained Optimization BY Linear Approximation (COBYLA) algorithm~\cite{Zhang_2023} and the constrained trust region method~\cite{conn2000trust} offered by the \texttt{scipy.optimize.minimize} function.} and global constrained optimization methods,\footnote{These include the simplicial homology global optimization~\cite{endres2018simplicial} and differential evolution~\cite{storn1997differential} methods provided by the \texttt{scipy.optimize.shgo} and \texttt{scipy.optimize.differential\_evolution} functions, respectively.} we found the results obtained from the COBYQA method to be both efficient and accurate.
This computational implementation is provided in a freely available \texttt{polydisperse-polymer-networks} \texttt{GitHub} repository \citep{jasonmulderrig_polydisperse_polymer_networks_2025}. 

\subsection{Closed-form approximation}\label{sec:FA-closed-form-approx}

As with the free rotation assumption, we outline a procedure for deriving an analytical, closed-form approximation to the inner optimization problem for $\clinkFrameAveragingFreeEnergy$.
We propose $\ycMono=\nullvec$ as the initial guess for the junction position, motivated by the free rotation monodisperse case. 
For frame averaging monodisperse RVEs with Gaussian chains and $\sum_{i=1}^{\numChains} \Xvec_i = \nullvec$, $\ycStarQO = \nullvec$ is the exact equilibrium solution (as per \Eqref{eq:gaussian-chain-unique-junction-position-relaxation}). 
(This also holds exactly for monodisperse $6$-chain and $8$-chain RVEs with Kuhn-Gr\"un and wormlike chain models.\footnote{It can be seen that this satisfies force balance for the classical $6$ and $8$-chain RVEs because for every chain force $\rdir_i \chainFreeEnergy'_i\left(\left|\F \genRotRef \Xvec_i\right|\right)$ there is an equal and opposite force. By \Fref{prop:uniqueness}, this is the unique minimum of $\Wchains$.}) 
This suggests $\ycMono = \nullvec$ provides a reasonable guess about which to expand for approximating a solution to the polydisperse equilibrium equation.
This takes the form
\begin{align}
    \label{eq:FA-FED-perturb-approx} \clinkFrameAveragingFreeEnergyApprox{\nPolydisperseSet}\left(\F, \genRotRef\right) & = \inf_{\yc \in \rvedomain} \sum_{i=1}^{\numChains} \chainFreeEnergy_i\left(\left|\F \genRotRef \Xvec_i - \yc\right|\right) \approx \inf_{\delYcQO} \sum_{i=1}^{\numChains} \chainFreeEnergy_i\left(\left|\F \genRotRef \Xvec_i - \left(\ycMono + \delYcQO\right)\right|\right), \\
    \clinkFrameAveragingFreeEnergyApprox{\nPolydisperseSet}\left(\F\right) & = \frac{1}{\volSOThree} \int_{\genRotRef \in \SOThree} \df{\genRotRef} \left(\clinkFrameAveragingFreeEnergyApprox{\nPolydisperseSet}\left(\F, \genRotRef\right)\right),
\end{align}
where we introduce $\clinkFrameAveragingFreeEnergyApprox{\nPolydisperseSet}$ to denote the free energy of an RVE in the frame averaging limit with \begin{hlbreakable}$\ncollection{\nPolydisperseSet}$\end{hlbreakable} monomers in its chains, 
$\delYcQO$ is a small perturbation of the cross-link position, and $\smallparam = \left|\delYcQO\right| / \left(\rmsChainLength\right)_{\max} \ll 1$.
For brevity, let the inner free energy cost be denoted by
\begin{equation} \label{eq:inner-FED-FA}
    \clinkInnerFrameAveragingFreeEnergy\left(\delYcQO, \genRotRef\right) = \sum_{i=1}^{\numChains} \chainFreeEnergy_i\left(\left|\F \genRotRef \Xvec_i - \left(\ycMono + \delYcQO\right)\right|\right),
\end{equation}
and then we approximate the derivative with respect to $\delYcQO$ to linear order in $\delYcQO$ as
\begin{equation}
    \label{eq:perturb-deriv-FA-2}
    \partialx{\clinkInnerFrameAveragingFreeEnergy}{\left(\delYcQO\right)} = \partialx{\clinkInnerFrameAveragingFreeEnergy}{\left(\delYcQO\right)}\Bigg|_{\nullvec} + \left(\frac{\partial^2 \clinkInnerFrameAveragingFreeEnergy}{\partial \delYcQO \: \partial \delYcQO}\Bigg|_{\nullvec}\right) \delYcQO + \orderOf{\smallparam^2}.
\end{equation}
To arrive at an approximation that minimizes $\clinkInnerFrameAveragingFreeEnergy$, we solve for $\delYcQO$ by dropping higher order terms (i.e., $\orderOf{\smallparam^2}$) and setting \Eqref{eq:perturb-deriv-FA-2} equal to $\nullvec$,
\begin{equation}
    \label{eq:FA-analytical-model}
    \delYcQO = -\left(\frac{\partial^2 \clinkInnerFrameAveragingFreeEnergy}{\partial \delYcQO \: \partial \delYcQO}\Bigg|_{\nullvec}\right)^{-1}\partialx{\clinkInnerFrameAveragingFreeEnergy}{\left(\delYcQO\right)}\Bigg|_{\nullvec}.
\end{equation}
The free energy of the polydisperse RVE, $\clinkFrameAveragingFreeEnergyApprox{\nPolydisperseSet}$, can then be approximated by substituting $\delYcQO$ into \Eqref{eq:inner-FED-FA}.
Notably, since $\ycMono = \nullvec$, then the analytical form of each of the derivatives in \Eqref{eq:FA-analytical-model} is directly given in \Fref{app:cross-link-chain-free-energy-derivatives-junction-position} (when swapping $\delYcQO$ for $\yc$).
This convenience can then be utilized in deriving an analytical expression for $\delYcQO$.

Recall that the exact solution for $\delYcQO$ for polydisperse RVEs consisting of Gaussain chains is given by \Eqref{eq:gaussian-chain-unique-junction-position-relaxation} (where $\genRot=\genRotRef$). We supply the analytical form of $\delYcQO$ for the case of polydisperse Kuhn and Gr\"{u}n chains in \Fref{app:KG-approx-yc}.
But henceforth in this work, we exclusively utilize the form of $\delYcQO$ supplied in \Eqref{eq:gaussian-chain-unique-junction-position-relaxation}.

\section{Bimodal networks of Gaussian chains} \label{sec:gaussian}
In general, the probability density function over RVEs, $\probDens\left(\inst\right)$, may be complex (which could render the calculation of the network free energy density, as per \Eqref{eq:network-free-energy-density}, computationally costly).
In contrast, traditional (discrete) polymer networks only consider a single molecular weight with a fixed RVE geometry, which is equivalent to $\probDensChain\left(\n\right) = \dirac{\n}{\n_0}$, and has resulted in models that, in many cases, combine simplicity (i.e., a small number of fitted parameters) with quality fits.
Inspired by this, and motivated by the principle of parsimony to keep the number of model parameters as small as possible, we consider the analogous probability density for networks with a bimodal distribution of chain molecular weights that are exclusively joined at tetrafunctional cross-links ($\numChains=4$).

\begin{hlbreakable}We are further motivated by the experimental work of J.E. Mark and colleagues \cite{mark1984dependence,tang1984effect,andrady1980model,llorente1981modelXI,llorente1981modelXIII,mark1994elastomeric} to study bimodal tetrafunctionally cross-linked polymer networks.
Even in such parsimonious polydisperse networks, relatively complex mechanical phenomena emerge with respect to the molar concentration of short chains.
For instance, the maximum value in ultimate strength and toughness for these networks occurs at some intermediate short chain concentration value, indicating a complex load-sharing interaction between short and long chains.
This network parameter also controls how sharply the stress increases in the strain-stiffening regime.\end{hlbreakable}

Assuming that the number of monomers in each chain are independent and identically distributed (as discussed in \Fref{sec:cross-link-statistics}), we here represent the probability density over RVEs as
\begin{subequations}
\begin{align}
\probDens\left(\inst\right) &= \probDens_{4}\left(k\right) \prod_{i=1}^4 \probDensChain\left(\n_i\right), \\
\probDens_{4}\left(\numChains\right) &= \dirac{\numChains}{4}, \\
\probDensChain\left(\n\right) &= p \dirac{\n}{\n_a} + \left(1-p\right) \dirac{\n}{\n_b},
\end{align}
\end{subequations}
where the $\Xvec_i$ are related to $\n_i$ by \Eqref{eq:four-chain-xs}\hl{, and we adopt the convention $\n_a < \n_b$ (i.e., $a$ denotes the shorter chains and $b$ the longer)}.
To make connections to a realistic bimodal distribution of chain molecular weights, $\n_a$ and $\n_b$ can be thought of as the peaks of the distribution, and $p$ can be tuned to account for the relative probability masses (i.e., integration of density ``local'' to each peak) between the peaks.
In a discrete random variable sense, $p$ is the probability of a given chain having $\n_a$ monomers and $1 - p$ is the probability of a given chain having $\n_b$ monomers.

Given the independence and discreteness of monomer numbers, the free energy density takes the simplified form,
\begin{equation} \label{eq:polydisperse-FED}
\begin{split}
    \allClinksFreeEnergyDensity\left(\F\right) & = \frac{\crossDensity}{2} \Bigg(
    p^4\binom{4}{4} \clinkFreeEnergyPolydisperse{\n_a, \n_a, \n_a, \n_a} 
    + p^3 \left(1-p\right) \binom{4}{3} \clinkFreeEnergyPolydisperse{\n_a, \n_b, \n_b, \n_b}  + p^2 \left(1-p\right)^2 \binom{4}{2} \clinkFreeEnergyPolydisperse{\n_a, \n_a, \n_b, \n_b}\\
    &\qquad\qquad + p \left(1-p\right)^3 \binom{4}{1} \clinkFreeEnergyPolydisperse{\n_a, \n_a, \n_a, \n_b} + \left(1-p\right)^4\binom{4}{0} \clinkFreeEnergyPolydisperse{\n_b, \n_b, \n_b, \n_b} \Bigg), \\
\end{split}
\end{equation}
where the binomial coefficients account for all of the permutations of monomers which give equivalent RVEs by symmetry (e.g., $\clinkFreeEnergyPolydisperse{\n_a, \n_b, \n_b, \n_b} = \clinkFreeEnergyPolydisperse{\n_b, \n_a, \n_b, \n_b} = \clinkFreeEnergyPolydisperse{\n_b, \n_b, \n_a, \n_b} = \clinkFreeEnergyPolydisperse{\n_b, \n_b, \n_b, \n_a}$ by symmetry, etc.). 
By considering Gaussian chains, the above model given by \Eqref{eq:polydisperse-FED} allows one to characterize the structure of a bimodal polymer network through $\n_a$, $\n_b$, $p$, $\crossDensity$, and $\mLen$.
\hl{The model is fully specified through the additional parameters temperature, $\T$, and chain free energy function, $\chainFreeEnergy$.
All have direct physical or structural interpretations; no phenomenological fitting constants are introduced.}

\begin{hlbreakable}Before considering \Eqref{eq:polydisperse-FED} in any more depth, we first turn our attention to the three distinct classes of bimodal $4$-chain RVEs that constitute it: the monodisperse RVE ($\clinkFreeEnergyPolydisperse{\n_a, \n_a, \n_a, \n_a}$ and $\clinkFreeEnergyPolydisperse{\n_b, \n_b, \n_b, \n_b}$), which is already known to have the equipartition of stretch solution (\Fref{prop:unification-FR}) for the free rotation limit and $\ycMono=\nullvec$ for the frame averaging limit; the RVE with one chain of one length and three chains of the other length ($\clinkFreeEnergyPolydisperse{\n_a, \n_b, \n_b, \n_b}$ and $\clinkFreeEnergyPolydisperse{\n_a, \n_a, \n_a, \n_b}$); and the RVE with two chains of each length $\clinkFreeEnergyPolydisperse{\n_a, \n_a, \n_b, \n_b}$.
For brevity, we will refer to the second and third RVE classes as the 1-3 and 2-2 RVEs, respectively.
Since the monodisperse RVE is well-known and exactly characterized, we will focus primarily on the 1-3 and 2-2 RVEs (and refer back to the monodisperse RVE only for comparison).\end{hlbreakable}

In what follows, we first investigate the behavior of these RVEs in the free rotation limit, and then briefly highlight the corresponding frame averaging limit behavior.

\subsection{Free rotation limit} \label{sec:gaussian-FR}

\paragraph*{Closed-form approximation for the 1-3 RVE in the free rotation limit.}
For the 1-3 RVE, \Eqref{eq:$4$-chain-analytical-model} takes on the considerably simpler form:
\begin{subequations} \label{eq:bimodal-analytical-1}
\begin{align}
    \widetilde{\delRotMag}_1 &= \delRotMag_1 = 0, \\
    \widetilde{\delRotMag}_2 &= \auxSign{2}\left(\frac{3\left(\pStretch{1}^2 + \pStretch{2}^2\right)}{\pStretch{1}^2 + \pStretch{2}^2 + \pStretch{1}^4 \pStretch{2}^4} - 2\right), \\
    \widetilde{\delRotMag}_3 &= \auxSign{3}\left(\frac{3 \pStretch{1}^2}{\pStretch{1}^2 + \pStretch{2}^2 + \pStretch{1}^4 \pStretch{2}^4} - 1\right), \\
    \widetilde{\delYcMag}_1 &= \auxSign{2} \auxSign{3} \mLen \frac{\sqrt{3 \nA \nB}\left(\sqrt{\nB} - \sqrt{\nA}\right)\pStretch{1}\pStretch{2}\left(\pStretch{1} + \pStretch{2}\right)}{2\left(3\nA + \nB\right)\left(\pStretch{1}^2 + \pStretch{2}^2 + \pStretch{1}^4 \pStretch{2}^4\right)}, \\
    \widetilde{\delYcMag}_2 &= \auxSign{2} \mLen \frac{\sqrt{3 \nA \nB}\left(\sqrt{\nB} - \sqrt{\nA}\right)\pStretch{1}\pStretch{2}\left(\pStretch{1} + \pStretch{2}\right)}{2\left(3\nA + \nB\right)\left(\pStretch{1}^2 + \pStretch{2}^2 + \pStretch{1}^4 \pStretch{2}^4\right)}, \\
    \widetilde{\delYcMag}_3 &= \auxSign{3} \mLen \frac{\sqrt{3 \nA \nB}\left(\sqrt{\nB} - \sqrt{\nA}\right)\pStretch{1}^3\pStretch{2}^3}{2\left(3\nA + \nB\right)\left(\pStretch{1}^2 + \pStretch{2}^2 + \pStretch{1}^4 \pStretch{2}^4\right)},
\end{align}
where $\nA$ and $\nB$ refer to the number of monomers in the 1-3 RVE with multiplicity 1 and 3, respectively; and where $\auxSign{2} = \pm1$, and $\auxSign{3} = \pm1$ can vary independently to give $4$ different solutions.
Remarkably, the free energies that result from the $4$ different solutions are all identical.\footnote{
    Verified in the Mathematica notebook that can be found at \url{https://github.com/grasingerm/polydisperse-network-models}.
}
Similarly, the magnitudes of cross-link displacement and RVE rotations are invariant as
\begin{align}
    \left| \delRot \right| &= \sqrt{5 - \frac{3\left(4 \pStretch{1}^2 \pStretch{2}^2 + \left(1 + 6 \pStretch{1}^6\right)\pStretch{2}^4 + 4 \pStretch{1}^4 \pStretch{2}^6\right)}{\left(\pStretch{1}^2 + \pStretch{2}^2 + \pStretch{1}^4\pStretch{2}^4\right)^2}}, \\
    \left| \delYc \right| &= \frac{\mLen\left|\sqrt{\nA} - \sqrt{\nB}\right| \pStretch{1} \pStretch{2}}{3 \nA + \nB} \sqrt{\frac{
        3 \nA \nB 
        }{
        \left(\pStretch{1}^2 + \pStretch{2}^2 + \pStretch{1}^4\pStretch{2}^4\right)
        }}.
\end{align}
\end{subequations}

\paragraph*{Closed-form approximation for the 2-2 RVE in the free rotation limit.}
Likewise, \Eqref{eq:$4$-chain-analytical-model} takes on the simplified form for the 2-2 RVE,
\begin{subequations} \label{eq:bimodal-analytical-2}
\begin{align}
    \widetilde{\delRot} &= \delRot = \nullvec, \\
    \widetilde{\delYcMag}_i &= \pm \frac{\mLen \eff}{2 \sqrt{3}} \left(\sqrt{\n_b} - \sqrt{\n_a}\right) \pStretch{i}, \\
    \widetilde{\delYcMag}_j &= 0,
\end{align}
\end{subequations}
where $\eff = \ngeomean / \nmean$. Here, there are $6$ separate solutions as $i = 1, 2, 3$ and $j \neq i$.
For the 2-2 RVE, the free energy can be formulated concisely as
\begin{equation}
    \kB \T \left( \left(1 + \eff\right) \pStretch{i}^2 + 2 \sum_{j=1, j\neq i}^3 \pStretch{j}^2\right),
\end{equation}
where $i = 1, 2, 3$ and $\pStretch{3} = 1 / \pStretch{1}^2 \pStretch{2}^2$; of the $6$ solutions for $\delRot$ and $\delYc$, there are only $3$ associated forms of the free energy that are unique.
Of the $3$, the minimum free energy is always given by the case where $\pStretch{i} = \max \ncollection{\pStretch{1}, \pStretch{2}, \pStretch{3}}$.
This is because $\eff \leq 1$.
Note that equality $\eff=1$ occurs if and only if $\n_a=\n_b$, recovering the Neo-Hookean model.
The factor $(1+\eff)$ in front of the maximum principal stretch is reduced relative to the factor $2$ in front of the other principal stretches when $\eff < 1$ (i.e., when polydispersity is significant).
This indicates that the 2-2 RVE exhibits reduced stiffness in the maximum principal stretch direction, with the degree of reduction determined by the polydispersity parameter $\eff$.
\begin{hlbreakable}Physically, the RVE allows relaxation by preferentially aligning softer chains (those with larger monomer number $\n$) along the maximum stretch direction, thereby reducing resistance to deformation in that direction.
This qualitatively resonates with the physical picture of non-affine chain deformation in bimodal networks as discussed in Andrady et al.~\cite{andrady1980model}.\end{hlbreakable}

\begin{figure}
	\centering
    \includegraphics[width=\linewidth]{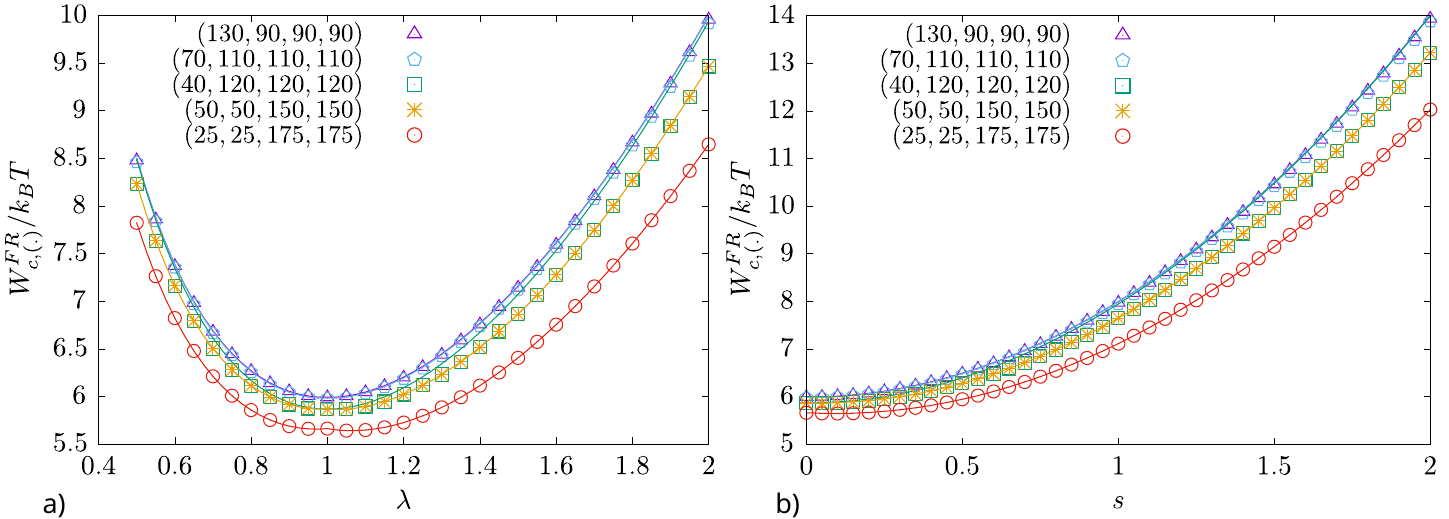}
	\caption{
            \begin{hlbreakable}\capTitle{Accuracy of free rotation RVE closed-form approximations.}\end{hlbreakable}
            Agreement of closed-form approximation for free rotation RVEs with various degrees of polydispersity undergoing uniaxial a) and simple shear b) deformations.
	\label{fig:WGauss-fit}}
\end{figure}

\paragraph*{Junction fluctuation term.}
For networks of Gaussian chains, the junction fluctuation term takes a form that is invariant with respect to the macroscopic deformation (e.g., see \Eqref{eq:gaussian-cross-link-free-energy-hessian-wrt-junction-position}).
This invariance can be physically understood as follows: the Hessian of the cross-link junction positional energy is determined by the tangent stiffnesses of the connected polymer chains about the optimal position, $\ycStar$. 
Because Gaussian chains exhibit a constant stiffness -- lacking the strain stiffening or softening observed in finite extensibility models -- the Hessian remains constant regardless of the individual chain stretches. 
Consequently, the fluctuation term is independent of the continuum-scale deformation, thereby justifying its exclusion from the mechanical stress analysis presented in this work.
Indeed, the determinants of the Hessians for the 1-3 and 2-2 RVEs are given by
\begin{subequations} \label{eq:jft}
\begin{align}
    \det \left(\left.\frac{\partial^2 \Wchains}{\partial \yc \partial \yc}\right|_{\genRot = \genRotStar, \yc = \ycStar}\right) &= \frac{27 \left(\kB \T\right)^3}{\mLen^6} \left(\frac{3}{\nA} + \frac{1}{\nB}\right)^3, \\
    \det \left(\left.\frac{\partial^2 \Wchains}{\partial \yc \partial \yc}\right|_{\genRot = \genRotStar, \yc = \ycStar}\right) &= \frac{27 \left(\kB \T\right)^3}{\mLen^6} \left(\frac{2}{\n_a} + \frac{2}{\n_b}\right)^3,
\end{align}
\end{subequations}
respectively.
Note that each of the terms in the parentheses of \Eqref{eq:jft} appear as a ratio of a monomer number to the number of chains in the RVE with that particular monomer number (e.g., for the 1-3 RVE, there are $3$ chains with $\nA$ monomers and $1$ chain with $\nB$ monomers).

\begin{figure}
	\centering
	\includegraphics[width=\linewidth]{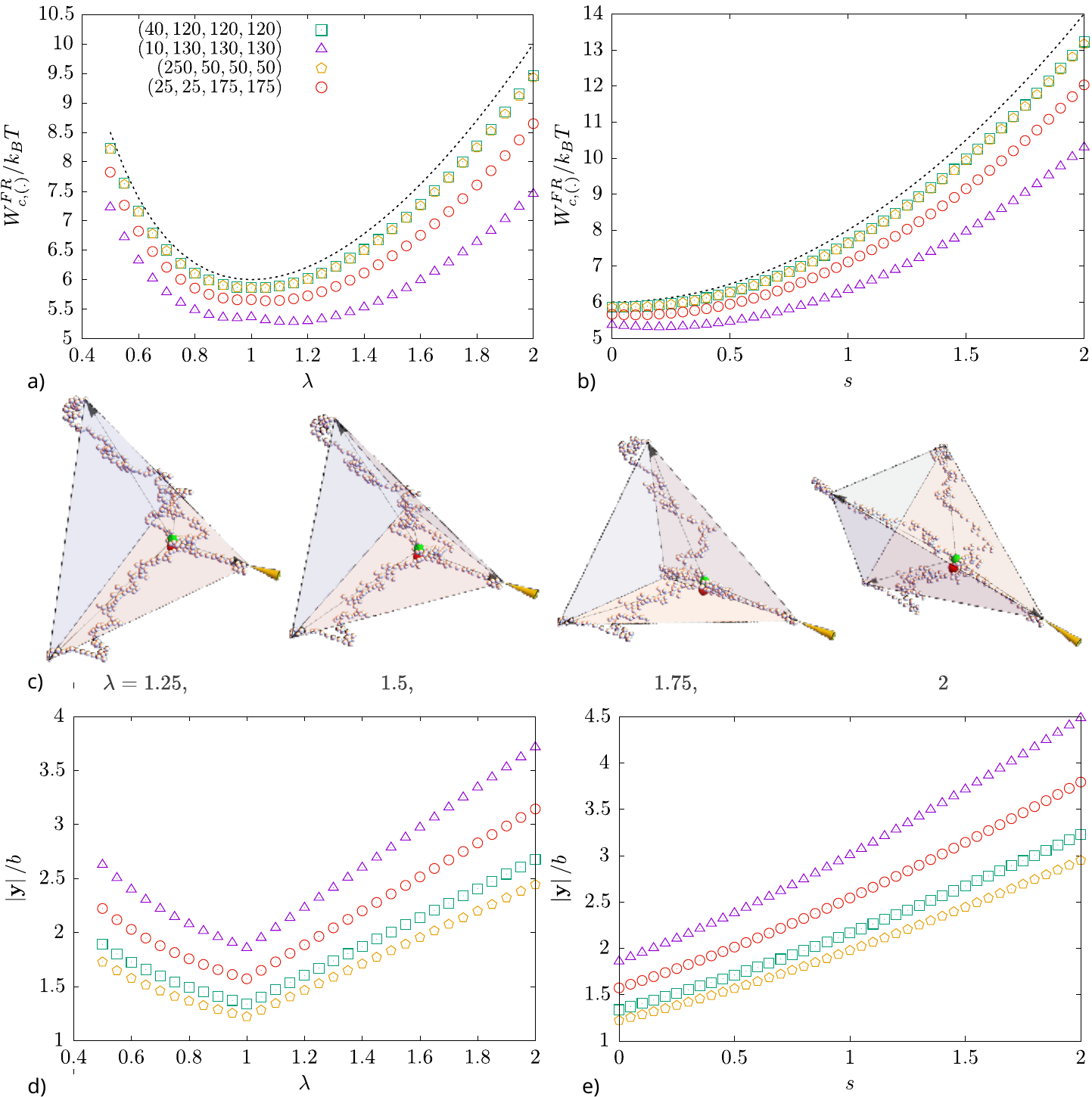}
	\caption{
            \begin{hlbreakable}\capTitle{Free rotation RVE response to deformation.}\end{hlbreakable}
            Free energy response for free rotation RVEs with various degrees of polydispersity for a) uniaxial and b) simple shear deformations.
            c) $\ncollection{40, 120, 120, 120}$ free rotation RVEs under uniaxial deformations of $\pStretchSymbol = 1.25, 1.5, 1.75,$ and $2$.
            Cross-link displacements for polydisperse free rotation RVEs under d) uniaxial and e) simple shear deformations.
	}
	\label{fig:W-var}
\end{figure}

\paragraph*{Accuracy of closed-form approximations.}
The closed-form approximation agrees well with the numerical solutions for the examples considered herein.
First consider the uniaxial deformation, $\F = \diag\left(\pStretchSymbol, 1/\sqrt{\pStretchSymbol}, 1/\sqrt{\pStretchSymbol}\right)$.
\Figref{fig:WGauss-fit} a) shows $\clinkFreeEnergyPolydisperse{.}$ as a function of $\pStretchSymbol$ for $\ncollection{130,90,90,90}$ (\textcolor{purple} {$\triangle$}), $\ncollection{70,110,110,110}$ (\textcolor{blue}{$\pentagon$}), $\ncollection{40,120,120,120}$ (\textcolor{green}{$\square$}), $\ncollection{50,50,150,150}$ (\textcolor{orange}{$*$}), and $\ncollection{25,25,175,175}$ (\textcolor{red}{$\bigcircle$}).
Numerical solutions are represented by markers, whereas closed-form approximations are given by a solid line of the corresponding color.
For the 1-3 RVEs, the approximation is nearly exact for the two cases with lower variance, but there is some disagreement at higher stretches for the $\ncollection{40,120,120,120}$ case.
Despite the high variance, the 2-2 RVEs ($\ncollection{50,50,150,150}$ and $\ncollection{25,25,175,175}$) show almost exact agreement at all the deformations considered.
Note that, for the 1-3 RVEs, the amount of variance in $\n_i$ seems to drive behavior more than whether the $1$ chain length is greater than or less than the length of the remaining $3$ chains. 
In fact, the behavior of the $\ncollection{130,90,90,90}$ and $\ncollection{70,110,110,110}$ are nearly identical, as are the pair $\ncollection{40,120,120,120}$ and $\ncollection{50,50,150,150}$.
The two classes of behavior here appear to be organized primarily by the amount of variance in $\n_i$ (and, correspondingly, $\eff$).
The free energy response of a second deformation mode, simple shear, is shown in \Figref{fig:WGauss-fit} b). 
Here $\F = \identity + s \euclid{1} \otimes \euclid{3}$, and the principal stretches are $\pStretch{1} = \sqrt{2 + s^2 + s \sqrt{4 + s^2}} / \sqrt{2}, \pStretch{2} = 1$, and $\pStretch{3} = \sqrt{2 + s^2 - s \sqrt{4 + s^2}} / \sqrt{2}$.
Similar phenomena can be observed for simple shear, illustrating their generality to other types of deformation.

\begin{figure}
	\centering
	\includegraphics[width=\linewidth]{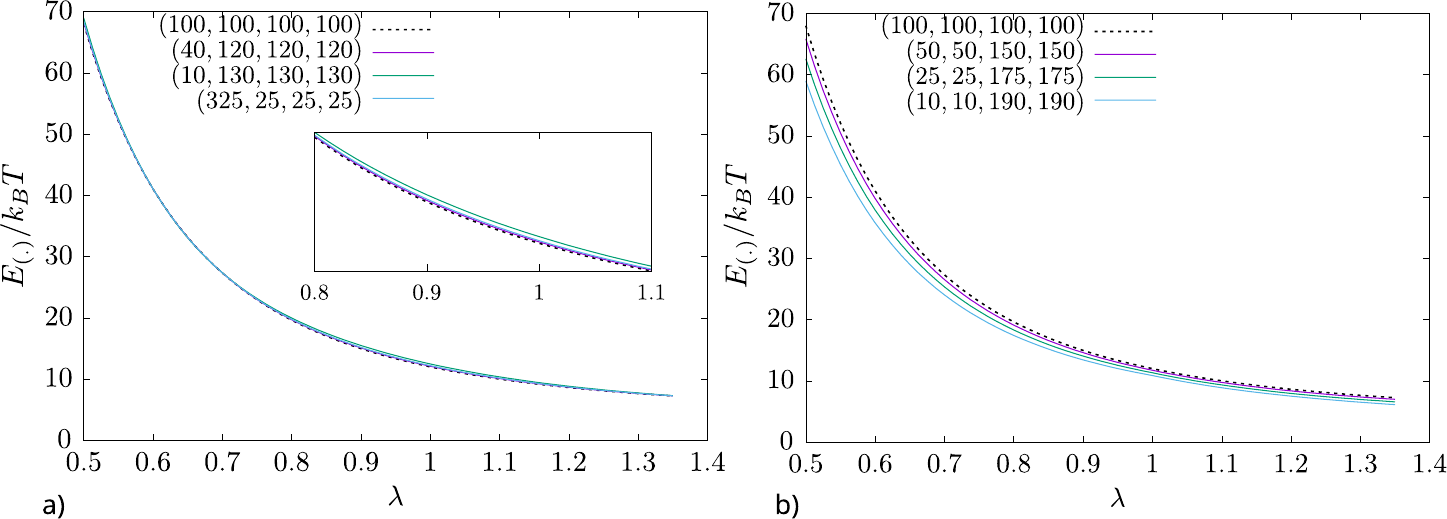}
	\caption{
            \begin{hlbreakable}\capTitle{Tangent stiffness modulus for $4$-chain free rotation RVEs.}\end{hlbreakable}
            Tangent stiffness modulus, $\E = \partial^2\clinkFreeEnergyPolydisperse{.} / \partial \pStretchSymbol^2$, for various polydisperse a) 1-3 and b) 2-2 free rotation RVEs.
	}
    \label{fig:E-var}
\end{figure}

\begin{figure}
	\centering
    \includegraphics[width=\linewidth]{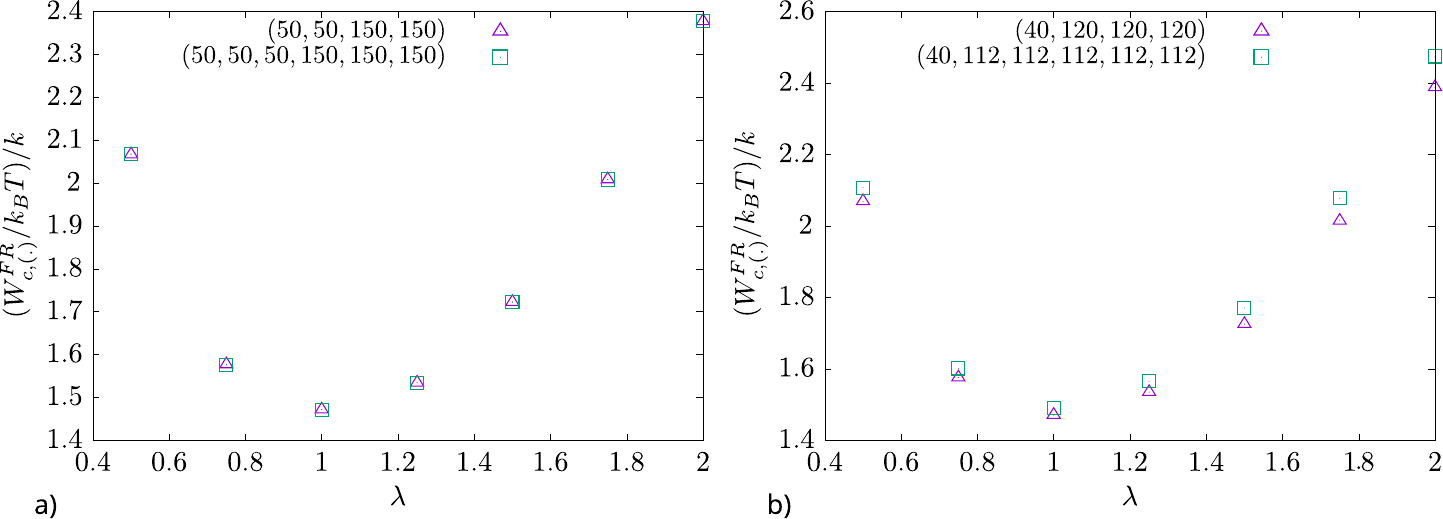}
	\caption{
            \capTitle{Comparison between $4$-chain and $6$-chain free rotation RVEs.}
		  a) For the cases where half of the chains are of one length, and half the other, the behavior is the same.
            b) However, for fixed average but one chain of a different length, there is a difference in behavior between the $4$-chain and $6$-chain RVEs.
	\label{fig:W-4vs6}}
\end{figure}

\paragraph*{Degree of polydispersity.}
To highlight some additional phenomena, \Figref{fig:W-var} shows the free energy response of polydisperse free rotation RVEs with a fixed $\nmean$ and wider range of monomer number variances when subjected to a) uniaxial deformation and b) simple shear deformation.
An increase in monomer number variance leads to lower free energies, and broader, flatter free energy curves, which has some correspondence with softer networks.
We point to the $\ncollection{10,130,130,130}$ cross-link RVE as an example where free energy minima do not occur at $\pStretchSymbol = 1$; it is also clearly nonconvex. 
The free energy in the free rotation limit can exhibit nonconvexity because it is approximated as the minimum over $4$ candidate equilibria, each corresponding to a local minima of the rotation optimization. 
Although each candidate free energy function appears convex, their pointwise minimum is not guaranteed to preserve convexity. 
Nonconvexity occurs when the global minimum switches from one candidate to another as the deformation varies.
This is a potential limitation with either the approximation in the free rotation limit, the method developed herein for the construction of polydisperse RVEs, or both; however, we note that these cases appear isolated to RVEs with large variance, and where at least $1$ of the chains has a small number of monomers (i.e., $\n_i / \n_j \ll 1$ for some $i$ and $j$).
\Figref{fig:W-var} c) shows the rotated and deformed $\ncollection{40,120,120,120}$ RVE when uniaxially deformed with $\pStretchSymbol = 1.25, 1.5, 1.75,$ and $2.0$.
The cross-link position in the reference configuration and the deformed configuration are denoted by red and green points, respectively.
The rotation and displacement of the cross-link increase with deformation.
\Figref{fig:W-var} d) and e) show the magnitude of cross-link displacement, $\left|\yc\right| / \mLen$, as a function of deformation.
The relationship appears nearly linear, where the RVEs with a higher degree of polydispersity have a higher slope; that is, the cross-link displaces more with deformation when there is more variance in $\ncollection{\n_i}_{i=1}^4$.

\begin{figure}
	\centering
    \includegraphics[width=\linewidth]{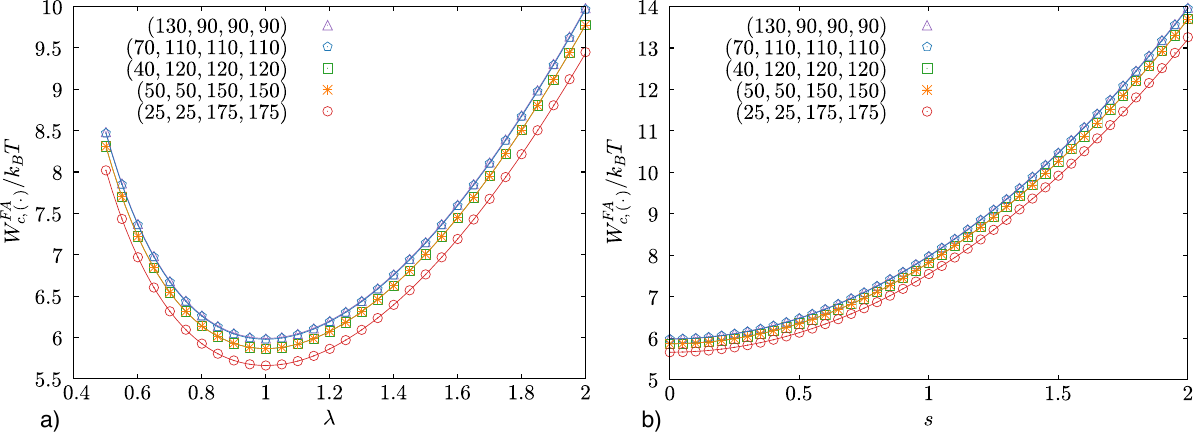}
	\caption{
            \begin{hlbreakable}\capTitle{Accuracy of frame averaging RVE closed-form approximations.}\end{hlbreakable}
            Agreement of closed-form approximation for frame averaging RVEs with various degrees of polydispersity undergoing uniaxial a) and simple shear b) deformations.
            Note that the Constrained Optimization BY Linear Approximation (COBYLA) algorithm~\cite{Zhang_2023} (offered by the \texttt{scipy.optimize.minimize} function) was used here for local constrained optimization to produce these specific frame averaging numerical results.
	\label{fig:FA-WGauss-fit}}
\end{figure}

The change in mechanical properties with polydispersity can be made more precise by investigating the tangent stiffness modulus, $\E_{\ncollection{.}} = \partial^2\clinkFreeEnergyPolydisperse{.} / \partial \pStretchSymbol^2$, with respect to uniaxial deformation.
For simplicity, we only derive an expression for the tangent stiffness modulus of the 1-3 RVE for $1$ of the $4$ candidate minima with respect to rotations.
(The chosen minima corresponds with $s_2 = -s_3 = 1$ in \Eqref{eq:bimodal-analytical-1}.)
The tangent stiffness modulus for the 1-3 RVE is
\begin{equation}
    \E_{\ncollection{\nA, \nB, \nB, \nB}}\big|_{\pStretchSymbol = 1} = 3 \kB \T \left(\frac{15\nA + 7\nB - 6 \sqrt{\nA \nB}}{3\nA + \nB}\right),
\end{equation}
and, for the 2-2 RVE is
\begin{equation}
    \E_{\ncollection{\n_a, \n_a, \n_b, \n_b}}\big|_{\pStretchSymbol = 1} = 2 \kB \T \left(5 + \eff\right), 
\end{equation}
where, again, $\eff = \sqrt{\n_a \n_b} / \nmean$ is the efficiency.
If the average monomer number is constrained such that $4 \nmean = \nA + 3 \nB$ for the 1-3 RVE and $4 \nmean = 2 \n_a + 2\n_b$ for the 2-2 RVE, then, in either case, the stiffness modulus is an extrema if and only if $\nA = \nB = \nmean$ and $\n_a = \n_b = \nmean$, respectively; that is, when the RVE is monodisperse.
Upon inspection of the second derivatives of $\E$ with respect to monomer numbers, it is seen that the monodisperse case is indeed a maxima for the 2-2 RVE, but, surprisingly, \emph{a minima for the 1-3 RVE}.
This can also be seen in \Figref{fig:E-var}, which shows the tangent stiffness modulus as a function of $\pStretchSymbol$ for various polydisperse a) 1-3 and b) 2-2 RVEs.
Although the monodisperse case is a minima for 1-3 RVEs, the difference in stiffness is small between the RVEs considered.
Thus, in a full bimodal network model of the form \Eqref{eq:polydisperse-FED}, a net softening may often occur with increasing polydispersity.
This agrees qualitatively with the results of mesoscale simulations~\cite{lei2022network}.

\begin{figure}
	\centering
	\includegraphics[width=\linewidth]{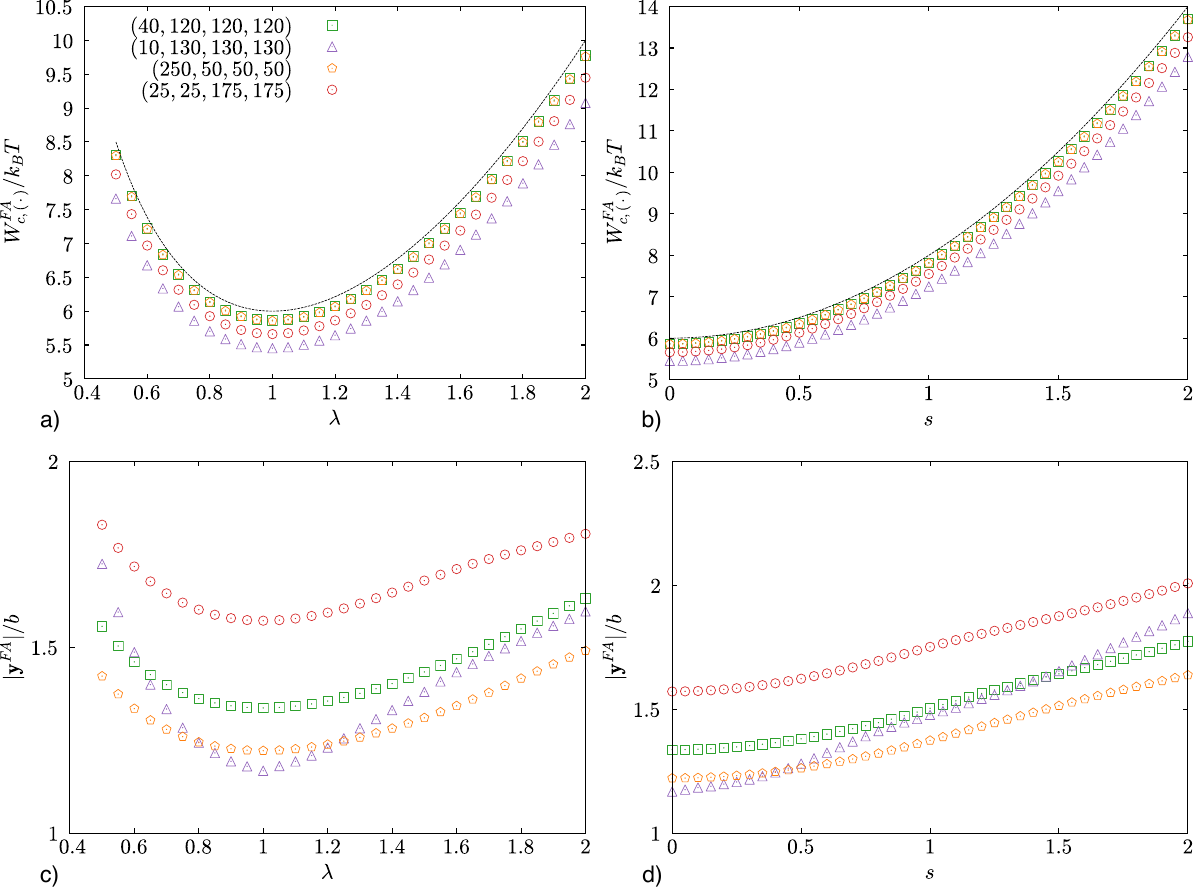}
	\caption{
            \begin{hlbreakable}\capTitle{Frame averaging RVE response to deformation.}\end{hlbreakable}
            Free energy response for frame averaging RVEs with various degrees of polydispersity for a) uniaxial and b) simple shear deformations.
            Cross-link displacements for polydisperse frame averaging RVEs under c) uniaxial and d) simple shear deformations.
	}
	\label{fig:FA-W-var}
\end{figure}

\paragraph*{Cross-links with degree $4$ versus degree $6$.}
For a monodisperse network consisting of FJC (or WLC) chains, the $4$-, $6$-, and $8$-chain RVEs produce the same constitutive model (see e.g., Remark~\ref{rem:mono-topology} or \cite{grasinger2023polymer}). 
Here, we consider the implications of topological differences on polydisperse networks. 
\Figref{fig:W-4vs6} shows the per-chain free energy response for polydisperse $4$- and $6$-chain RVEs undergoing uniaxial deformations.
The average number of monomers, $\nmean$, is held fixed at $100$.
For a), in both the $4$- and $6$-chain RVEs, half of the chains are shorter and half are longer.
The behavior is identical.
However, in b) only $1$ chain is of a smaller number of monomers, $40$.
In this case, the per-chain free energy response is different depending on whether the cross-link consists of $4$ chains or $6$ chains.

\begin{figure}
	\centering
	\includegraphics[width=\linewidth]{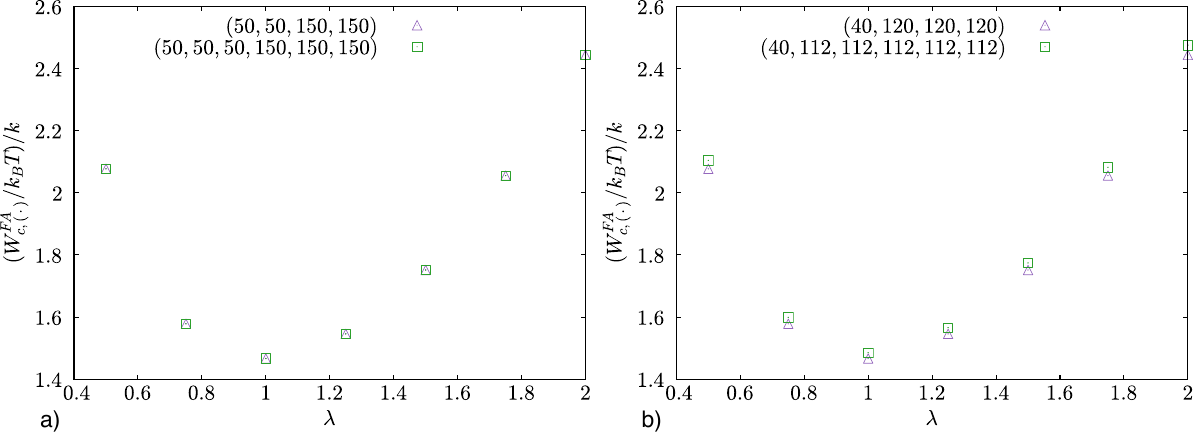}
	\caption{
            \capTitle{Comparison between $4$-chain and $6$-chain frame averaging RVEs.}
            The trends analogously follow those from the free rotation RVEs in \Figref{fig:W-4vs6}.
	\label{fig:FA-W-4vs6}}
\end{figure}

\subsection{Frame averaging limit} \label{sec:gaussian-FA}

In \Figref{fig:FA-WGauss-fit}, we present both the closed-form approximation (\Eqref{eq:gaussian-chain-unique-junction-position-relaxation} where $\genRot=\genRotRef$) and the numerical solution for several different polydisperse frame averaging RVEs in a) uniaxial deformation and b) simple shear deformation.
Here, the closed-form approximation agrees very well with the corresponding numerical solutions.
The free energy minima for each frame averaging RVE also occurs at $\pStretchSymbol = 1$ in uniaxial deformation and $s = 0$ in simple shear.
When compared with the analogous results for the free rotation model in \Figref{fig:WGauss-fit}, it is clear that free energy state of each frame averaging RVE is, at all deformation states, equal to or greater than the free energy state of its free rotation counterpart.

To highlight some additional phenomena, \Figref{fig:FA-W-var} shows the free energy response of polydisperse frame averaging RVEs with a fixed $\nmean$ and wider range of monomer number variances when subjected to a) uniaxial deformation and b) simple shear deformation.
Moreover, \Figref{fig:FA-W-var} c) and d) show the magnitude of frame averaging cross-link displacement, $\left|\ycFrameAveraging\right| / \mLen$, as a function of deformation.
When comparing \Figref{fig:FA-W-var} c) and d) to the analogous free rotation results in \Figref{fig:W-var} d) and e), several intriguing observations can be made.
First, it is clear that, during deformation, less cross-link displacement takes place in the frame averaging limit as compared to the free rotation limit.
With less cross-link displacement under the same macroscopically-imposed deformation conditions, the chains in the frame averaging limit are thus more deformed than their free rotation counterparts, which leads to the frame averaging RVE free energy state being greater than that for the free rotation limit.
Second, there seems to be a rather non-linear relationship between deformation and cross-link displacement in the frame averaging limit (which could possibly be attributed to numerical approximation error from using $\SOThree$ quadrature to approximately integrate over all possible cross-link orientations).
This is in contrast to the very nearly linear relationship between deformation and cross-link displacement in the free rotation limit.
Third, in the free rotation RVE, cross-link displacement increases when there is more variance in $\ncollection{\n_i}_{i=1}^4$.
This trend clearly does not hold for the frame averaging limit.

Finally, \Figref{fig:FA-W-4vs6} shows the per-chain free energy response for polydisperse $4$- and $6$-chain frame averaging RVEs undergoing uniaxial deformations.
These frame averaging RVEs exhibit similar behavior when compared to their free rotation counterparts in \Figref{fig:W-4vs6}.

\section{Finite extensibility and strain stiffening} \label{sec:s-curves}
The elastic response of many soft polymer networks consists of an initial Neo-Hookean regime followed by strain stiffening.
The result is a stress-strain curve that takes a characteristic `S' shape.
The Gaussian model for chain free energies, together with polymer network models, can recover the Neo-Hookean regime and help to elucidate the implications of macromolecular and network structure within this regime~\cite{treloar1975physics}; however, this combination cannot resolve the strain stiffening regime.
In contrast, the Kuhn and Gr\"{u}n chain free energy, $\KGFreeEnergy$, takes into account the finite extensibility of the FJC and, as a result, captures a strain stiffening regime at large deformations (see~\cite{arruda1993threee}).
Ideally, a closed-form approximation would be obtained using the chain free energy $\KGFreeEnergy$ within each RVE of interest, following the procedure outlined in \Fref{sec:FR-closed-form-approx} for the free rotation limit and \Fref{sec:FA-closed-form-approx} for the frame averaging limit.
However, for the free rotation limit, this proves intractable because $\KGFreeEnergy$ involves the inverse Langevin function, for which the exact analytical formula does not exist and approximations are often complex.
Complicating matters further, the inverse Langevin function is nested within $\csch$ and $\ln$ functions.
Instead, we take advantage of the fact that the Gaussian chain is a leading order approximation of $\KGFreeEnergy$.
The approximations $\delRot$ and $\delYc$ derived for Gaussian chains in \Eqref{eq:bimodal-analytical-1} and \Eqref{eq:bimodal-analytical-2} naturally serve as leading order approximations for free rotation RVEs with Kuhn and Gr\"{u}n chains.
A key consideration here is that \Eqref{eq:bimodal-analytical-1} represents $4$ different approximate solutions for the 1-3 RVE and \Eqref{eq:bimodal-analytical-2} represents $6$ for the 2-2 RVE.
As before, the best approximation for a given deformation in either case is the one that results in the minimum free energy.
For the 1-3 RVE, provided one follows the convention\footnote{The conventions chosen here are without loss of generality. Recall: a permutation of monomer numbers is equivalent to changing the frame about which we expand to obtain the closed-form approximation. Fixing a particular permutation and minimizing over the finite number of frames that are optimal for the monodisperse case is sufficient to satisfy material frame indifference.} that the chain with unique monomer number is taken to be chain $1$ (i.e., along $\left(0, 0, 1\right)$), then numerical results reveal that \Eqref{eq:bimodal-analytical-1} with $\auxSign{2} = -1$ and $\auxSign{3} = 1$ is the optimal approximation for the examples presented herein.
For the 2-2 RVE, provided the convention $\ncollection{\n_a, \n_a, \n_b, \n_b}$ is used for the undeformed chain orientations,
\begin{subequations} \label{eq:bimodal-analytical-3}
\begin{align}
    \delRot &= \nullvec, \\
    \delYcMag_1 &= \delYcMag_2 = 0, \\
    \delYcMag_3 &= \frac{\mLen \eff \left(\sqrt{\n_a} - \sqrt{\n_b}\right)}{2 \sqrt{3} \left(\pStretch{1} \pStretch{2}\right)}, 
\end{align}
\end{subequations}
is optimal for the examples presented herein.
A comparison of the closed-form approximation in the free rotation limit with numerical results is given in \Figref{fig:WKG-fit} for a) 1-3 RVEs and b) 2-2 RVEs.
As expected, the approximation remains very accurate at moderate strains (e.g., $\pStretchSymbol \lessapprox 1.3$).
The error increases with the combination of variance in the monomer numbers and large deformations.
While, in this case, there is a slight over prediction of the strain energies, the closed-form solution remains a good leading order approximation for the influence of polydispersity on the elasticity of the network.

Although a closed-form approximation using the Kuhn-Gr\"un chain free energy can be obtained for the frame averaging limit (\Eqref{eq:KG-chain-FA-analytical-model}), we instead use the Gaussian-based approximation from \Eqref{eq:gaussian-chain-unique-junction-position-relaxation}. This choice ensures direct comparability between the two limits and simplifies implementation, as the Gaussian approximation serves as the leading-order term for Kuhn-Gr\"un chains at moderate deformations. \Figref{fig:FA-WKG-fit} shows this approximation agrees well with numerical solutions for both a) 1-3 and b) 2-2 RVEs across all strains considered. 
As in the Gaussian case, the frame averaging limit consistently predicts equal or higher free energies than the free rotation limit.
\begin{figure}
	\centering
	\includegraphics[width=\linewidth]{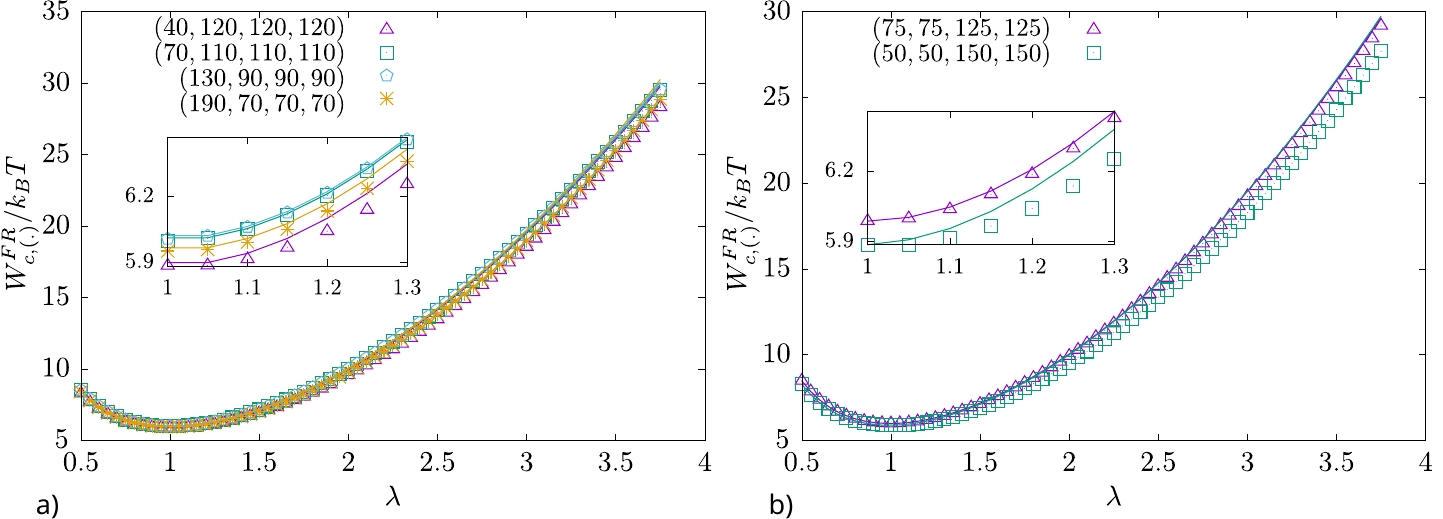}
	\caption{
            \begin{hlbreakable}\capTitle{Accuracy of free rotation RVE closed-form approximations for Kuhn and Gr\"{u}n chains.}\end{hlbreakable}
            Agreement of closed-form approximation for free rotation RVEs with Kuhn and Gr\"{u}n chains (exhibiting finite extensibility) in uniaxial deformation for the a) 1-3 RVEs and b) 2-2 RVEs.
	\label{fig:WKG-fit}}
\end{figure}

\begin{figure}
	\centering
	\includegraphics[width=\linewidth]{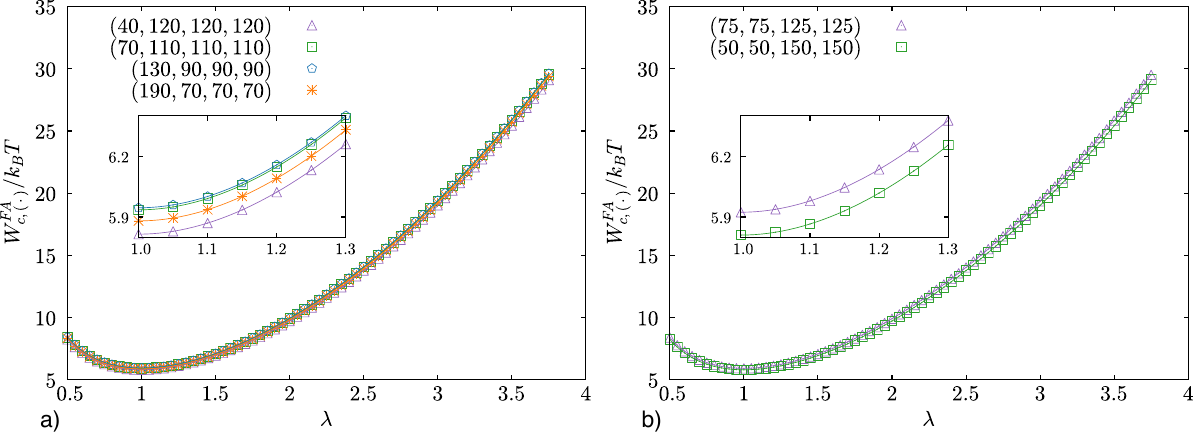}
	\caption{
            \begin{hlbreakable}\capTitle{Accuracy of frame averaging RVE closed-form approximations for Kuhn and Gr\"{u}n chains.}\end{hlbreakable}
            Agreement of closed-form approximation for frame averaging RVEs with Kuhn and Gr\"{u}n chains (exhibiting finite extensibility) in uniaxial deformation for the a) 1-3 RVEs and b) 2-2 RVEs.
	\label{fig:FA-WKG-fit}}
\end{figure}

\paragraph*{`S'-curves.}
Next we consider the features of the `S' shaped stress-strain curves that can be captured with the approach developed herein.
Qualitatively, the $2$ limits agree; a discussion of where and why they diverge will be discussed in the next paragraph.
Since the strain stiffening regime in the `S' is due to the finite extensibility of the chains, we again consider RVEs consisting of Kuhn and Gr\"{u}n chains.
Let $\allClinksFreeEnergyDensity$ be given by \Eqref{eq:polydisperse-FED}.
For the free rotation limit, the 1-3 RVEs are approximated by \Eqref{eq:bimodal-analytical-1} with $\auxSign{2} = -1$ and $\auxSign{3} = 1$, and the 2-2 RVEs are approximated by \Eqref{eq:bimodal-analytical-3}.
For the frame averaging limit, we again follow the approximation provided in \Eqref{eq:gaussian-chain-unique-junction-position-relaxation} for all polydisperse RVEs.
\Figref{fig:S-curves} shows $\PK = \partial \allClinksFreeEnergyDensity / \partial \F$ for the uniaxial stretch, $\pStretchSymbol$.
Recall that $p$ represents the probability that a chain has $\n_a$ monomers, and $1-p$ is the probability of a chain with $\n_b$ monomers.
In \Figref{fig:S-curves} a), $p$ and $\nmean$ are fixed at $p = 0.5$ and $\nmean = 75$, respectively, and the network has increasing disparities between $\n_a$ and $\n_b$, corresponding to increasing degrees of network polydispersity.
Notice that increasing polydispersity leads to the strain stiffening regime (i.e., sharp increase in stress) occurring at lower stretches and lower stresses.
In \Figref{fig:S-curves} b), $\n_a$ and $\n_b$ are fixed at $50$ and $100$, respectively, and $p$ varies as $0.1, 0.25, 0.5, 0.75$, and $0.9$.
Although strain stiffening occurs near the same stretch, the shape of the strain stiffening regime changes with $p$.
More precisely, the onset of strain stiffening is more gradual, and less sharp with increasing $p$.
\begin{hlbreakable}This result qualitatively agrees with experimental findings from J.E. Mark and colleagues regarding the emergence of strain-stiffening in bimodal tetrafunctionally cross-linked PDMS elastomers \cite{mark1984dependence,tang1984effect,andrady1980model,llorente1981modelXI,llorente1981modelXIII,mark1994elastomeric}.\end{hlbreakable}
\begin{hlbreakable}A few key takeaways become apparent here: \begin{inparaenum}[(i)] \item the minimum $\n_i$ appears to controls the stretch at which strain stiffening begins to occur, and \item  $p$ modulates how sharply the stress begins to diverge within that regime.\end{inparaenum}\end{hlbreakable}
The bimodal polymer network model developed herein is one that can capture more complex features in `S' stress-strain curves while only introducing a few additional model parameters that have clear physical meanings, are directly tied to the structure of the polymer network, and can potentially be measured, designed, and perhaps even synthesized with a certain degree of control.
\begin{figure}
	\centering
	\includegraphics[width=\linewidth]{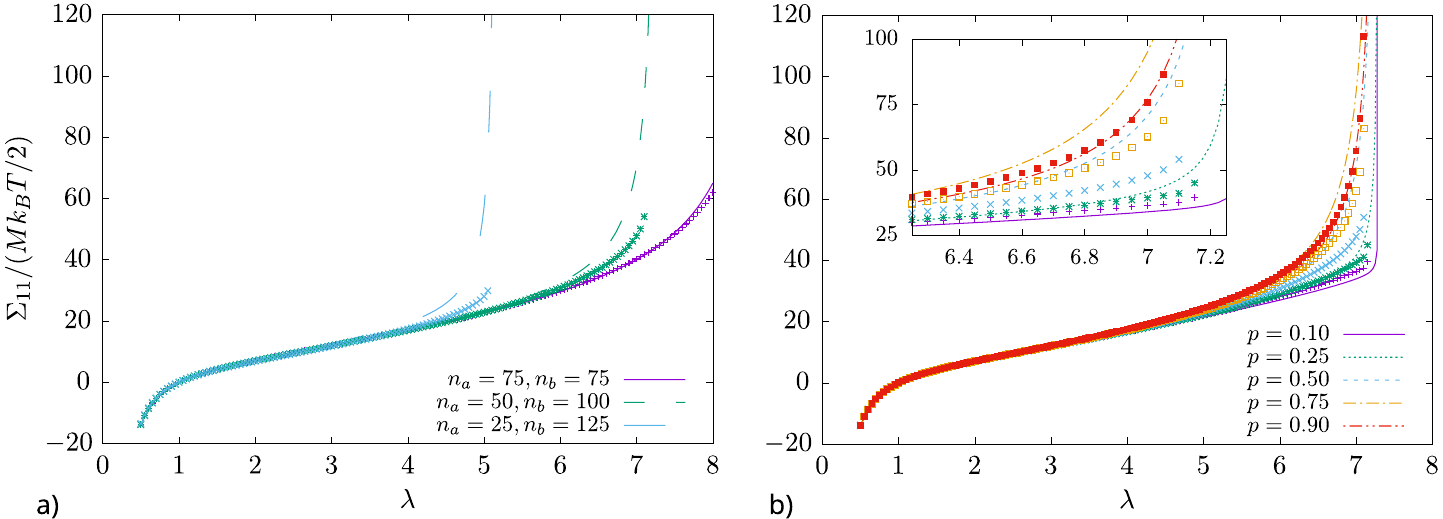}
	\caption{
            \capTitle{`S' shaped stress-strain curves common in soft polymer network elasticity.}
		a) Stress-strain curves for bimodal networks with $p = 0.5$, $\nmean = 75$, and varying $\n_a$ \hl{(monomer number of shorter chains)} and $\n_b$ (longer chains) (lines: free rotation limit; markers: frame averaging limit).
        Greater disparity between $\n_a$ and $\n_b$ leads to earlier onset of strain stiffening.
        b) Stress-strain curves for networks with $\n_a = 50$, $\n_b = 100$, and varying $p$, where $p$ is the probability that a chain has $\n_a$ monomers and $1-p$ the probability of $\n_b$ monomers (lines: free rotation limit; markers: frame averaging limit).
        Increasing $p$ leads to a more gradual, less sharp stress increase in the strain stiffening regime.
	}
	\label{fig:S-curves}
\end{figure}

\paragraph*{Comparison of free rotation and frame averaging limits.}
In principle, the free rotation limit has access to a greater number of relaxation modes than the frame averaging limit; consequently, the free energy and stress predicted by the free rotation limit should theoretically always be less than or equal to its frame averaging counterpart. Surprisingly, \Figref{fig:S-curves} shows that, in many cases, the opposite trend is observed. This remarkable observation is not derived from physical reasons, but is instead an artifact of approximation error inherent to the closed-form solutions derived for the minimum energy frame. To be specific, there is an error in the approximation for the optimal frame that grows with both increasing deformation and an increasing degree of polydispersity. When the approximation for the optimal frame in the free rotation limit is expected to be accurate -- such as when the degree of polydispersity is low -- the free rotation and frame averaging limits agree nearly exactly. Altogether, this suggests that for the elasticity of polydisperse networks, averaging over all frames is generally as effective as minimizing over frames for modeling soft networks; furthermore, averaging over all frames tends to be superior to attempting to approximate the optimal frame for networks with larger degrees of polydispersity and at larger deformations.
This, combined with the simplicity of its formulation, make it an advantageous approach for modeling the elasticity of polydisperse networks.
It remains open how the two limits compare for polydisperse networks with multiphysics loadings, such as electroelasticity (e.g.,~\cite{grasinger2020architected,grasinger2023polymer,grasingerIPflexoelectricity}).
This presents an interesting topic for future work.

\section{Conclusion} \label{sec:conclusion}

\paragraph*{Summary.}
In this work, we introduce a new approach to polymer network modeling that shifts the perspective from polymer chains as the basic unit to cross-links and their connected chains.
A central feature of this framework is that cross-link junction positions are allowed to relax to satisfy local force balance.
Regarding the frame (orientation) of the Representative Volume Element (RVE), we explored two distinct limiting behaviors: the free rotation limit and the frame averaging limit.
The free rotation limit assumes the cross-link rotates to an optimal orientation that minimizes free energy, offering a physically intuitive mechanism that aligns well with quintessential experimental data for unimodal networks.
Conversely, the frame averaging limit incorporates structural heterogeneity by averaging over all possible cross-link orientations.
\hl{We speculate that the free rotation limit is more appropriate for softer materials with compliant internal structure, such as soft elastomers and gels, whereas the frame averaging limit may be especially well-suited to networks with significant torsional or bending stiffness at junctions, such as many biopolymer networks. While the two limits largely agree in their predictions for elastic response, important differences may emerge in multiphysics settings, in the presence of mechanical failure, or wherever rotational kinematics couple strongly to other physics.
Altogether, the overarching framework allows for non-affine deformation and a more efficient distribution of stretches and forces within the network compared to more restrictive assumptions like equal stretch or equal force theories, enabling the direct linking of statistical descriptors of the network structure to macroscopic mechanical response.}

To address potential computational challenges of integrating and optimizing in high-dimensional parameter spaces, we derived closed-form approximations for both limits.
From an analytical and numerical standpoint, however, the frame averaging limit proved to be cleaner to work with.
Physical insights gained from this framework reveal that greater variance in monomer numbers typically leads to a softening at the RVE level.
Furthermore, topological differences are shown to fundamentally alter the mechanical response even when the average chain length is conserved.
Finally, by investigating `S'-shaped stress-strain curves for bimodal networks, it was found that the onset of strain stiffening is governed by the shorter chains, while the sharpness of the stiffening reflects their proportion.

\paragraph*{Outlook.}
Given the observation that the frame averaging limit generally provides a good approximation to the free rotation limit, a potential simple heuristic for modeling ``soft'' polymer networks may be to define the free energy of a RVE as the minimum of their respective predictions.
Such an approach would naturally bound the energy of the system, leveraging the strengths of the frame averaging limit in regimes of high polydispersity while retaining the physical intuition of the free rotation limit where applicable.
While the main thrust of this work was to proceed analytically, it could prove to be both fruitful and interesting to explore the implications of this framework numerically.

A potential issue with the numerical approximation of stress in this context is the difficulty of approximating derivatives of the free energy density via finite differences.
Although possible, this approach is likely computationally expensive because the formulation involves optimization problems nested within multidimensional integration.
The core difficulty lies in the fact that the derivatives are currently ``outside'' these operations.
Ideally, derivatives would be ``pushed inside'' the operators; while this can be readily done with respect to the integrals for non-pathological networks, derivatives cannot be brought inside the inner optimization problems.
To address this, we propose an approximation inspired by the concept of a ``soft max'' (or ``soft min''), as popularized by the machine learning community (e.g.,~\cite{lecun2006tutorial}).
Here, the optimization problem is regularized and reposed as an integration, resembling Boltzmann statistics where a computational parameter, $\beta$, takes the place of inverse temperature:
\begin{align}
    \inf_{x} \: W(x) &\approx \frac{\int \df{x} \: \left( W e^{-\beta W}\right)}{\int \df{x} \: e^{-\beta W}},\\
    \arg \inf_{x} \: W(x) &\approx \frac{\int \df{x} \: \left(x e^{-\beta W}\right)}{\int \df{x} \: e^{-\beta W}},
\end{align}
provided $\beta \gg 1$.
As $\beta$ increases, the approximation approaches the true minimum, though numerical stability must be carefully managed.
The key advantage of replacing the optimization problem with an integral is that it allows one to push derivatives all the way down to the integrand, where they can be taken analytically or handled more easily by alternative means, such as automatic differentiation.
While the motivating example was an approximation of stress, this idea applies to the many quantities of interest that can be posed as derivatives of the free energy density.

\paragraph*{Closure.} Despite the focus on relatively simple polydisperse chain and cross-link distributions, this work opens new possibilities for investigating synthesis-morphology-property relationships in polymer networks in a rational and efficient manner. 
Future research directions include incorporating a distribution of chain conformations within the polymer network homogenization framework, investigating the implications of polydispersity on fracture mechanics and toughness in heterogeneous networks, polydisperse multifunctional polymer networks, and phase transitions in polydisperse biopolymer networks. 
The ability to model these networks on the basis of the statistical descriptors of the structure of the network represents a step toward the targeted design of polymer networks with specific mechanical properties.

\begin{hlbreakable}
\paragraph*{Postscript.} To aid the interested reader, we summarize the step-by-step process needed to computationally implement the model for numerical predictions in \Fref{app:model-implementation}.
%
\end{hlbreakable}

\section*{Software availability}
The code(s) used for analysis and generation of data for this work is available at \url{https://github.com/grasingerm/polydisperse-network-models} and \url{https://github.com/jasonmulderrig/polydisperse-polymer-networks}.

\begin{acknowledgments}
MG thanks Gal deBotton and Kaushik Dayal for insightful discussions and their encouragement in pursuit of the topic.
MG acknowledges the support of the Air Force Research Laboratory.
JPM gratefully acknowledges the support of the National Research Council (NRC) Research Associateship Program (administered by the National Academies of Sciences, Engineering, and Medicine).
MB acknowledges the support of Sandia National Laboratories.
Sandia National Laboratories is a multi-mission laboratory managed and operated by National Technology and Engineering Solutions of Sandia, LLC, a wholly owned subsidiary of Honeywell International, Inc., for the U.S. Department of Energy’s National Nuclear Security Administration under Contract No. DE-NA0003525. Any subjective views or opinions expressed in the paper do not necessarily represent the views of the U.S. Department of Energy or the U.S. Government. The U.S. Government retains and the publisher, by accepting the article for publication, acknowledges that the U.S. Government retains a non exclusive, paid-up, irrevocable, world-wide license to publish or reproduce the published form of this manuscript, or allow others to do so, for U.S. Government purposes.
\end{acknowledgments}

\appendix

\begin{hlbreakable}

\section{Nomenclature} \label{app:nomenclature}

\renewcommand{\arraystretch}{1.2}

\begin{longtable}{ll}

    \toprule
    \textbf{Symbol} & \textbf{Description} \\
    \midrule
    \endfirsthead
    
    \toprule
    \textbf{Symbol} & \textbf{Description} \\
    \midrule
    \endhead
    
    \bottomrule
    \endfoot
    RVE & Representative volume element \\
    $\collection{x_1, \dots, x_n},~\collection{x_i}_{i=1}^n$ & Set with elements $x_1, \dots, x_n$ and set shorthand notation \\
    $\left(x_1, \dots, x_n\right),~\left(x_i\right)_{i=1}^n$ & Tuple with elements $x_1, \dots, x_n$ and tuple shorthand notation \\
    $\n$ & Number of monomers in a chain \\
    $\mLen$ & Monomer length \\
    $\cLen$ & Chain contour length \\
    $\rvec$ & End-to-end chain vector (i.e., the vector from the beginning of the chain to its end) \\
    $\left|\generic\right|$ & \parbox{4.75in}{Absolute value of a scalar; Euclidean norm of a vector; Number of elements in a set (i.e., the cardinality of a set)} \\
    $\rmag$ & End-to-end chain length \\
    $\chainStretch$ & Chain stretch \\
    $\chainConformationProbDens(\rmag)$ & Chain conformation probability density \\
    $\generic_{\text{G}}$ & Parameter associated with Gaussian freely-jointed chains \\
    $\generic_{\text{KG}}$ & Parameter associated with Kuhn and Gr\"{u}n freely-jointed chains \\
    $\Lang,~\invLang$ & Langevin function and its inverse \\
    $\chainDistanceProbDens(\rmag)$ & \parbox{4.75in}{End-to-end chain length probability distribution for a free chain (irrespective of direction in space)} \\
    $\rmsChainLength$ & Root-mean-square end-to-end chain length \\
    $\critChainLength$ & Critical end-to-end chain length \\
    $\chainFreeEnergy$ & Chain free energy \\
    $\kB$ & Boltzmann's constant \\
    $\T$ & Absolute temperature \\
    $\forcevec$ & Chain force vector \\
    $\Xvec$ & Material point in the reference configuration \\
    $\Xvec_i$ & End position of chain $i$ in the reference configuration \\
    $\defMap$ & Deformation map \\
    $\xvec$ & Material point in the deformed configuration \\
    $\xvec_i$ & End position of chain $i$ in the deformed configuration \\
    $\F$ & Deformation gradient \\
    $\delta \Xvec,~\delta \xvec$ & \parbox{4.75in}{Infinitesimal change in position for a material point in the reference and deformed configurations} \\
    $\mathbf{u}$ & Deformation at a material point \\
    $\cGreen$ & Right Cauchy-Green tensor \\
    $\trans{\generic}$ & Transpose of a tensor \\
    $\det\left(\generic\right)$ & Determinant of a tensor \\
    $\SOThree$ & Group of three-dimensional rotations (i.e., proper orthogonal transformations) \\
    $\strTens$ & Right stretch tensor \\
    $\polRot$ & Rotation tensor from the polar decomposition of $\F$ such that $\F = \polRot \strTens$ \\
    $\dir{\generic}$ & Unit vector \\
    $\pStretchSymbol$ & Principal stretch \\
    $\pDir$ & Principal direction \\
    $\delta_{ij}$ & Kronecker delta \\
    $\pFrame$ & \parbox{4.75in}{Rotation tensor which rotates the Euclidean basis to align with the principal frame and diagonalize $\strTens$} \\
    $\strDiag$ & Principal stretch tensor \\
    $\diag\left(\generic\right)$ & Diagonal tensor operator \\
    $\freeEnergyDensity$ & Free energy density \\
    $\PK$ & Nominal stress (i.e., first Piola-Kirchhoff stress) tensor \\
    $\chainSpace$ & Space of polydisperse polymer network cross-links \\
    $\inst$ & Cross-link structure \\
    $\numChains$ & Number of chains connected to the cross-link junction point  \\
    $\yc$ & Position of the cross-link junction point \\
    $\RVEdomain$ & Domain of the cross-link RVE in its reference state \\
    $\Conv\left(\generic\right)$ & Convex hull operator \\
    $\euclid{1},~\euclid{2},~\euclid{3}$ & Euclidean basis vectors for $\Reals^3$ \\
    $\probDensInst$ & Polymer network cross-link probability density function \\
    $\probDens$ & Frame-invariant cross-link structure probability density function \\
    $\genRotRef$ & \parbox{4.75in}{Cross-link orientation tensor relative to a reference set of chain end positions (i.e., the preferred/ground-state orientation of the cross-link within an elastic background)} \\ 
    $\genRot$ & Cross-link rotation tensor \\
    $\probDensF{\genRotRef}$ & Reference cross-link orientation probability density function \\
    $\partitionFunction$ & Partition function \\
    $\rvedomain$ & Domain of the cross-link RVE associated with its current (rotated and deformed) state \\
    $\PEQ\left(\genRot\right)$ & \parbox{4.75in}{Torsional elastic energy from the surrounding network when rotating the cross-link by $\genRot$ prior to deformation} \\
    $\kNetTors$ & Torsional stiffness modulus \\
    $\lVert \generic \rVert$ & Tensor norm \\
    $\generic^{\star}$ & Optimal solution of a parameter \\
    $\ycQO$ & Position of the cross-link junction in the frame averaging limit where $\genRot = \genRotRef$ \\
    $\generic^{FR}$ & Generic parameter associated with the free rotation limit \\
    $\generic^{FA}$ & Generic parameter associated with the frame averaging limit \\
    $\Wchains$ & Total free energy of the chains in a cross-link structure \\
    $\identity$ & Identity tensor \\
    $\otimes$ & Dyadic product \\
    $\orderOf{\generic}$ & Order of \\
    $\clinkFreeEnergy$ & Free energy of a cross-link structure \\
    $\unitSphere$ & The unit 2-sphere (i.e., the boundary of the unit sphere in three-dimensional space) \\
    $\Rdir$ & Chain orientation vector in the full network model \\
    $\allClinksFreeEnergyDensity$ & Polymer network free energy density \\
    $\crossDensity$ & Volumetric cross-link number density \\
    $\probDensChain$ & Chain monomer number probability density function \\
    $\probDensClinker$ & Cross-linker functionality probability density function \\
    $p$ & Probability \\
    $\delta\left(\generic\right)$ & Dirac delta function \\
    $\rodVec$ & Rodrigues vector \\
    $\varphi$ & Angle of rotation \\
    $\dir{\vvec{u}}$ & Axis of rotation \\ 
    $\tens{A}$ & Generating skew-symmetric tensor \\
    $\genRot\left(\rodVec\right)$ & \parbox{4.75in}{Rotation matrix with respect to the Rodrigues vector as per the exponential or ``axis-angle'' representation} \\
    $\nullvec$ & Origin; null vector \\
    $\Ball{2\pi}{\nullvec}$ & Ball of radius $2\pi$ centered at the origin \\
    $\Rmag$ & Root-mean-square end-to-end chain length for chains in a monodisperse cross-link \\
    $\nuVec$ & Orthogonal direction vector \\
    $\changeCoord{\generic}$ & Real proper orthogonal similarity transformation of a tensor \\
    $\Tr\left(\generic\right)$ & Trace \\
    $\eulerx, \eulery, \eulerz$ & Rotation angles \\
    $\ncollection{\nPolydisperseSet}$ & Tuple of $\nPolydisperseSet$ monomer numbers in the first, \dots, $k$th chains of a cross-link \\
    $\clinkFreeEnergyPolydisperse{\nPolydisperseSet}$ & Free energy of a cross-link RVE with $\ncollection{\nPolydisperseSet}$ monomers in its chains \\
    $\delRot$, $\genRot\left(\delRot\right)$ & Perturbation of the Rodrigues vector and the corresponding perturbation of the rotation \\
    $\delYc$ & Perturbation of the cross-link position \\
    $\genRotMono$, $\ycMono$ & \parbox{4.75in}{Monodisperse cross-link solutions to the rotational and cross-link positional relaxation problem} \\
    $\smallparam$ & Approximation error \\
    $\clinkInnerFreeEnergy$ & \parbox{4.75in}{Inner free energy cost for a polydisperse cross-link RVE when perturbing about the known solution for the monodisperse case} \\
    $\widetilde{\generic}$ & Solution for an argument of the infimum of $\clinkInnerFreeEnergy$ \\
    $\sigma$ & Permutation group \\
    $\genGroup$ & Symmetry group of the cross-link RVE structure \\
    $\mathbb{S}_{\numChains}$ & Symmetric group of $\numChains$ elements \\
    $\sigma \cdot \generic$ & Group action of $\sigma$ on $\generic$ \\
    $\ngeomean$ & Monomer number geometric mean for a cross-link with $\ncollection{\nPolydisperseSet}$ monomers in its chains \\
    $\nmean$ & Monomer number arithmetic mean for a cross-link with $\ncollection{\nPolydisperseSet}$ monomers in its chains \\
    $\eff$ & Monomer number efficiency, $\ngeomean / \nmean$. \\
    $\weightFactorSOThreeQuad$ & Weight factor for an $\SOThree$ quadrature point \\
    $\numSOThreeQuad$ & Number of $\SOThree$ quadrature points \\
    $\weightFactorSphQuad$ & Weight factor for a spherical quadrature point \\
    $\numSphQuad$ & Number of spherical quadrature points \\
    $\polarAngle$ & Polar angle (with respect to the positive polar axis in a spherical coordinate system) \\
    $\azimuthalAngle$ & \parbox{4.75in}{Azimuthal angle (the angle of rotation of the radial line about the polar axis in a spherical coordinate system)} \\
    $\left(1, \polarAngle, \azimuthalAngle\right)$ & Spherical quadrature point (in spherical coordinates) \\
    $\spinAngle$ & Spin angle \\
    $\numSpinQuad$ & Number of discretized spin angles \\
    $\delYcQO$ & Perturbation of the cross-link position in the frame averaging limit where $\genRot = \genRotRef$ \\
    $\n_a,\n_b$ & \parbox{4.75in}{Chain monomer numbers in a bimodal polymer network, where the probability of a given chain having $\n_a$ and $\n_b$ monomers is $p$ and $1-p$, respectively} \\
    $\binom{x}{y}$ & Binomial coefficient (``$x$ choose $y$'') \\
    1-3 RVE & Bimodal 4-chain RVE with one chain of one length and three chains of the other length \\
    2-2 RVE & Bimodal 4-chain RVE with two chains of each length \\
    $\nA,\nB$ & Chain monomer numbers in the 1-3 RVE with multiplicity 1 and 3, respectively \\
    $\E$ & Tangent stiffness modulus \\
    $\nSet$ & Domain of $\n$ (discrete) \\
    $\numChainsSet$ & Domain of $\numChains$ (discrete)
\end{longtable}
\end{hlbreakable}

\section{Numerical implementation of the inverse Langevin function} \label{app:numerical-implementation-inv-langevin-func}

In this work, we numerically approximate the inverse Langevin function $\invLang\left(x\right)$ with the Jedynak $R_{9,2}$ approximant~\cite{jedynak2017new},
\begin{equation}
    \invLang\left(x\right) = \frac{x\left(3-1.00651x^2-0.96225x^4+1.47353x^6-0.48953x^8\right)}{(1-x)(1+1.01524x)}.
\end{equation}
Using this inverse Langevin approximant, we obtain the following approximation to the Kuhn and Gr\"{u}n chain free energy in \Eqref{eq:kuhn-grun-free-energy} (as per~\cite{jedynak2017new}),
\begin{align}
    \frac{\KGFreeEnergy\left(\rmag\right)}{\n \kB \T} & = -0.015 + 0.0072\left(\frac{\rmag}{\cLen}\right) + 0.4887\left(\frac{\rmag}{\cLen}\right)^2 - 0.0025\left(\frac{\rmag}{\cLen}\right)^3 - 0.0035\left(\frac{\rmag}{\cLen}\right)^4 - 0.0015\left(\frac{\rmag}{\cLen}\right)^5 \nonumber \\
    & \qquad - 0.1627\left(\frac{\rmag}{\cLen}\right)^6 + 0.001\left(\frac{\rmag}{\cLen}\right)^7 + 0.0603\left(\frac{\rmag}{\cLen}\right)^8 - \ln\left(1-\frac{\rmag}{\cLen}\right) - 0.992\ln\left(0.985+\frac{\rmag}{\cLen}\right).
\end{align}
The above is correspondingly employed within the Kuhn and Gr\"{u}n polymer chain conformation probability density $\KGChainConformationProbDens\left(\rmag\right)$ in \Eqref{eq:FJC}.

The derivative of the inverse Langevin function, $\invLangPrime\left(x\right)$, is also of importance in this work. We approximate $\invLangPrime\left(x\right)$ by taking the derivative of the Jedynak $R_{9,2}$ approximant,
\begin{align}
    \invLangPrime\left(x\right) & = \frac{\partial \invLang\left(x\right)}{\partial x} = \frac{\invLangPrime\left(x\right)_{\text{num}}}{\invLangPrime\left(x\right)_{\text{denom}}}, \\
    \invLangPrime\left(x\right)_{\text{num}} & = \left(3-3.01953x^2-4.81125x^4+10.31471x^6-4.40577x^8\right)(1-x)(1+1.01524x) \nonumber \\
    & \qquad - x(0.01524-2.03048x)\left(3-1.00651x^2-0.96225x^4+1.47353x^6-0.48953x^8\right), \nonumber \\
    \invLangPrime\left(x\right)_{\text{denom}} & = \left((1-x)(1+1.01524x)\right)^2. \nonumber
\end{align}

\section{Recommended chain end positions for polydisperse cross-link RVEs} \label{app:polydisperse-cross-link-structures}

In this Appendix section, we apply the principles introduced in \Fref{sec:cross-link-structures} to define the set of chain end positions, $\collection{\Xvec_i}_{i=1}^\numChains$, and the corresponding set of unit vectors, $\collection{\dir{\Xvec}_i}_{i=1}^\numChains$ ($\dir{\Xvec}_i=\Xvec_i/\left|\Xvec_i\right|$), for polydisperse cross-link RVEs where $\numChains \in [3, 8]$.
Recall that we place these unit vectors according to the Thomson problem: the equilibrium configuration of electrostatically repulsive particles on the unit sphere~\cite{thomson1904xxiv}.
Ideally, we also place these unit vectors to maximize reflectional symmetry about the origin (especially if the Thomson problem is not able to be satisfied).
In what follows, we provide $\collection{\Xvec_i}_{i=1}^\numChains$ that either satisfies or closely satisfies these two considerations for $\numChains \in [3, 8]$.
For each case where $\collection{\Xvec_i}_{i=1}^\numChains$ does not exactly satisfy both of these considerations, it is not known (at least not to the authors) if there even exists some $\collection{\Xvec_i}_{i=1}^\numChains$ that does.
\begin{itemize}
    \item[--] $\numChains=3$: There are many choices for $\collection{\dir{\Xvec}_i}_{i=1}^3$ that solve the Thomson problem and maximize reflectional symmetry, where $\collection{\dir{\Xvec}_i}_{i=1}^3$ define the vertices of an equilateral triangle that rests on some equatorial plane in a unit sphere. One such example, inspired by the $3$-chain models of~\cite{james1943theory},~\cite{elias2006non}, and~\cite{adolf1987computer} is provided below\footnote{This RVE was formulated through the following procedure: \begin{inparaenum}[(i)]
        \item As per the $3$-chain model~\cite{james1943theory}, initially $\collection{\dir{\Xvec}_i = \euclid{i}}_{i=1}^3$ and $\yc = \nullvec$; \item translate $\collection{\dir{\Xvec}_1, \dir{\Xvec}_2, \dir{\Xvec}_3}$ by $\left(-\frac{1}{3}, -\frac{1}{3}, -\frac{1}{3} \right)$ so that $\yc = \nullvec$ is now coincident with the center-of-mass of the RVE; and \item dilate the cube coinciding with the translated $\collection{\dir{\Xvec}_1, \dir{\Xvec}_2, \dir{\Xvec}_3}$ such that $\collection{\left|\dir{\Xvec}_i - \yc\right| = 1}_{i=1}^3$.
    \end{inparaenum}}
    \begin{equation} \label{eq:three-chain-xs}
        \begin{split}
            \Xvec_1 = \left(\rmsChainLength\right)_1 \left(\sqrt{\frac{2}{3}}, -\sqrt{\frac{1}{6}}, -\sqrt{\frac{1}{6}}\right),& \quad \quad
            \Xvec_2 = \left(\rmsChainLength\right)_2 \left(-\sqrt{\frac{1}{6}}, \sqrt{\frac{2}{3}}, -\sqrt{\frac{1}{6}}\right), \\
            \Xvec_3 = \left(\rmsChainLength\right)_3 &\left(-\sqrt{\frac{1}{6}}, -\sqrt{\frac{1}{6}}, \sqrt{\frac{2}{3}}\right).
        \end{split}
    \end{equation} 
    \begin{remark}
        \emph{Special considerations for the $\numChains=3$ RVE domain.}
        Recall that $\RVEdomain$ is defined such that $\RVEdomain \subset \Reals^3$ ($\rvedomain$ is defined similarly). However, for the $\numChains=3$ RVE, the three chains $\collection{\Xvec_i}_{i=1}^3$ always lie on some shared plane, and thus, $\RVEdomain \subset \Reals^2$. We rectify this conflict by noting that \begin{inparaenum}[(1)] \item there exists a rectangular prism (in $\Reals^3$) tightly bounding $\collection{\Xvec_i}_{i=1}^3$, and \item the boundary of said rectangular prism is taken as $\RVEdomain$.\footnote{During deformation, $\rvedomain$ is defined analogously.}
    \end{inparaenum}
    \end{remark}
    \item[--] $\numChains=4$: The Thomson problem is solved and reflectional symmetry is maximized by taking $\collection{\dir{\Xvec}_i}_{i=1}^4$ to be coincident with the vertices of a regular tetrahedron. This aligns with the topology of the $4$-chain model~\cite{flory1943statistical, treloar1943elasticity}:
    \begin{equation}
        \begin{split}
            \Xvec_1 &= \left(\rmsChainLength\right)_1 \left(0, 0, 1\right), \quad \quad
            \Xvec_2 = \left(\rmsChainLength\right)_2 \left(0, \frac{2\sqrt{2}}{3}, -\frac{1}{3}\right), \\
            \Xvec_3 &= \left(\rmsChainLength\right)_3 \left(\sqrt{\frac{2}{3}}, -\frac{\sqrt{2}}{3}, -\frac{1}{3}\right), \quad \quad
            \Xvec_4 = \left(\rmsChainLength\right)_4 \left(-\sqrt{\frac{2}{3}}, -\frac{\sqrt{2}}{3}, -\frac{1}{3}\right).
        \end{split}
    \end{equation}
    \item[--] $\numChains=5$: The Thomson problem is solved by taking $\collection{\dir{\Xvec}_i}_{i=1}^5$ to be coincident with the vertices of an equilateral triangular bipyramid:
    \begin{equation} \label{eq:five-chain-xs}
            \begin{split}
                \Xvec_1 = \left(\rmsChainLength\right)_1 \left(0, 0, 1\right),& \quad \quad
                \Xvec_2 = \left(\rmsChainLength\right)_2 \left(1, 0, 0\right), \\
                \Xvec_3 = \left(\rmsChainLength\right)_3 \left(-\frac{1}{2}, \frac{\sqrt{3}}{2}, 0\right),& \quad \quad
                \Xvec_4 = \left(\rmsChainLength\right)_4 \left(-\frac{1}{2}, -\frac{\sqrt{3}}{2}, 0\right), \\
                \Xvec_5 = &\left(\rmsChainLength\right)_5 \left(0, 0, -1\right).
            \end{split}
        \end{equation}
    \item[--] $\numChains=6$: The Thomson problem is solved and reflectional symmetry is maximized by taking $\collection{\dir{\Xvec}_i}_{i=1}^6$ to be coincident with the vertices of a regular octahedron. This aligns with the topology of the $6$-chain model~\cite{grasinger2023polymer}:
    \begin{equation} \label{eq:six-chain-xs}
        \Xvec_i = \begin{cases}
       \; \; \, \left(\rmsChainLength\right)_i \euclid{i}, &\quad i = 1, 2, 3 \\
               -\left(\rmsChainLength\right)_i \euclid{i-3}, &\quad i = 4, 5, 6
        \end{cases}.
    \end{equation}
    \item[--] $\numChains=7$: The Thomson problem is solved by taking $\collection{\dir{\Xvec}_i}_{i=1}^7$ to be coincide with the vertices of an equilateral pentagonal bipyramid:
    \begin{equation} \label{eq:seven-chain-xs}
            \begin{split}
                \Xvec_1 = \left(\rmsChainLength\right)_1 \left(0, 0, 1\right),& \quad \quad
                \Xvec_2 = \left(\rmsChainLength\right)_2 \left(1, 0, 0\right), \\
                \Xvec_3 = \left(\rmsChainLength\right)_3 \left(\frac{1}{4}\left(\sqrt{5} - 1\right), \sqrt{\frac{5}{8} + \frac{\sqrt{5}}{8}}, 0\right),& \quad \quad
                \Xvec_4 = \left(\rmsChainLength\right)_4 \left(\frac{1}{4}\left(-1 - \sqrt{5}\right), \sqrt{\frac{5}{8} - \frac{\sqrt{5}}{8}}, 0\right), \\
                \Xvec_5 = \left(\rmsChainLength\right)_5 \left(\frac{1}{4}\left(-1 - \sqrt{5}\right), -\sqrt{\frac{5}{8} - \frac{\sqrt{5}}{8}}, 0\right),& \quad \quad
                \Xvec_6 = \left(\rmsChainLength\right)_6 \left(\frac{1}{4}\left(\sqrt{5} - 1\right), -\sqrt{\frac{5}{8} + \frac{\sqrt{5}}{8}}, 0\right), \\
                \Xvec_7 = &\left(\rmsChainLength\right)_7 \left(0, 0, -1\right).
            \end{split}
        \end{equation}
    \item[--] $\numChains=8$: Reflectional symmetry is maximized by taking $\collection{\dir{\Xvec}_i}_{i=1}^8$ to be coincident with the vertices of a regular cube. This aligns with the topology of the $8$-chain model~\cite{arruda1993threee}:
    \begin{equation} \label{eq:eight-chain-xs}
        \Xvec_i = \frac{\left(\rmsChainLength\right)_i}{\sqrt{3}} \left(\pm 1, \pm 1, \pm 1 \right), \quad i = 1, 2, \dots, 8.
    \end{equation}
    The Thomson problem is solved by taking $\collection{\dir{\Xvec}_i}_{i=1}^8$ to be coincident with the vertices of a regular square anti-prism. One can realize such an RVE by twisting any one of the cube faces in the $8$-chain model by $\pi/4$. We provide one such solution below:
    \begin{equation} \label{eq:square-antiprism-eight-chain-xs}
        \Xvec_i = \begin{cases}
       \; \; \, \frac{\left(\rmsChainLength\right)_i}{\sqrt{3}} \left(1, \pm 1, \pm 1 \right), &\quad i = 1, 2, 3, 4 \\
               \frac{\left(\rmsChainLength\right)_i}{\sqrt{2}} \left(-1, \pm 1, 0\right), &\quad i = 5, 6 \\
               \frac{\left(\rmsChainLength\right)_i}{\sqrt{2}} \left(-1, 0, \pm 1\right), &\quad i = 7, 8 \\
        \end{cases}.
    \end{equation}
\end{itemize}

\section{Derivatives of cross-link chain free energy with respect to junction position} \label{app:cross-link-chain-free-energy-derivatives-junction-position}

Recall $\Wchains\left(\F, \genRot, \yc\right) = \sum_{i=1}^\numChains \chainFreeEnergy_i\left(\right|\F \genRot \Xvec_i - \yc\left|\right)$. Also, recall here that $\rvec_i = \F \genRot \Xvec_i - \yc$, and thus, $\rmag_i = \left|\rvec_i\right| = \left|\F \genRot \Xvec_i - \yc\right|$. For Gaussian chains,
\begin{equation}
    \frac{\partial \Wchains}{\partial \yc} = -\frac{3\kB \T}{\mLen}\sum_{i=1}^\numChains \frac{\rvec_i}{\cLen_i},
\end{equation}
\begin{equation} \label{eq:gaussian-cross-link-free-energy-hessian-wrt-junction-position}
    \frac{\partial^2 \Wchains}{\partial \yc \partial \yc} = \frac{3\kB \T}{\mLen^2}\sum_{i=1}^\numChains \frac{1}{\n_i}\identity.
\end{equation}
For Kuhn and Gr\"{u}n chains,
\begin{equation}
    \frac{\partial \Wchains}{\partial \yc} = -\frac{\kB \T}{\mLen}\sum_{i=1}^\numChains \invLang\left(\frac{\rmag_i}{\cLen_i}\right)\frac{\rvec_i}{\rmag_i},
\end{equation}
\begin{equation}
    \frac{\partial^2 \Wchains}{\partial \yc \partial \yc} = \frac{\kB \T}{\mLen^2}\sum_{i=1}^\numChains\frac{1}{\n_i}\left(\invLangPrime\left(\frac{\rmag_i}{\cLen_i}\right)\left(\frac{\rvec_i \otimes \rvec_i}{\rmag_i^2}\right)+\left(\frac{\invLang\left(\frac{\rmag_i}{\cLen_i}\right)}{\frac{\rmag_i}{\cLen_i}}\right)\left(\identity - \left(\frac{\rvec_i \otimes \rvec_i}{\rmag_i^2}\right)\right)\right),
\end{equation}
where $\rmag_i > 0$ must hold.
The numerical implementation of $\invLang$ and $\invLangPrime$ is detailed in \Fref{app:numerical-implementation-inv-langevin-func}.
Note that in this Appendix section we have assumed that all the chains in the cross-link have the same monomer length $\mLen$ (the equations here can be easily reformulated for the case where $\mLen$ varies between chains).

\section{Nondimensional representation of the cross-link partition function and cross-link free energy} \label{app:nondimensional-free-energy-representation}

It is often times convenient (especially for numerical implementation) to represent free energies and their derivatives as nondimensional quantities. With this convention, \Eqref{eq:partition-function-free-rotation-4} and \Eqref{eq:crosslink-free-energy-free-rotation-0} from the free rotation limit each take the following form,
\begin{equation}
   \partitionFunctionFR\left(\F, \genRotRef\right) \approx \left(\frac{\left(2 \pi\right)^{3/2} \mLen^3}{\sqrt{\det \left(\evalAt{\frac{\mLen^2}{\kB \T}\frac{\partial^2 \Wchains}{\partial \yc \partial \yc}}{\genRot = \genRotStar, \yc = \ycStar}\right)}}\right) \exp\left(-\frac{\Wchains\left(\F, \genRotStar, \ycStar\right)}{\kB \T} - \frac{\PEQ\left(\genRotStar\right)}{\kB \T}\right),
\end{equation}
\begin{align}
    \frac{\clinkFreeRotationFreeEnergy\left(\F, \genRotRef\right)}{\kB \T} & = \frac{\Wchains\left(\F, \genRotStar, \ycStar\right)}{\kB \T} + \frac{\PEQ\left(\genRotStar\right)}{\kB \T} + \frac{1}{2}\ln\left(\det \left(\evalAt{\frac{\mLen^2}{\kB \T}\frac{\partial^2 \Wchains}{\partial \yc \partial \yc}}{\genRot = \genRotStar, \yc = \ycStar}\right)\right) \nonumber \\
    & \qquad - \frac{3}{2}\ln\left(2 \pi\right) - 3\ln(\mLen). \label{eq:nondimensional-crosslink-free-energy-free-rotation-0}
\end{align}

In the same vein, \Eqref{eq:partition-function-frame-average-4} and \Eqref{eq:crosslink-free-energy-frame-average-0} from the frame averaging limit each take the following form,
\begin{equation}
   \partitionFunctionFA\left(\F, \genRotRef\right) \approx \left(\frac{\left(2 \pi\right)^{3/2} \mLen^3}{\sqrt{\det \left(\evalAt{\frac{\mLen^2}{\kB \T}\frac{\partial^2 \Wchains}{\partial \yc \partial \yc}}{\genRot = \genRotRef, \yc = \ycStarQO}\right)}}\right) \exp\left(-\frac{\Wchains\left(\F, \genRotRef, \ycStarQO\right)}{\kB \T}\right),
\end{equation}
\begin{align}
    \frac{\clinkFrameAveragingFreeEnergy\left(\F, \genRotRef\right)}{\kB \T} & = \frac{\Wchains\left(\F, \genRotRef, \ycStarQO\right)}{\kB \T} + \frac{1}{2}\ln\left(\det \left(\evalAt{\frac{\mLen^2}{\kB \T}\frac{\partial^2 \Wchains}{\partial \yc \partial \yc}}{\genRot = \genRotRef, \yc = \ycStarQO}\right)\right) \nonumber \\
    & \qquad - \frac{3}{2}\ln\left(2 \pi\right) - 3\ln(\mLen). \label{eq:nondimensional-crosslink-free-energy-frame-average-0}
\end{align}
The last two terms in \Eqref{eq:nondimensional-crosslink-free-energy-free-rotation-0} and \Eqref{eq:nondimensional-crosslink-free-energy-frame-average-0} each have no influence on the mechanics.
Also note that in this Appendix section we have assumed that all the chains in the cross-link have the same monomer length $\mLen$ (the equations here can be reformulated for the case where $\mLen$ varies between chains).

\section{Historical development and related approaches} \label{app:past-work}
\subsection{Historical precedent}
Historically, some of the core ideas of the current approach have been developed elsewhere, albeit separately.
For instance, the influential work of Flory and Rehner~\cite{flory1943statistical}, and later Treloar~\cite{treloar1943elasticity,treloar1943Belasticity,treloar1946elasticity,treloar1975physics,treloar1954photoelastic}, developed a series of $4$-chain polymer network models that included a relaxation of the cross-link junction position. Later on, Kloczkowski, Erman, and Mark~\cite{kloczkowski2002effect} extended this fluctuating-junction $4$-chain model to account for chains with a bimodal distribution of contour lengths.
Adolf and Curro~\cite{adolf1987computer} also incorporated a relaxation of the cross-link junction position in the $3$-chain model setting.
All of these works utilized versions of Assumptions \ref{ass:chain-ends}, \ref{ass:affine}, \ref{ass:thermalize}, and, to some extent, \ref{ass:fluctations}, to establish their models.
However, these RVEs were each fixed relative to a chosen coordinate system and the resulting constitutive models did not satisfy frame indifference.
Considering this context, the work of El\'{i}as-Z\'{u}\~{n}iga~\cite{elias2006non} is noteworthy; here, a fluctuating-junction $3$-chain model is formulated (similar to Adolf and Curro~\cite{adolf1987computer}), where in the undeformed state, the chain ends are located at the vertices of an equilateral tetrahedron cell, and the junction point is located on its centroid. The tetrahedron is further assumed to be contained inside a regular cube. During deformation, the cube subsuming the $3$-chain tetrahedron cell adheres to the principal frame assumption~\cite{grasinger2023polymer} while the junction point relaxes to its equilibrium position.
In the same way, adherence to the principal frame assumption is enforced on the $4$-chain tetrahedron cell in the work of Xing~\cite{xing2025pseudo} (here, the junction position does not fluctuate).
Finally, the seminal work by Arruda and Boyce~\cite{arruda1993threee} -- the ``$8$-chain model'' -- highlighted the importance of considering ``cooperative'' behaviors in polymer networks by allowing the cross-link to rotate to optimally distribute elastic energy across its chains.
This idea was recently extended to cross-links consisting of differing numbers of chains: $3$, $4$, and $6$~\cite{grasinger2023polymer}.
However, neither Arruda and Boyce~\cite{arruda1993threee} nor Grasinger~\cite{grasinger2023polymer} considered polydispersity, or fluctuation and relaxation of the junction position.
This work incorporates each of these ideas, as well as introducing rotational (i.e., twisting) fluctuations of the cross-link.

\subsection{Relationship with prior discrete polymer network models}
Our proposed approach aims to model the mechanics of any arbitrary polydisperse cross-link structure.
Because of this, many past discrete polymer network models are captured, in a generalized sense, by our framework.
Several examples are provided as follows:
\begin{itemize}
    \item \emph{Fluctuating-junction $4$-chain model.}
    The classical $4$-chain model~\cite{flory1943statistical,treloar1943elasticity,treloar1943Belasticity,treloar1946elasticity,treloar1975physics,treloar1954photoelastic} is instantiated via our modeling framework with a tetrafunctional cross-link of uniform Gaussian or Kuhn and Gr\"{u}n polymer chains in the absence of a free rotation assumption (i.e., fluctuations due to cross-link rotations are not accounted for). By amending the aforementioned instantiation such that some chains in the cross-link are long and Gaussian while other chains are short and non-Gaussian, the bimodal $4$-chain model from Kloczkowski, Erman, and Mark~\cite{kloczkowski2002effect} is captured.
    \item \emph{Fluctuating-junction $3$-chain model.} The $3$-chain model from Adolf and Curro~\cite{adolf1987computer} is instantiated with a trifunctional cross-link of uniform Gaussian polymer chains in the absence of a free rotation assumption. The $3$-chain model from El\'{i}as-Z\'{u}\~{n}iga~\cite{elias2006non} is also able to be instantiated by our modeling framework via a trifunctional cross-link of uniform Kuhn and Gr\"{u}n polymer chains. Here, instead of an absence of a free rotation assumption, a restrictive principal frame (rotation) assumption is enforced.
    \item \emph{Arruda-Boyce 8-chain model.} As shown in \Fref{sec:FR-monodispersity}, the Arruda-Boyce 8-chain model is instantiated by our modeling framework via an octafunctional cubic cross-link (see \Eqref{eq:eight-chain-xs}) of uniform Kuhn and Gr\"{u}n polymer chains in the free rotation limit.
    \item \emph{Free rotation assumption from Grasinger~\cite{grasinger2023polymer}.} By assuming uniform polymer chains in the absence of cross-link junction fluctuations, we recover the free rotation assumption from Grasinger~\cite{grasinger2023polymer}.
\end{itemize}

\subsection{Comparing the frame averaging limit with the full network model}
The full network model consists of a continuous uniform distribution of chain end-to-end vectors on the sphere of radius $\rmsChainLength$ (i.e., $\unitSphere(\rmsChainLength) = \collection{\rrmsvec\in\Reals^3~\colon~ \left| \rrmsvec \right|=\rmsChainLength}$).
In the original formulation~\cite{wu1992improved,wu1993improved}, the chain end-to-end vectors are assumed to deform affinely under $\F$ (other deformation assumptions for the full network model have since been formulated, e.g., \cite{miehe2004micro,diani2019fully,mulderrig2021affine,araujo2024micromechanical,araujo2026role,tkachuk2012maximal,rastak2018non,tkachuk2022elastic}).
Considering this, the full network model RVE free energy is
\begin{equation} \label{eq:rve-free-energy-full-network-model}
\rveFullNetworkFreeEnergy\left(\F\right) = \frac{1}{\volUnitSphere} \intOverSphere{\chainFreeEnergy\left(\rmsChainLength\sqrt{\Rdir\cdot\cGreen\Rdir}\right)},
\end{equation}
where $\Rdir\in\unitSphere$ is the chain orientation.

When comparing \Eqref{eq:rve-free-energy-full-network-model} with \Eqref{eq:crosslink-free-energy-frame-average-integral-orientations}, an interesting similarity is observed between the full network model and the frame averaging limit: both frameworks consider a continuous (uniform) probability distribution of polymer network constituents to determine the free energy of the RVE. 
Because of this, similar quadrature techniques can be used to evaluate the integration required for each framework.\footnote{Spherical quadrature is needed to evaluate the integration over $\unitSphere$ for the full network model. For the frame averaging limit, we develop a bespoke quadrature technique for integration over $\SOThree$ (interestingly, this $\SOThree$ quadrature is built directly on top of any arbitrary spherical integration technique). We discuss this $\SOThree$ quadrature in \Fref{sec:FA-optimization} and \Fref{app:SO3-quadrature}.}

The two frameworks differ in several respects. The frame averaging limit RVE permits chains with varying monomer numbers, while the full network model RVE assumes uniform chains. The fluctuating cross-link junction allows non-affine chain deformations in the frame averaging limit, whereas the full network model assumes affine deformation.
One difference merits further discussion. The full network model RVE comprises infinitely many polymer chains emanating spherically from a common junction point (via the continuous uniform distribution of chains on $\unitSphere$)~\cite{grasinger2023polymer}. While this accounts for chains oriented in all directions, the construction is non-physical -- no cross-link connects to infinitely many chains. The frame averaging limit instead uses a continuous distribution of real cross-link structures (each with finitely many chains) over all possible orientations. This also accounts for chains in all directions, but as a consequence of averaging over orientations of physical cross-link geometries rather than by artificially imposing infinite functionality. One could argue that the frame averaging limit is a more physically realistic realization of the full network model.

\subsection{Connections to composite homogenization}
It is instructive to briefly consider the proposed modeling approach through the lens of homogenization theory.
A common starting point for polymer networks is the affine deformation assumption, where end-to-end vectors deform directly with the macroscopic deformation gradient, $\F$.
This is analogous to the Voigt (constant strain) approximation in composites, which is known to provide an upper bound on both the free energy density and the homogenized mechanical response.
While suitable for some monodisperse networks, this approximation can be poor for systems with chains of different lengths or orientations, as it enforces the development of unequal tensions, leading to unbalanced forces and inefficient distribution of elastic energy.
An alternative is provided by equal force theories~\cite{von2002mesoscale,verron2017equal,li2020variational,mulderrig2021affine}, which assume chains aligned in the same direction share the same tension, analogous to the Reuss (constant stress) approximation.
Although this can better match experimental data for polydisperse networks, the Reuss assumption can also be too restrictive.
In contrast, the proposed model -- which involves relaxing the junction position -- is analogous to more advanced nonlinear homogenization techniques where the boundary of an RVE is deformed while its interior relaxes to equilibrium~\cite{milton2022theory,caulfield2024twinning}.
The junction relaxation naturally satisfies the force balance at each cross-link and, we postulate, permits a more realistic stress distribution than either the equal stretch or equal force theories.

\section{Proof of the conservation of sum of stretches squared with respect to rotations, \Fref{prop:stretch-conservation}} \label{app:unification}
This result was first proved in~\cite{grasinger2023polymer}.
It is reproduced here for completeness.
\begin{proposition}[Conservation property] \label{prop:stretch-conservation}
	The sum of squares of the chain stretches (i.e., $\sum_{i=1}^{\numChains} \chainStretch_i^2$) is conserved relative to general rotations \begin{enumerate}[a)] \item for all of the classical $3$, $4$, $6$, and $8$-chain RVEs and \item provided the RVE has reflection symmetries about planes (passing through the origin) normal to three orthogonal directions, $\nuVec_j, \: j = 1, 2, 3$, and, $\sum_{i=1}^{\numChains} \left(\Xvec_i \cdot \nuVec_j\right)^2 = C$ for all $j$.
	\end{enumerate}
\end{proposition}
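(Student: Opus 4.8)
The plan is to reduce the entire statement to a single linear-algebra fact about the second-moment (structure) tensor of the chain-end positions,
\[
  \tens{M} \coloneqq \sum_{i=1}^{\numChains} \Xvec_i \otimes \Xvec_i .
\]
Writing $\chainStretch_i = \cLen^{-1}\lvert \F\genRot\Xvec_i\rvert$ (equivalently, focusing on the $\genRot$-dependent ``conserved'' term already isolated in the proof of \Fref{prop:unification-FR}, i.e. the optimal-junction case $\yc=\nullvec$), one has
\[
  \sum_{i=1}^{\numChains} \chainStretch_i^2 = \frac{1}{\cLen^2}\sum_{i=1}^{\numChains}\Xvec_i\cdot\left(\trans{\genRot}\,\cGreen\,\genRot\right)\Xvec_i = \frac{1}{\cLen^2}\,\trace\!\left(\cGreen\,\genRot\,\tens{M}\,\trans{\genRot}\right),
\]
using $\Xvec\cdot A\Xvec = \trace(A\,\Xvec\otimes\Xvec)$ and cyclicity of the trace. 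This expression is independent of $\genRot\in\SOThree$ for every admissible right Cauchy--Green tensor $\cGreen$ exactly when $\genRot\,\tens{M}\,\trans{\genRot}$ does not depend on $\genRot$, i.e. when $\tens{M}$ commutes with every rotation, i.e. when $\tens{M} = c\,\identity$ for some scalar $c$; in that case $\sum_i\chainStretch_i^2 = (c/\cLen^2)\,\trace\cGreen$. So both parts reduce to showing the structure tensor is isotropic.

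For part b) I would work in the orthonormal basis $\{\nuVec_1,\nuVec_2,\nuVec_3\}$, in which $M_{jk} = \sum_i(\Xvec_i\cdot\nuVec_j)(\Xvec_i\cdot\nuVec_k)$. Reflection symmetry about the plane normal to $\nuVec_j$ means the involution $\Xvec\mapsto\Xvec - 2(\Xvec\cdot\nuVec_j)\nuVec_j$ permutes the set $\{\Xvec_i\}$; since $\tens{M}$ depends only on that set, it is invariant under this reflection, which negates every off-diagonal entry $M_{jk}$ ($k\ne j$). Applying any two of the three reflections forces all off-diagonal entries to vanish, so $\tens{M}$ is diagonal in this basis, and the hypothesis $\sum_i(\Xvec_i\cdot\nuVec_j)^2 = C$ for all $j$ makes the three diagonal entries equal to $C$. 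Hence $\tens{M} = C\,\identity$ and $\sum_i\chainStretch_i^2 = (C/\cLen^2)\,\trace\cGreen$, which is the claim.

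For part a) I would note that the $6$-chain (octahedron, $\Xvec_i = \pm\Rmag\,\euclid{j}$) and $8$-chain (cube) RVEs are immediate instances of part b) with $\nuVec_j = \euclid{j}$: the coordinate reflections preserve the vertex sets and $\sum_i(\Xvec_i\cdot\euclid{j})^2 = \tfrac13\sum_i\lvert\Xvec_i\rvert^2$ is independent of $j$ by cubic symmetry. The $3$-chain ($\Xvec_i = \Rmag\,\euclid{i}$) and $4$-chain (regular tetrahedron) RVEs are \emph{not} covered by part b) as literally stated -- a regular tetrahedron possesses no three mutually orthogonal mirror planes in any orientation -- so for these two I would simply compute $\tens{M}$ directly: for the $3$-chain, $\tens{M} = \Rmag^2\sum_{i=1}^3\euclid{i}\otimes\euclid{i} = \Rmag^2\,\identity$; for the $4$-chain, summing $\Xvec_i\otimes\Xvec_i$ over the four vertices of \Eqref{eq:four-chain-xs} gives equal diagonal entries $\tfrac43\Rmag^2$ and off-diagonal entries that cancel by sign, so $\tens{M} = \tfrac43\Rmag^2\,\identity$. (A more conceptual alternative: the rotation group of each classical RVE acts irreducibly on $\Reals^3$ -- for the tetrahedron this is the $3$-dimensional irreducible representation of $A_4$ -- so any invariant symmetric form, in particular $\tens{M}$, must be a multiple of $\identity$ by Schur's lemma; the $3$-chain case still needs the extra observation that its symmetry group only permutes coordinate axes, so one checks $M_{jk} = \Rmag^2\delta_{jk}$ by hand.)

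The argument is short and I do not anticipate a genuine obstacle; the two points that need care are (i) justifying that $\genRot$-independence over the \emph{physically relevant} deformations really does force $\tens{M} = c\,\identity$ -- which holds because $\cGreen$ ranges over all symmetric positive-definite tensors -- and (ii) recognizing that part a) is not a mere corollary of part b), since the $3$- and $4$-chain geometries lack three orthogonal mirror planes and therefore must be verified on their own, either by the explicit sums above or by the irreducibility argument.
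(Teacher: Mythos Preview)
Your proposal is correct and more conceptual than the paper's argument. The paper proceeds case-by-case for part a), writing $\sum_i \chainStretch_i^2 = \cLen^{-2}\sum_i \Xvec_i \cdot \changeCoord{\cGreen}\,\Xvec_i$ with $\changeCoord{\cGreen} = \trans{\genRot}\cGreen\genRot$ and asserting (via a Mathematica verification) that each classical RVE gives a constant times $\trace\changeCoord{\cGreen}$; for part b) it expands in the $\nuVec_j$ basis and uses the reflection symmetry to kill the cross terms, which is exactly your off-diagonal argument rephrased componentwise. Your structure-tensor reformulation $\sum_i\chainStretch_i^2 = \cLen^{-2}\trace(\cGreen\,\genRot\,\tens{M}\,\trans{\genRot})$ unifies both parts under the single criterion $\tens{M} = c\,\identity$ and, usefully, shows this is both necessary and sufficient---something the paper does not state. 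You are also right that the $3$-chain and $4$-chain RVEs are not literal instances of part b) (neither possesses three mutually orthogonal mirror planes), a subtlety the paper sidesteps by treating a) and b) independently; your direct evaluation of $\tens{M}$, or the Schur-lemma alternative, is a cleaner substitute for the paper's symbolic-algebra check.
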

While there is some overlap between a) and b), they are considered separately because a) establishes the conservation property for well known discrete polymer networks while b) is a sufficient condition for constructing new polymer network models with the same conservation property.
\begin{proof}
	\emph{a)} Let the coordinate systems for the RVEs be chosen such that
	\begin{subequations} \label{eq:Rvecs}
	\begin{equation} \label{eq:three-chain-Rvecs}
		\Xvec_i = \Rmag \euclid{i}, \quad i = 1, 2, 3,
	\end{equation}
	for the $3$-chain RVE;
	\begin{equation} \label{eq:four-chain-Rvecs}
		\begin{split}
		\Xvec_1 &= \Rmag \left(0, 0, 1\right), \quad \quad
		\Xvec_2 = \Rmag \left(0, \frac{2\sqrt{2}}{3}, -\frac{1}{3}\right) \\
		\Xvec_3 &= \Rmag \left(\sqrt{\frac{2}{3}}, -\frac{\sqrt{2}}{3}, -\frac{1}{3}\right), \quad \quad
		\Xvec_4 = \Rmag \left(-\sqrt{\frac{2}{3}}, -\frac{\sqrt{2}}{3}, -\frac{1}{3}\right)
		\end{split}
	\end{equation}
	for the $4$-chain RVE;
	\begin{equation} \label{eq:six-chain-Rvecs}
		\Xvec_i = \begin{cases}
	   \; \; \, \Rmag \euclid{i}, &\quad i = 1, 2, 3 \\
		       -\Rmag \euclid{i-3}, &\quad i = 4, 5, 6
		\end{cases}
	\end{equation}
	for the $6$-chain RVE;
	and
	\begin{equation} \label{eq:eight-chain-Rvecs}
		\Xvec_i = \frac{\Rmag}{\sqrt{3}} \left(\pm 1, \pm 1, \pm 1 \right), \quad i = 1, 2, \dots, 8
	\end{equation}
	\end{subequations}
	for the $8$-chain RVE.
	In each case,
	\begin{equation} \label{eq:sum-of-sq-strs}
		\sum_{i=1}^{\numChains} \chainStretch_i^2 = \frac{1}{\cLen^2} \sum_{i=1}^{\numChains} \left(\F \genRot \Xvec_i\right) \cdot \left(\F \genRot \Xvec_i\right) = \frac{1}{\cLen^2} \sum_{i=1}^{\numChains} \Xvec_i \cdot \changeCoord{\cGreen} \Xvec_i,
	\end{equation}
	where again by $\changeCoord{\cGreen}$ we mean a similarity transformation of $\cGreen$.
	Using \Eqref{eq:Rvecs} and \Eqref{eq:sum-of-sq-strs}, it can be shown directly that
	\begin{equation}
		\sum_{i=1}^{\numChains} \chainStretch_i^2 = \text{const} \times \trace \changeCoord{\cGreen}
	\end{equation}
	for each case, which is invariant with respect to $\genRot$, as desired.\footnote{The algebra was verified in the Mathematica notebook, \texttt{MPS-D-22-01011.nb}, that can be found at \url{https://github.com/grasingerm/MPS-D-22-01011/}}
	
	\emph{b)}
	Expand each $\Xvec$ in the orthonormal basis, $\nuVec_j, \: j = 1, 2, 3$.
	Let $\Xvec_m = \left(\Rmag\right)_{m1} \nuVec_1 + \left(\Rmag\right)_{m2} \nuVec_2 + \left(\Rmag\right)_{m3} \nuVec_3$ where $\left(\Rmag\right)_{mj} = \left(\Xvec_m \cdot \nuVec_j\right)$. Then
	\begin{equation}
		\sum_{m=1}^{\numChains} \chainStretch_m^2 = \frac{1}{\cLen^2} \sum_{m=1}^{\numChains} \Xvec_m \cdot \changeCoord{\cGreen} \Xvec_m = \frac{1}{\cLen^2} \sum_{i=1,j=1}^{3,3} \sum_{m=1}^{\numChains} \left(\Rmag\right)_{mi} \changeCoord{\changeCoord{\cGreenSym}}_{ij} \left(\Rmag\right)_{mj} = \frac{1}{\cLen^2} \sum_{i=1}^{3} \changeCoord{\changeCoord{\cGreenSym}}_{ii} \sum_m^{\numChains} \left(\Rmag\right)_{mi}^2,
	\end{equation}
	where $\changeCoord{\changeCoord{\cGreen}}$ is a similarity transformation of $\changeCoord{\cGreen}$ (and, consequently, is a similarity transformation of $\cGreen$) and the last step is due to the reflection symmetries.
	Clearly $\sum_{i=1}^{\numChains} \chainStretch_i^2 = \left(C / \cLen^2\right) \Tr \changeCoord{\changeCoord{\cGreen}} = \left(C / \cLen^2\right) \Tr \cGreen$, as desired.
\end{proof}

\section{$\SOThree$ quadrature formulation and instantiation}
\label{app:SO3-quadrature}

In this work, integration over $\SOThree$ (the proper orthogonal group of rotations in $3$ dimensions) is evaluated via a numerical quadrature technique we call $\SOThree$ quadrature
\begin{equation}
\frac{1}{\volSOThree} \int_{\genRotRef \in \SOThree} \df{\genRotRef} \genFunc\left(\genRotRef\right) \approxeq \sum_{i=1}^{\numSOThreeQuad} \weightFactorSOThreeQuad_i \genFunc\left(\left(\quadPointSOThreeQuad\right)_i\right),
\end{equation}
where $\genFunc$ is some general (scalar, vectorial, or tensorial) function of orientation $\genRotRef$, $\collection{\left(\quadPointSOThreeQuad\right)_i}_{i=1}^{\numSOThreeQuad}$ is the set of $\SOThree$ quadrature points, $\collection{\weightFactorSOThreeQuad_i}_{i=1}^{\numSOThreeQuad}$ is the set of corresponding weight factors, and $\numSOThreeQuad$ is the number of quadrature points. To instantiate the $\SOThree$ quadrature, we follow the sequential procedure provided below:
\begin{enumerate}[(1)]
    \item Select your spherical quadrature scheme of choice, with $\numSphQuad$ spherical quadrature points and $\collection{\weightFactorSphQuad_j}_{j=1}^{\numSphQuad}$ corresponding weight factors. Common spherical quadrature schemes used in mechanics modeling have included those by Ba\v{z}ant and Oh~\cite{bavzant1986efficient}, Sloan and Womersley~\cite{sloan2004extremal}, and Lebedev~\cite{lebedev1975values,lebedev1976quadratures,lebedev1977spherical,lebedev1992quadrature,lebedev1994quadrature,lebedev1999quadrature} (although many others can be used~\cite{heo2001constructing,fliege1996two,fliege1999distribution,stroud1971approximate,mclaren1963optimal,albrecht1958numerischen,beentjes2015quadrature,badel2004note,neutsch1983optimal,hannay2004fibonacci,sobolev1962number}).
    \item Extract the set of spherical coordinates $\collection{\left(1, \polarAngle_j, \azimuthalAngle_j\right)}_{j=1}^{\numSphQuad}$ from the spherical quadrature scheme. Here, we use the ISO 80000-2:2019~\cite{isoiso} convention (also called the physics convention) for spherical coordinates, where $\polarAngle$ is the polar angle (with respect to the positive polar axis) and $\azimuthalAngle$ is the azimuthal angle (the angle of rotation of the radial line about the polar axis). Depending on the coordinate system used to provide the spherical quadrature points, a coordinate system conversion may need to be performed at this step (e.g., a Cartesian-to-spherical coordinate conversion).
    \item Spherical quadrature is able to be extended to $\SOThree$ quadrature via the following simple step: at each $j$th orientation dictated by the $j$th spherical quadrature point, $\left(1, \polarAngle_j, \azimuthalAngle_j\right)$, we must further integrate over all ``spin orientations''. We capture this concept of ``spin orientation'' via the spin angle $\spinAngle$, where $\spinAngle\in[0, 2\pi)$. In $\SOThree$ quadrature, we must discretize the spin domain of $\spinAngle$. In this work, we choose a simple method where we discretize the spin domain into $\numSpinQuad$ evenly-spaced points, starting at 0 radians. Thus, associated with each $j$th spherical quadrature point are $\numSpinQuad$ discretized spins. The weight factor that corresponds with each of these discretized spin points is equal to $\weightFactorSphQuad_j / \numSpinQuad$.
    \item In light of the details provided in step (3), we begin instantiating $\SOThree$ quadrature by building the following two sets: $\collection{(\spinAngle_i, \polarAngle_i, \azimuthalAngle_i)}_{i=1}^{\numSOThreeQuad}$ and $\collection{\weightFactorSOThreeQuad_i}_{i=1}^{\numSOThreeQuad}$, where $\numSOThreeQuad = \numSphQuad * \numSpinQuad$, and each $\weightFactorSOThreeQuad$ is appropriately calculated with respect to $\weightFactorSphQuad / \numSpinQuad$.
    \item At this point, we reveal that the set $\collection{(\spinAngle_i, \polarAngle_i, \azimuthalAngle_i)}_{i=1}^{\numSOThreeQuad}$ is actually a collection of Euler angles that discretizes rotations in $\SOThree$. We thus convert $\collection{(\spinAngle_i, \polarAngle_i, \azimuthalAngle_i)}_{i=1}^{\numSOThreeQuad}$ to $\collection{\left(\quadPointSOThreeQuad\right)_i}_{i=1}^{\numSOThreeQuad}$ via the ZYZ 3D rotation convention, as follows:
    \begin{equation}
        \left(\quadPointSOThreeQuad\right)_i = \polRot_z(\spinAngle_i)\polRot_y(\polarAngle_i)\polRot_z(\azimuthalAngle_i),
    \end{equation}
    with corresponding rotation matrices
    \begin{equation}
        \polRot_z(\alpha) = \begin{pmatrix}
			\cos{\alpha} & -\sin{\alpha} & 0 \\
			\sin{\alpha} & \cos{\alpha} & 0 \\
			0 & 0 & 1
		\end{pmatrix},~\polRot_y(\beta) = \begin{pmatrix}
			\cos{\beta} & 0 & \sin{\beta} \\
			0 & 1 & 0 \\
			-\sin{\beta} & 0 & \cos{\beta}
		\end{pmatrix}.
    \end{equation}
    \item When appropriate, exploit symmetry in the $\SOThree$ quadrature (stemming from symmetry in the foundational spherical quadrature scheme) for enhanced computational efficiency.
\end{enumerate}

\section{Closed-form approximation for the cross-link position in the frame averaging limit for polydisperse Kuhn and Gr\"{u}n chains} \label{app:KG-approx-yc}

Consider a polydisperse RVE consisting of chains with Kuhn and Gr\"{u}n free energy in the frame averaging limit. We here seek an analytical form for $\delYcQO$ as per \Eqref{eq:FA-analytical-model},
\begin{equation}
    \label{eq:FA-analytical-model-duplicate}
    \delYcQO = -\left(\frac{\partial^2 \clinkInnerFrameAveragingFreeEnergy}{\partial \delYcQO \: \partial \delYcQO}\Bigg|_{\nullvec}\right)^{-1}\partialx{\clinkInnerFrameAveragingFreeEnergy}{\left(\delYcQO\right)}\Bigg|_{\nullvec}.
\end{equation}
Also recall the form for the chain end-to-end vector,
\begin{equation}
    \rvec_i = \F \genRotRef \Xvec_i - \left(\ycMono + \delYcQO\right),
\end{equation}
where $\ycMono = \nullvec$. To denote the fact that $\rvec_i$ is evaluated at $\delYcQO = \nullvec$, as per \Eqref{eq:FA-analytical-model-duplicate}, we define the following shorthand notation:
\begin{equation}
    \rvec_i\big|_{\nullvec} = \F \genRotRef \Xvec_i,\qquad \rmag_i\big|_{\nullvec} = \left|\rvec_i\big|_{\nullvec}\right| = \left|\F \genRotRef \Xvec_i\right|.
\end{equation}
Given all of this, the analytical form for $\delYcQO$ is found to be
\begin{equation} \label{eq:KG-chain-FA-analytical-model}
    \delYcQO = \left(\sum_{i=1}^\numChains\frac{1}{\n_i}\left(\invLangPrime\left(\frac{\rmag_i\big|_{\nullvec}}{\cLen_i}\right)\left(\frac{\rvec_i\big|_{\nullvec} \otimes \rvec_i\big|_{\nullvec}}{\left(\rmag_i\big|_{\nullvec}\right)^2}\right)+\left(\frac{\invLang\left(\frac{\rmag_i\big|_{\nullvec}}{\cLen_i}\right)}{\frac{\rmag_i\big|_{\nullvec}}{\cLen_i}}\right)\left(\identity - \left(\frac{\rvec_i\big|_{\nullvec} \otimes \rvec_i\big|_{\nullvec}}{\left(\rmag_i\big|_{\nullvec}\right)^2}\right)\right)\right)\right)^{-1}\left(\sum_{i=1}^\numChains \invLang\left(\frac{\rmag_i\big|_{\nullvec}}{\cLen_i}\right)\frac{\rvec_i\big|_{\nullvec}}{\rmag_i\big|_{\nullvec}}\right),
\end{equation}
where $\rmag_i\big|_{\nullvec} > 0$ must hold.

\begin{hlbreakable}
\section{Model implementation} \label{app:model-implementation}

An overview of the computational implementation of the model framework is provided as follows:\begin{enumerate}[(I)]
    \item Specify the deformation protocol, i.e., the evolution of $\F$.
    \item Specify the following network-wide material parameters and probability densities: $\T$, $\crossDensity$, $\probDensF{\genRotRef}$, $\probDensChain$, $\probDensClinker$.
    Note that we are interested in isotropic materials in this work, and thus, $\probDensF{\genRotRef} = 1/(\volSOThree)$.
    \item As per $\probDensChain$ and $\probDensClinker$, specify $\nSet$ and $\numChainsSet$, the (discrete) domains of $\n$ and $\numChains$, respectively. 
    \item If employing the frame averaging limit: Instantiate the $\SOThree$ quadrature technique by gathering $\collection{\left(\quadPointSOThreeQuad\right)_i}_{i=1}^{\numSOThreeQuad}$ and $\collection{\weightFactorSOThreeQuad_i}_{i=1}^{\numSOThreeQuad}$ via the procedure in \Fref{app:SO3-quadrature}.
    \item Instantiate $\chainSpace$, the space of polydisperse polymer network cross-links, through the following step-by-step process: \begin{enumerate}[(a)]
        \item To generate every possible $\inst$ with degree $\numChains$, simply sample $\n$ for each of the $\numChains$ chains from $\nSet$ with replacement (this is assuming that the number of monomers in each chain are independent and identically distributed).
        The result of this sequential sampling process for some $\inst$ is captured by $\inst = \ncollection{\nPolydisperseSet}$.
        Repeat for each $\numChains$ in $\numChainsSet$. Gather the resulting cross-links together in a set $\collection{\inst_i}_{i=1}^{\left|\chainSpace\right|}$.
        \item For each $\inst$, calculate its corresponding probability $\probDens(\inst)$ via \Eqref{eq:frame-invariant-chainspace-integration}.
        Gather the resulting probabilities together in a set $\collection{\probDens(\inst_i)}_{i=1}^{\left|\chainSpace\right|}$.
        \item \label{item:spec-polydisperse-cross-link-structures} For each $\inst$, specify the convention for how the initial chain ends $\collection{\Xvec_i}_{i=1}^{\numChains}$ are placed and where the junction point $\yc$ is initially located.
        We provide initial polydisperse cross-link RVE topologies for $\numChains \in [3, 8]$ in \Fref{app:polydisperse-cross-link-structures}. 
        \item For each $\inst$, convert its representation from $\inst = \ncollection{\nPolydisperseSet}$ to $\inst = \collection{\left(\n_i, \cLen_i, \Xvec_i, \chainFreeEnergy_i\right)}_{i=1}^{\numChains}$ by probing each chain in the cross-link.
        For the $i$th chain in $\inst$:\begin{enumerate}[(i)]
            \item Specify $\mLen_i$ and calculate $\cLen_i = \n_i b_i$.
            \item Specify the chain free energy function description $\chainFreeEnergy_i$, e.g., $\GaussFreeEnergy$ for Gaussian FJCs, $\KGFreeEnergy$ for Kuhn and Gr\"{u}n FJCs, etc.
            \item Depending on $\chainFreeEnergy_i$, correspondingly set $\chainConformationProbDens_i\left(\rmag\right)$ and $\left(\critChainLength\right)_i$.
            \item Calculate $\left(\rmsChainLength\right)_i$ via \Eqref{eq:chain-dist-prob-dens} and \Eqref{eq:rms-chain-length}.
            \item Specify $\Xvec_i$ using $\left(\rmsChainLength\right)_i$, as per the convention from step (\ref{item:spec-polydisperse-cross-link-structures}).
        \end{enumerate}
    \item If employing the free rotation limit: the instantiation of $\chainSpace$ is here complete; $\chainSpace = \collection{\inst_i}_{i=1}^{\left|\chainSpace\right|}$ with associated probabilities $\collection{\probDens(\inst_i)}_{i=1}^{\left|\chainSpace\right|}$.
    \item \textit{A remark if employing the frame averaging limit.}
    Technically speaking, $\chainSpace$ corresponds to the set of all frame-invariant cross-link structures (that is, the set of cross-links constructed thus far) oriented at each $\left(\quadPointSOThreeQuad\right)_i$.
    Even though this is the case, we need not explicitly carry out that (expensive) evaluation.
    For the sake of computational implementation, we can sufficiently capture $\chainSpace$ by the sets $\collection{\inst_i}_{i=1}^{\left|\chainSpace\right|}$ and $\collection{\left(\quadPointSOThreeQuad\right)_i}_{i=1}^{\numSOThreeQuad}$ with associated probabilities $\collection{\probDens(\inst_i)}_{i=1}^{\left|\chainSpace\right|}$ and $\collection{\weightFactorSOThreeQuad_i}_{i=1}^{\numSOThreeQuad}$ (respectively).\footnote{\begin{hlbreakable}In the context of the frame averaging limit implementation here, $\left|\chainSpace\right|$ corresponds to the number of frame-invariant cross-link structures in the polymer network.\end{hlbreakable}}
    \end{enumerate}
    Note that the cross-link instantiation procedure presented here could be accomplished through other ways and means, as per the discussion at the end of \Fref{sec:cross-link-statistics}.
    \item Evaluate the polydisperse polymer network mechanical response by calculating $\allClinksFreeEnergyDensity$ through the deformation protocol. At each deformation state $\F$:\begin{enumerate}[(a)]
        \item If employing the free rotation limit: the free energy for a given cross-link $\inst$ is given by \Eqref{eq:crosslink-free-energy-free-rotation},
        \begin{equation*}
            \clinkFreeRotationFreeEnergy\left(\F\right) = \Wchains\left(\F, \genRotStar, \ycStar\right) = \inf_{\genRot \in \SOThree, \yc \in \rvedomain} \sum_{i=1}^{\numChains} \chainFreeEnergy_i\left(\left|\F \genRot \Xvec_i - \yc\right|\right),
        \end{equation*}
        where $\rvedomain = \Conv\left(\collection{\F \Xvec_i}_{i=1}^{\numChains}\right)$ represents the deformed cross-link RVE domain. 
        In certain circumstances, evaluating $\clinkFreeRotationFreeEnergy\left(\F\right)$ can be simplified, as follows: \begin{enumerate}[(i)]
            \item If $\inst$ is monodisperse, $\numChains=4$, and $\collection{\Xvec_i}_{i=1}^{\numChains}$ is given by \Eqref{eq:four-chain-xs}, then $\ycStar=\nullvec$ and $\genRotStar$ is given by \Eqref{eq:$4$-chain-Q}.
            \item If $\inst$ is monodisperse, $\numChains=8$, and $\collection{\Xvec_i}_{i=1}^{\numChains}$ is given by \Eqref{eq:eight-chain-xs}, then $\ycStar=\nullvec$ and $\genRotStar$ is given by \Eqref{eq:8-chain-Q}.
            \item If $\F = \diag\left(\pStretch{1}, \pStretch{2}, 1 / \pStretch{1} \pStretch{2} \right)$ (representing incompressible deformation), $\inst$ is polydisperse, all the chains have the same monomer length $\mLen$ and free energy function description, $\numChains=4$ or $\numChains=8$, and $\collection{\Xvec_i}_{i=1}^{\numChains}$,  is given by \Eqref{eq:four-chain-xs} or \Eqref{eq:eight-chain-xs}, then $\clinkFreeRotationFreeEnergy$ can be analytically approximated by the procedure in \Fref{app:FR-poly-approx-implementation}.
        \end{enumerate}
        When considering all cross-links in the network, $\allClinksFreeEnergyDensity$ is calculated via \Eqref{eq:network-free-energy-density} and \Eqref{eq:chainspace-integration},
        \begin{equation*}
            \allClinksFreeEnergyDensity\left(\F\right) = \frac{\crossDensity}{2}\sum_{i=1}^{\left|\chainSpace\right|}\probDens(\inst_i)\left(\inf_{\genRot \in \SOThree, \yc \in \rvedomain} \sum_{j=1}^{\numChains} \chainFreeEnergy_j\left(\left|\F \genRot \Xvec_j - \yc\right|\right)\right).
        \end{equation*}
        \item If employing the frame averaging limit: the free energy for a given frame-invariant cross-link $\inst$ at orientation $\genRotRef$ is given by \Eqref{eq:crosslink-free-energy-frame-average},
        \begin{equation*}
            \clinkFrameAveragingFreeEnergy\left(\F, \genRotRef\right) = \Wchains\left(\F, \genRotRef, \ycStarQO\right) = \inf_{\yc \in \rvedomain} \sum_{i=1}^{\numChains} \chainFreeEnergy_i\left(\left|\F \genRotRef \Xvec_i - \yc\right|\right),
        \end{equation*}
        where $\rvedomain = \Conv\left(\collection{\F \genRotRef\Xvec_i}_{i=1}^{\numChains}\right)$ represents the deformed and rotated cross-link RVE domain. 
        In certain circumstances, evaluating $\clinkFrameAveragingFreeEnergy\left(\F, \genRotRef\right)$ can be simplified, as follows:
        \begin{enumerate}[(i)]
            \item If $\inst$ is monodisperse with Gaussian chains and $\sum_{i=1}^{\numChains} \Xvec_i = \nullvec$, then $\ycStarQO=\nullvec$.
            This solution can be considered as a leading order approximation for analogous cross-links with Kuhn and Gr\"un chains.
            \item If $\inst$ is a monodisperse $6$-chain and $8$-chain RVE with either Kuhn and Gr\"un and wormlike chains where $\sum_{i=1}^{\numChains} \Xvec_i = \nullvec$, then $\ycStarQO=\nullvec$ exactly.
            \item If $\F = \diag\left(\pStretch{1}, \pStretch{2}, 1 / \pStretch{1} \pStretch{2} \right)$ (representing incompressible deformation), $\inst$ is polydisperse, all the chains have the same monomer length $\mLen$ and free energy function description, and $\sum_{i=1}^{\numChains} \Xvec_i = \nullvec$ in the analogous monodisperse cross-link RVE, then $\clinkFrameAveragingFreeEnergy$ can be analytically approximated by the procedure in \Fref{app:FA-poly-approx-implementation}.
        \end{enumerate}
        When considering all frame-invariant cross-links in all orientations in the network, $\allClinksFreeEnergyDensity$ is calculated via \Eqref{eq:network-free-energy-density} and \Eqref{eq:chainspace-integration},
        \begin{equation*}
            \allClinksFreeEnergyDensity\left(\F\right) = \frac{\crossDensity}{2}\sum_{h=1}^{\left|\chainSpace\right|} \probDens(\inst_h) \left(\sum_{i=1}^{\numSOThreeQuad}  \weightFactorSOThreeQuad_i \left(\inf_{\yc \in \rvedomain} \sum_{j=1}^{\numChains} \chainFreeEnergy_j\left(\left|\F \left(\quadPointSOThreeQuad\right)_i \Xvec_j - \yc\right|\right)\right)\right).
        \end{equation*}
    \end{enumerate}
    \item Calculate $\PK = \partial \allClinksFreeEnergyDensity / \partial \F$ through the deformation protocol via numerical differentiation.
\end{enumerate}

\subsection{Implementation of the free rotation limit closed-form approximation} \label{app:FR-poly-approx-implementation}
If $\F = \diag\left(\pStretch{1}, \pStretch{2}, 1 / \pStretch{1} \pStretch{2} \right)$ (representing incompressible deformation), $\inst$ is polydisperse, all the chains have the same monomer length $\mLen$ and free energy function description, $\numChains=4$ or $\numChains=8$, and $\collection{\Xvec_i}_{i=1}^{\numChains}$, is given by \Eqref{eq:four-chain-xs} or \Eqref{eq:eight-chain-xs}, then $\clinkFreeRotationFreeEnergy$ can be reformulated by considering perturbations about the known solution for the analogous monodisperse $\inst$,
as per \Eqref{eq:FR-FED-perturb-approx}.
We solve for $\clinkFreeRotationFreeEnergy$ by the following procedure: \begin{enumerate}[(I)]
    \item Deduce all of the permutations of the $\ncollection{\nPolydisperseSet}$ chain monomer numbers in $\inst$.
    \item For each cross-link monomer number permutation $\ncollection{\nPolydisperseSet}$: \begin{enumerate}[(a)]
        \item Approximate the derivatives of $\clinkInnerFreeRotationFreeEnergy\left(\delRot, \delYc\right)$ with respect to $\delRot$ and $\delYc$ up to linear order in each, set each of these derivatives equal to zero, and solve for $\delRot$ and $\delYc$,
        as per \Eqref{eq:perturb-deriv-1} and \Eqref{eq:perturb-deriv-2}.
        We denote the emergent solution for $\delRot$ and $\delYc$ as $\widetilde{\delRot}$ and $\widetilde{\delYc}$, respectively.
        \item Calculate the candidate $\clinkFreeRotationFreeEnergy$ value associated with this monomer number permutation as $\clinkInnerFreeRotationFreeEnergy\left(\widetilde{\delRot}, \widetilde{\delYc}\right)$.
    \end{enumerate}
    \item The minimum candidate $\clinkFreeRotationFreeEnergy$ value is the optimal solution for $\clinkFreeRotationFreeEnergy$.
\end{enumerate}
In the case where $\numChains=4$ and the chains are described by the Gaussian chain free energy function, $\widetilde{\delRot}$ and $\widetilde{\delYc}$ are known analytically and are given by \Eqref{eq:$4$-chain-analytical-model}.
This could be considered as a leading order approximation for analogous cross-links with Kuhn and Gr\"{u}n chains.

\subsection{Implementation of the frame averaging limit closed-form approximation} \label{app:FA-poly-approx-implementation}

If $\F = \diag\left(\pStretch{1}, \pStretch{2}, 1 / \pStretch{1} \pStretch{2} \right)$ (representing incompressible deformation), $\inst$ is polydisperse, all the chains have the same monomer length $\mLen$ and free energy function description, and $\sum_{i=1}^{\numChains} \Xvec_i = \nullvec$ in the analogous monodisperse cross-link RVE, then $\clinkFrameAveragingFreeEnergy$ can be reformulated by considering perturbations about the known solution for the analogous monodisperse $\inst$,
as per \Eqref{eq:FA-FED-perturb-approx}.
We solve for $\clinkFrameAveragingFreeEnergy$ by approximating the derivative of $\clinkInnerFrameAveragingFreeEnergy\left(\delYcQO, \genRotRef\right)$ with respect to $\delYcQO$ up to linear order, set the derivative equal to zero, and solve for $\delYcQO$,
as per \Eqref{eq:FA-analytical-model}.
The exact solution for $\delYcQO$ for polydisperse RVEs consisting of Gaussain chains is given by \Eqref{eq:gaussian-chain-unique-junction-position-relaxation} (where $\genRot=\genRotRef$).
This could be considered as a leading order approximation for analogous cross-links with Kuhn and Gr\"{u}n chains.
However, the exact solution for such cross-links is given by \Eqref{eq:KG-chain-FA-analytical-model}.
\end{hlbreakable}

\bibliographystyle{unsrtnat}
\bibliography{master}

\end{document}